\documentclass[11pt]{article}
\usepackage{authblk}
\usepackage{amsfonts}
\usepackage{fancyhdr}
\usepackage[a4paper, top=2.5cm, bottom=2.5cm, left=2.2cm, right=2.2cm]{geometry}
\usepackage{cite}
\usepackage[numbers,sort]{natbib}
\usepackage{changepage}
\usepackage[shortlabels]{enumitem}
\usepackage{soul}
\usepackage{asymptote}
\usepackage{amssymb}
\usepackage{bbm}
\usepackage{graphicx}
\usepackage{hyperref}
\usepackage{float}
\hypersetup{colorlinks=true, citecolor=blue, linkcolor=black}
\usepackage{amsfonts}
\usepackage{amsmath}
\usepackage{amssymb}
\usepackage{amsthm}
\usepackage{thm-restate}
\usepackage{enumitem} 
\usepackage{graphicx} 
\usepackage{algorithm,algorithmicx}
\usepackage[noend]{algpseudocode}
\usepackage[skins]{tcolorbox}
\usepackage{nicefrac}
\usepackage{subcaption}
\usepackage{cleveref}
\usepackage{xspace}
\usepackage[T1]{fontenc}
\usepackage[framemethod=tikz]{mdframed}

\newtheorem{theorem}{Theorem}[section]
\newtheorem{corollary}[theorem]{Corollary}
\newtheorem{lemma}[theorem]{Lemma}

\newtheorem{proposition}[theorem]{Proposition}
\newtheorem{claim}[theorem]{Claim}
\newtheorem{observation}[theorem]{Observation}

\newcommand{\calI}{\mathcal{I}}
\newcommand{\calM}{\mathcal{M}}
\newcommand{\calA}{\mathcal{A}}
\newcommand{\calF}{\mathcal{F}}
\newcommand{\calD}{\mathcal{D}}

\newcommand{\Cov}{\mathrm{Cov}}

\newcommand{\EE}{\mathbb E}
\newcommand{\NN}{\mathbb N}
\newcommand{\PP}{\Pr}

\newcommand{\ctwo}{\frac{\nicefrac{1}{2} - c}{\nicefrac{1}{2} + c}}
\newcommand{\finalCconstant}{0.01}

\theoremstyle{definition} \newtheorem{definition}[theorem]{Definition}
\newcommand{\tristannote}[1]{\textcolor{blue}{Tristan: #1}}
\newcommand{\davidnote}[1]{\textcolor{purple}{David: #1}}

\newcommand{\opton}{OPT_{on}}

\usepackage{todonotes}
\newcounter{todocounter}
\newcommand{\todonum}[2][]{\stepcounter{todocounter}\todo[#1]{\thetodocounter: #2}}

\newcommand{\notshow}[1]{}
\newcommand{\ride}{\ensuremath{\textsc{RideHail}}\xspace}

\newenvironment{wrapper}[1]
{
	\smallskip
	\begin{center}
		\begin{minipage}{\linewidth}
			\begin{mdframed}[hidealllines=true, backgroundcolor=gray!20, leftmargin=0cm,innerleftmargin=0.4cm,innerrightmargin=0.4cm,innertopmargin=0.4cm,innerbottommargin=0.4cm,roundcorner=10pt]
				#1}
			{\end{mdframed}
		\end{minipage}
	\end{center}
	\smallskip
}

\usepackage{etoolbox}
\usepackage{tikz}
\usetikzlibrary{tikzmark}
\usetikzlibrary{calc}

\errorcontextlines\maxdimen

\newcommand{\ALGtikzmarkcolor}{black}
\newcommand{\ALGtikzmarkextraindent}{4pt}
\newcommand{\ALGtikzmarkverticaloffsetstart}{-.5ex}
\newcommand{\ALGtikzmarkverticaloffsetend}{-.5ex}
\makeatletter
\newcounter{ALG@tikzmark@tempcnta}
\newcommand\ALG@tikzmark@start{%
	\global\let\ALG@tikzmark@last\ALG@tikzmark@starttext%
	\expandafter\edef\csname ALG@tikzmark@\theALG@nested\endcsname{\theALG@tikzmark@tempcnta}%
	\tikzmark{ALG@tikzmark@start@\csname ALG@tikzmark@\theALG@nested\endcsname}%
	\addtocounter{ALG@tikzmark@tempcnta}{1}%
}

\def\ALG@tikzmark@starttext{start}
\newcommand\ALG@tikzmark@end{%
	\ifx\ALG@tikzmark@last\ALG@tikzmark@starttext
	\else
	\tikzmark{ALG@tikzmark@end@\csname ALG@tikzmark@\theALG@nested\endcsname}%
	\tikz[overlay,remember picture] \draw[\ALGtikzmarkcolor] let \p{S}=($(pic cs:ALG@tikzmark@start@\csname ALG@tikzmark@\theALG@nested\endcsname)+(\ALGtikzmarkextraindent,\ALGtikzmarkverticaloffsetstart)$), \p{E}=($(pic cs:ALG@tikzmark@end@\csname ALG@tikzmark@\theALG@nested\endcsname)+(\ALGtikzmarkextraindent,\ALGtikzmarkverticaloffsetend)$) in (\x{S},\y{S})--(\x{S},\y{E});%
	\fi
	\gdef\ALG@tikzmark@last{end}%
}

\apptocmd{\ALG@beginblock}{\ALG@tikzmark@start}{}{\errmessage{failed to patch}}
\pretocmd{\ALG@endblock}{\ALG@tikzmark@end}{}{\errmessage{failed to patch}}
\makeatother

\algblock[with]{With}{EndWith}
\algblockdefx[With]{With}{EndWith}%
[1]{\textbf{with} #1 \textbf{do}}%
{}
\makeatletter
\ifthenelse{\equal{\ALG@noend}{t}}%
{\algtext*{EndWith}}
{}%
\makeatother

\usepackage{asyfig} 

\begin{document}
\title{Online Stochastic Max-Weight Bipartite Matching: \\
Beyond Prophet Inequalities\footnote{Research supported in part by NSF Awards CCF1763970, CCF191070, CCF1812919, ONR award N000141912550, and a gift from Cisco Research.}}

\author[1]{Christos Papadimitriou}
\author[2]{Tristan Pollner}
\author[2]{Amin Saberi}
\author[2]{David Wajc}

\affil[1]{Columbia University}
\affil[2]{Stanford University}
\date{\vspace{-1.5cm}}

\pagenumbering{gobble}

\maketitle
\begin{abstract} 
The rich literature on online Bayesian selection problems has long focused on so-called prophet inequalities, which compare the gain of an online algorithm to that of a ``prophet'' who knows the future. An equally-natural, though significantly less well-studied benchmark is the optimum \emph{online} algorithm, which may be omnipotent (i.e., computationally-unbounded), but not omniscient. What is the computational complexity of the optimum online? How well can a polynomial-time algorithm approximate it?

\smallskip 

We study the above questions for the online stochastic maximum-weight matching problem under vertex arrivals. For this problem, a number of  $\nicefrac{1}{2}$-competitive algorithms are known against optimum offline. This is the best possible ratio for this problem, as it generalizes the original single-item prophet inequality problem. 

\smallskip 

We present a polynomial-time algorithm which approximates the optimal \emph{online} algorithm within a factor of $0.51$---beating the best-possible prophet inequality. 
In contrast, we show that it is \textsc{PSPACE}-hard to approximate this problem within some constant $\alpha < 1$. 
 \end{abstract}

\newpage
\pagenumbering{arabic}

\section{Introduction}

Decision-making in an uncertain, dynamic environment influenced by one's decisions has arguably
always been the essence of life, and yet it appears to have been first confronted mathematically 
by Herbert Robbins and Richard Bellman, from different perspectives, in the late 1940s and early 1950s.  
Decision theory initially focused on instantaneous decisions, but later gave us stopping rules
and the gem of prophet inequalities \cite{krengel1978semiamarts}.  Later, the Internet age brought us new business models 
relying exclusively on stochastic decision making --- online advertising, ride hailing,
kidney exchanges --- in which the changing environment affected by 
the agents' decisions can often be abstracted as an evolving weighted bipartite graph.  

Here we study one such problem,
the online Bayesian bipartite matching, or $\ride$, problem. 
The input to this problem is a random bipartite graph, revealed over time.
Initially, the $m$ nodes on one side of the graph, termed taxis or bins, are present.
The $n$ nodes on the other side, termed passengers or balls, are revealed over time, in a fixed order known to us.
Initially, we know for each ball $t$ the probability $p_t$ of it actually arriving, as well
as the non-negative weight $w_{i,t}$ of the edge connecting it to any bin $i$ --- if it arrives.
If ball $t$ does not arrive, we do nothing at time $t$; if it does arrive, we can choose to match it, irrevocably, to some unmatched neighbor $i$ before time $t+1$,  yielding a profit $w_{i,t}$.
Our goal is to maximize the overall expected profit.

$\ride$ generalizes the classic single-item online Bayesian stopping rule problem --- the so-called \emph{prophet inequality} problem.
In particular, our problem with a single offline node already captures the worst-case instances of the prophet inequality, for which no online algorithm is better than $\nicefrac{1}{2}$-competitive against the optimal offline algorithm. 
On the other hand, $\ride$ is a special case of the unit-demand combinatorial auctions problem and online stochastic maximum-weight matching, for both of which $\nicefrac{1}{2}$-competitive algorithms are known \cite{feldman2015combinatorial, dutting2020prophet,ezra2020online}.

There is an extensive literature on numerous variations of online Bayesian selection problems, relating the performance of online algorithms with the omniscient prophet of inequality fame --- that is to say, with the offline optimum (see \Cref{sec:related-work}). 
In particular, these works study achievable competitive ratios: the worst-case ratio over all inputs between the online algorithm and the best offline algorithm.
While this may be the right thing to do when the input is adversarial, when the input is generated {\em stochastically} one perhaps could do better.
In particular, in the stochastic case, the optimum online algorithm {\em for the given input} is a well-defined benchmark that can be computed in exponential time. Suddenly we are in the realm of {\em approximation algorithms,} rather than of competitive analysis.

In approximation algorithms, typically one explores two interesting questions: First, is approximation hard?  And second, what is the best approximation ratio achievable in polynomial time? 

In this paper we address both questions. First, we show that for some $\alpha<1$ it is \textsc{PSPACE}-hard to approximate the $\ride$ problem within a factor of $\alpha$.
\begin{wrapper}
\begin{restatable}{theorem}{pspacehard}\label{thm:ridehailpspacehard}
    It is \textsc{PSPACE}-hard to approximate the optimal online $\ride$ algorithm within a factor of $\alpha$, for some absolute constant $\alpha<1$. This remains true even when all weights and inverse arrival probabilities are bounded by some polynomial in the size of the input.
\end{restatable}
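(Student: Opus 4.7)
The plan is to reduce from a \textsc{PSPACE}-hard problem that is known to be hard to approximate within a constant, namely a stochastic variant of satisfiability (such as SSAT) or the value problem for a succinctly described stochastic game with alternating random and decision nodes. Such problems admit constant-factor inapproximability via PSPACE analogues of the PCP theorem (e.g.\ Condon--Feigenbaum--Lund--Shor style results for stochastic games), so the main task is to translate such an instance into a \ride{} instance in a ratio-preserving way. The temporal alternation in \ride{} between \emph{random events} (the arrival/realization of the next ball) and \emph{irrevocable decisions} (which unmatched bin to assign the arriving ball to) mirrors exactly the alternation of random and existential quantifiers in the source problem, so the reduction is conceptually natural; the work is in the gadgetry.

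First I would lay out the arrival sequence so that each random variable of the source instance corresponds to a block of balls whose arrivals/realizations encode a uniform bit, and each existential variable corresponds to a later ``decision'' ball whose two high-weight neighbors represent the two truth values, with a matching irrevocably locking in an assignment. Clause satisfaction is then checked by a final phase of balls that can be absorbed by bins only along trajectories consistent with a satisfying assignment for the realized random bits. I would use auxiliary ``filler'' balls and bins to block the natural cheats—most importantly, so that leaving a variable-bin unmatched does not allow the algorithm to recoup its value from later arrivals—effectively forcing the algorithm to commit to an assignment of each existential variable at the moment the corresponding ball arrives. Because all weights and all probabilities in the source stochastic problem can be taken to be $\mathrm{poly}(n)$-bounded (and dyadic with polynomial precision), the gadgets translate directly into \ride{} instances satisfying the polynomial-boundedness stipulated in the theorem.

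The main obstacle I expect is twofold. First, bipartiteness of the matching means that ``not choosing'' a bin has long-range consequences, so the gadgets must be engineered (via dummy balls with carefully tuned weights and arrival probabilities) so that a suboptimal decision at one gadget cannot be compensated by clever matches at later gadgets; this requires the per-gadget gap between the intended and unintended choices to strictly dominate the total residual value available downstream, which is delicate to arrange without blowing up weights. Second, preserving a \emph{constant} multiplicative gap end-to-end—rather than a $1-1/\mathrm{poly}(n)$ gap—requires either invoking a gap-amplified PSPACE-hard source problem directly or performing a gap amplification inside the reduction (for instance by taking many independent parallel copies of the core gadget and aggregating their payoffs with an appropriate normalization). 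I would aim for the former, since the matching constraints make parallel repetition inside the reduction interact awkwardly with the shared pool of bins.
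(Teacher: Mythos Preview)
Your high-level plan matches the paper's: reduce from a gap version of stochastic SAT (Condon--Feigenbaum--Lund--Shor), encode existential variables by a ball with two literal-bin neighbors, encode random variables by a ball that arrives with probability $1/2$, and check clauses with a final phase of low-probability high-weight balls. You also correctly identify the central obstacle, namely forcing the algorithm to commit to an assignment rather than leave a literal bin open to harvest later clause-ball value.

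Where the proposal is missing the key idea is in how that obstacle is resolved. You suggest ``filler'' balls and bins tuned so the per-gadget gap dominates all downstream residual value, but with an unbounded number of clause occurrences per variable the downstream residual value attached to a single literal bin can be $\Theta(m)$ times the per-clause expected weight, and you cannot make the literal-ball weight dominate this without destroying the constant multiplicative gap (the literal-ball contribution would swamp the clause-ball contribution in the objective). The paper's fix is on the SSAT side, not the \ride{} side: first strengthen the Condon et al.\ hardness to MAX-SSAT instances in which \emph{every} variable (deterministic as well as random) appears in at most $k=O(1)$ clauses, by replacing each high-occurrence deterministic variable with fresh copies tied together by equality clauses along the edges of a constant-degree expander. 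With bounded occurrence in hand, each literal bin is adjacent to at most $k$ clause balls, so setting clause-ball weight $m^4/(2k)$ and arrival probability $m^{-4}$ makes the foregone value from leaving a literal bin open at most $k\cdot \tfrac{m^4}{2k}\cdot m^{-4}=1/2<1$, which is exactly the simple inequality that forces matching every arriving literal ball. One further point you do not mention: the random-variable gadget (a probability-$1/2$ ball with a single neighbor $x_t$) only works because Condon et al.\ already guarantee random variables appear non-negated, so there is no need for a bin $\overline{x_t}$ at all. Finally, since the literal balls contribute an additive $0.75n=\Theta(m)$ to the objective and $\opton(\phi)=\Theta(m)$ as well, the constant gap survives the additive shift via the elementary fact that an $\tfrac{\alpha+\beta}{1+\beta}$-approximation to $Q+Q'$ with $Q'/Q\le\beta$ yields an $\alpha$-approximation to $Q$.
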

\end{wrapper}

Here, $1-\alpha$ is small, limited by the current status of expander constructions and approximation hardness of MAX-SSAT (see \Cref{sec:prelims}). To our knowledge, no past work on variants of online matching had demonstrated such level of hardness. 
(We briefly note that PSPACE is the ``right'' complexity class for this problem, which can be solved in polynomial space by standard techniques.)
Finally, we note (see  \Cref{app:policyhardnesschernoff}) that our hardness of approximation result directly implies hardness of computing an approximately-optimal online algorithm, and not just its expected value.\footnote{Note that for some problems, although computing the expected value of the optimal policy is hard, computing the optimal policy itself is actually easy. For example, computing the probability that a random graph containing each edge $e$ with probability $p_e$ is connected is the
\#P-hard the network reliability problem \cite{karger2001randomized,valiant1979complexity,provan1983complexity}, while computing connectivity in the realized graph (even in an online setting) is trivial.} 


We then develop an approximation algorithm, and a technique to bound the (online) optimum. To our knowledge, all past work on approximating the large family of online Bayesian selection problems, with the exception of \cite{anari2019nearly}, has used the prophet inequality benchmark of the offline optimum, which necessarily limits the approximation ratio for many variations to be at most $\nicefrac{1}{2}$.

{\em We go for bounding the online optimum.}  We achieve this by identifying a new constraint which separates online from offline algorithms. In particular, we note that online algorithms cannot match an edge $(i,t)$ with probability greater than the probability of ball $t$ arriving, times the probability of bin $i$ not being matched by the online algorithm beforehand, due to the independence of these events. 
This constraint, which is not true of offline algorithms, poses restrictions on the marginal probabilities of edges to be matched by the optimal online algorithm.
Combining this constraint with the natural matching constraints we obtain a new LP which bounds the optimal online algorithm's gain.
Using this new LP bound (and a number of further ideas, see \Cref{sec:techniques}), we design a new algorithm which recovers at least $51\%$ of the online optimum, i.e., a
ratio strictly better than the optimal competitive ratio of $\nicefrac{1}{2}$.

\begin{wrapper}
\begin{restatable}{theorem}{thmalg}\label{thm:alg}
    There exists a polynomial-time online algorithm which is a $0.51$-approximation of the optimal online algorithm for the  $\ride$ problem.
\end{restatable}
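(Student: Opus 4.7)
The plan is to upper bound $\opton$ by the value of a polynomial-size linear program and then design an online rounding scheme that recovers at least a $0.51$ fraction of the LP value. For the relaxation, introduce variables $x_{i,t}$ for each edge, intended to represent the marginal probability that the optimal online policy matches edge $(i,t)$. Impose the standard bipartite matching constraints $\sum_i x_{i,t} \le p_t$ for each ball $t$ and $\sum_t x_{i,t} \le 1$ for each bin $i$, and add the key \emph{online constraint} identified in the introduction:
\[
x_{i,t} \;\le\; p_t \cdot \Bigl(1 - \sum_{s<t} x_{i,s}\Bigr) \qquad \text{for every } i,t.
\]
Since ball $t$'s arrival is independent of the algorithm's actions before time $t$, and since bin $i$ is free at time $t$ with probability $1 - \sum_{s<t} x_{i,s}$ under any policy producing these marginals, this constraint is satisfied by every online algorithm. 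Hence the LP optimum $\mathrm{LP}^*$ upper bounds $\opton$, and the LP itself is solvable in polynomial time.

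With this relaxation in hand, the baseline rounding scheme is the natural one: upon arrival of ball $t$, propose a neighbor $i$ drawn with probability $x^*_{i,t}/p_t$ (and skip with the remaining probability), committing only if $i$ is still free. A routine coupling argument shows this achieves exactly a $\tfrac{1}{2}$-approximation against $\mathrm{LP}^*$, matching the classical prophet barrier: the loss comes entirely from the event that the proposed bin is already matched, which can be as high as $\tfrac{1}{2}$ when the LP is tight.

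To beat $\tfrac{1}{2}$, I would exploit the slack provided by the online constraint across two complementary regimes, combined by a convex combination of policies. In the \emph{balanced} regime, where no single edge $(i,t)$ carries more than a small fraction of $\sum_s x^*_{i,s}$, concentration of the indicator $\mathbbm{1}[i \text{ free at } t]$ together with the online constraint imply that the bin-availability probability at edges with significant weight exceeds $\tfrac{1}{2}$ by a constant, yielding a clean improvement. In the \emph{spiky} regime, where a bin's LP mass is concentrated on a few edges, one can afford a two-proposal style rounding, for example by assigning each ball both a primary and a designated backup bin and committing to the backup if the primary fails; the online constraint is then used to upper bound the probability that \textbf{both} attempts on a given bin collide, again producing a constant-factor gain. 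Mixing these two policies in a fixed proportion should yield a universal ratio of $\tfrac{1}{2} + \delta$ for $\delta \ge \finalCconstant$.

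The main obstacles I anticipate are twofold. First, the multi-proposal rounding must be implementable \emph{online}: unlike the offline setting, we cannot defer integrality via a Birkhoff--von Neumann decomposition, so the backup mechanism must be determined in a causal, prefix-consistent manner, most likely via an online-correlated-selection-style primitive adapted to vertex arrivals. Second, pushing the quantitative improvement from an infinitesimal $\delta$ to $0.01$ requires a delicate LP-based accounting of the correlations between bin-availability events across balls, and in particular a loss-charging scheme that trades the slack of the online constraint against the marginal decrease in the probability of a bin being free — this tension is where the bulk of the technical work will live.
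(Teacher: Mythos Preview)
Your LP relaxation is exactly right, and you correctly identify that a second-proposal mechanism together with a correlation analysis of bin availabilities is the crux. However, the concrete plan you sketch diverges from what actually works, and the divergence is not cosmetic.

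First, the paper does not split into a ``balanced'' versus ``spiky'' regime based on whether a single edge dominates $\sum_s x^*_{i,s}$, nor does it take a convex combination of two policies. It runs a \emph{single} algorithm: on arrival of $t$, propose a bin $i$ with probability $x^*_{i,t}/p_t$, and if $i$ is free, accept with probability $q_{i,t}=\min\bigl(1,\frac{1/2+c}{1-(1/2+c)\sum_{t'<t}x^*_{i,t'}}\bigr)$; if $t$ remains unmatched, repropose once more and accept only if the bin is free and ``late'' (i.e., $\sum_{t'<t}x^*_{i,t'}>\frac{1/2-c}{1/2+c}$). The relevant dichotomy is \emph{early versus late} edges, determined by the prefix LP mass, not by the shape of the full mass distribution on a bin. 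Early edges are matched as first picks with probability exactly $(\tfrac{1}{2}+c)x^*_{i,t}$; late edges get only $(\tfrac{1}{2}-3c)x^*_{i,t}$ from the first pick, and the second pick must supply the missing $4c\cdot x^*_{i,t}$.

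Second, and more importantly, the correlation analysis you flag as ``the bulk of the technical work'' is not handled by any concentration argument or OCS-style primitive. The paper proves that the indicators $\{O^E_{i,t}\}_i$ for ``$i$ is matched along an early edge before time $t$'' are \emph{negatively associated}: each $O^E_{i,t}$ is a monotone function of the NA family of first-pick and accept indicators, and these families are NA by the 0--1 principle and closure under independent union. This gives $\Cov(V_{i,t},V_{j,t})\le 12c$ for all $i\neq j$, because late-edge matches contribute at most $4c$ probability mass. Separately, the online LP constraint is used in a very specific way: for a late edge $(i,t)$ it forces $x^*_{i,t}/p_t\le 1-\sum_{t'<t}x^*_{i,t'}\le 4c$, so the \emph{first} proposal is unlikely to be $i$ itself. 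These two facts together show that conditioning on ``$t$ unmatched after its first pick and $i_1\neq i$'' barely lowers $\Pr[V_{i,t}]$, which is what makes the second proposal effective. Your plan gestures at the right bottleneck but does not supply the tool (negative association) that closes it; a generic concentration bound on a sum of dependent indicators will not give you the pairwise covariance control that the argument needs.
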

\end{wrapper}

We further generalize our algorithm and achieve the same approximation bound for the more general problem in which weights of any given ball's edges can follow any joint distribution, but weights of different ball's edges are independent. That is, we extend our positive results to the more general bipartite weighted matching problem studied by prior work~\cite{ezra2020online,dutting2020prophet,feldman2015combinatorial}. (See \Cref{sec:general-algo}.)



\subsection{Techniques}\label{sec:techniques}
Here we give a very brief overview of the key ideas used to obtain our main results.

\subsubsection{Hardness} For our \textsc{PSPACE}-hardness result, we first refine the result of \citet{condon1997random} for maximum satisfiability of stochastic SAT instances.
In the stochastic SAT (SSAT) problem, introduced by \citet{papadimitriou1985games},
a 3CNF formula is given, and variables $x_1,x_2,\dots,x_n$ are alternatingly set by an (online) algorithm and randomly set by nature.
\citet{condon1997random} proved that approximating the maximum  expected number of satisfied clauses of an SSAT instance is \textsc{PSPACE}-hard. 
Using an \emph{expander graph} construction, we extend this result to SSAT instances in which each variable appears in at most $O(1)$ clauses. 
We then give a polynomial-time reduction from approximating maximum 
satisfiability of a bounded-occurrence SSAT instance to approximating the optimal online algorithm for the \ride problem, implying our claimed \textsc{PSPACE}-hardness.

\subsubsection{Algorithm} Our algorithmic results involve a number of ideas. We outline the key ones here.
\paragraph{\underline{Our LP Benchmark.}} 
We recall that we want to approximate the optimal online algorithm within a factor strictly greater than $\nicefrac{1}{2}$ (which is tight against the optimal offline algorithm). Hence, our first objective is to identify a property which separates online from offline algorithms.
To this end, we note (as did \cite{torrico2017dynamic}) that for any online algorithm $\calA$, the event of the arrival of ball $t$ is independent of the event that bin $i$ is not matched by Algorithm $\calA$ prior to time $t$. (Note that this constraint does not necessarily hold for the prophetic optimum offline algorithm, which makes its matching choices based on both past and future balls' arrivals.)
Consequently, the probability that  edge $(i,t)$ is matched by online algorithm $\calA$ is at most the product of these two events' probabilities.
Combining this constraint with natural matching constraints, we obtain an LP which bounds the expected gain of the optimal online algorithm (but not its offline counterpart).
In \Cref{sec:LP-Match-obs} we note that this LP completely characterizes the optimal online algorithm for instances with a single offline node, equivalent to the single-item online Bayesian selection problem. This is not true for general instances (as we would expect due to \Cref{thm:ridehailpspacehard}); we therefore use this LP to \emph{approximate} the optimal online policy.

\paragraph{\underline{A Second Chance Algorithm.}}
We present an efficient online algorithm for approximately rounding a solution to the above LP. 
Let $x_{i,t}$ be the decision variables of this LP.
Intuitively, these $x_{i,t}$ serve as proxies for the probability of $(i,t)$ to be matched by the optimal online algorithm. 
Our online algorithm matches each edge $(i,t)$ with probability at least $x_{i,t}\cdot (\nicefrac{1}{2}+c)$ for $c=\nicefrac{1}{100}$.
Our algorithm can be seen as a generalization and extension of the $\nicefrac{1}{2}$-competitive algorithm of Ezra et al.~\cite{ezra2020online} for our problem.
Their algorithm can be thought of as approximately rounding the above LP (without the new constraint) as follows. After each arrival of ball $t$, pick a bin $i$ with probability proportional to $x_{i,t}$, and then, if bin $i$ is unmatched, match edge $(i,t)$ with some probability $q_{i,t}$. These $q_{i,t}$ are set to guarantee that each edge $(i,t)$ is matched with marginal probability $x_{i,t}\cdot \nicefrac{1}{2}$, which can be thought of as applying an online contention resolution scheme as in \cite{feldman2016online}.
To improve on this, 
we first note that modifying these $q_{i,t}$ appropriately results in each edge $(i,t)$ being matched with probability precisely $x_{i,t}\cdot (\nicefrac{1}{2}+c)$ if $\sum_{t'<t} x_{i,t'}$ is small, and 
at least $x_{i,t}\cdot (\nicefrac{1}{2}-O(c))$ otherwise.
To increase these marginal probabilities to $x_{i,t}\cdot (\nicefrac{1}{2}+c)$ for each edge $(i,t)$, we repeat the above process if $t$ is unmatched, letting $t$ pick a second bin $i'$ and possibly matching edge $(i',t)$. For this second pick to achieve its desired effect, bin $i$ should not be matched too often when picked by ball $t$ in its second pick.
That is, conditioning on $t$ not being matched after its first pick should not decrease the probability of $i$ being free by too much. This is the core of our analysis.

\paragraph{\underline{Analysis.}}
To prove that conditioning on ball $t$ not being matched after its first pick indeed does not decrease the  probability of bin $i$ being free by much, we show that (i) the bins' matched statuses by time $t$ have low correlation, and (ii) bin $i$ is unlikely to be picked twice by ball $t$.
To prove Property (i), we show that most of the probability of a bin to be matched by this algorithm is accounted for by variables which are negatively correlated, and even \emph{negatively associated} (see \Cref{sec:prelims}). For our proof of Property (ii), we finally reap the rewards from our new LP constraint. In particular, this constraint implies that for bins $i$ with $\sum_{t'<t} x_{i,t'}$ large, as above, $x_{i,t}$ must be low, implying that bin $i$ is unlikely to be picked by ball $t$ as its first pick.
Properties (i) and (ii) together imply that conditioning on ball $t$ not being matched after its first pick does not decrease the probability of bin $i$ to be unmatched much. This then implies that the second pick is not too unlikely to result in a match of edge $(i,t)$. We thus find that each edge $(i,t)$ is matched by our algorithm with probability $(\nicefrac{1}{2}+c)\cdot x_{i,t}$, from which our $(\nicefrac{1}{2}+c)$-approximation follows.

\subsection{Related Work}\label{sec:related-work}
The literature on online Bayesian selection problems is a long and illustrious one. We briefly outline some of the most relevant work here. See also surveys on the topic \cite{correa2019recent,hill1992survey,lucier2017economic,hartline2012approximation}.

A seminal result in the stopping theory literature, the first prophet inequality, a $\nicefrac{1}{2}$-competitive algorithm for the single-item online Bayesian selection problem, was first given in the late 70s \cite{krengel1978semiamarts}. Multiple algorithms achieving this bound are known \cite{samuel1984comparison,alaei2014bayesian,kleinberg2019matroid,ezra2020online,alaei2012online}.
On the other hand, better bounds are known for various special cases, most prominently for i.i.d.~distributions \cite{correa2017posted,hill1982comparisons,abolhassani2017beating}.

Numerous \emph{multiple-item} online Bayesian selection problems were studied over the years. 
Generalizations of the classic $\nicefrac{1}{2}$-competitive prophet inequality of \cite{krengel1978semiamarts} for single-items were given for matroid constraints \cite{kleinberg2019matroid}, 
for multiple items \cite{alaei2014bayesian},
for bipartite matching under one-sided vertex arrivals \cite{alaei2012online,feldman2015combinatorial},
and for general matching under vertex arrivals \cite{ezra2020online}, with positive results known for many other constraints \cite{dutting2020prophet,feldman2016online,feldman2015combinatorial,kleinberg2019matroid, dutting2020log}.
For matching under \emph{edge} arrivals, a number of positive results are known \cite{feldman2016online,kleinberg2019matroid,gravin2019prophet}, and a competitive ratio of $\nicefrac{1}{2}$ is impossible for this stochastic problem  \cite{gravin2019prophet,ezra2020online}. This mirrors a similar separation between vertex arrivals and edge arrivals for this problem's (unweighted) deterministic counterpart \cite{gamlath2019online}.
Much of this work on approximating the optimal offline algorithm (prophet inequalities) for online Bayesian selection problems was motivated by connections discovered between prophet inequalities and algorithmic mechanism design \cite{hajiaghayi2007automated,chawla2010multi,correa2019pricing,feldman2015combinatorial}.
The computational complexity of approximating the \emph{online} optimal algorithm, however, was significantly less well studied.

The only previous positive result for approximating the online optimum algorithm (better than offline optimum) for an online Bayesian selection problem is due to Anari et al.~\cite{anari2019nearly}, who gave a PTAS for a special class of matroid constraints.
On the computational complexity front, the only hardness for such problems we are aware of is the recent result of Agrawal et al.~\cite{agrawal2020optimal}, who show that computing the optimal \emph{ordering} of the random variables for a single-item problem is \textsc{NP}-hard (with an EPTAS for this ordering problem due to \cite{segev2020efficient}).
The (in)approximability of the  optimum online algorithm was studied for other stochastic online optimization problems recently, including probing problems \cite{goel2010probe, chen2016combinatorial, fu2018ptas,segev2020efficient},
stochastic matching problems in infinite-horizon settings under Poisson arrivals and departures \cite{aouad2020dynamic}, two-stage stochastic matching problems \cite{feng2021two}, and stochastic dynamic programming problems \cite{fu2018ptas}.
The computational complexity of approximating $\opton$ for these and other problems remains an intriguing open problem.
We are hopeful that the tools we develop here will prove useful in extending the literature on computational complexity and approximability of such problems of decision-making under uncertainty.

\paragraph{\textbf{Follow-up work:}} Following this work, the last two authors have extended this paper's algorithm to obtain improved algorithms for the (seemingly unrelated) online edge coloring problem \cite{saberi2021greedy}. Their ideas can be used to improve our approximation ratio from $0.51$ to $0.526$. In another work, Kessel et al.~\cite{kessel2021stationary} study a stationary version of the prophet inequality problem, and obtain optimal competitive ratios, and improved approximation of the optimal online algorithm. Whether other online Bayesian selection problems admit better (efficient) approximation of their optimal online algorithms compared to the optimal prophet inequality remains to be seen. (See \Cref{sec:conclusion}.)

\section{Preliminaries}\label{sec:prelims}

For any algorithm $\mathcal{A}$ and instance $\calI$ of a problem $\Pi$, we let $\mathcal{A}(\calI)$ denote the value of the output of algorithm $\mathcal{A}$ on instance $I$. We use $\opton^{\Pi}(\calI)$ to denote an optimal online algorithm for $\Pi$ on $\calI$. Since the problem $\Pi$ will be clear from context, we will usually just write $\opton(\calI)$.
Our interest is in understanding how well this value can be approximated by efficient online algorithms.
Throughout, we say an algorithm gives an $\alpha$-approximation to a quantity $Q$, for $\alpha\in (0,1)$, if it outputs a number in the range $[\alpha Q, Q]$. 
The following simple fact, whose proof is deferred to \Cref{app:ommittedproofs}, is useful for reductions involving hardness of approximation. 

\begin{restatable}{fact}{approxfact}\label{lem:approxlem}
Let $Q, Q' \geq 0$ be positive quantities, such that $Q'/Q\leq \beta$, and let $\alpha \in (0,1)$. Then, an $\big(\frac{\alpha+\beta}{1 + \beta }\big)$-approximation to $Q + Q'$ yields an $\alpha$-approximation to $Q$.
\end{restatable}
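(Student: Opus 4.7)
The plan is to extract an approximation to $Q$ from the given approximation to $Q+Q'$ by the natural operation of subtracting off $Q'$. This is motivated by the typical reduction setup: we want to build a gadget that deterministically contributes a known value $Q'$ on top of the quantity $Q$ we actually care about, so that subtracting $Q'$ recovers a guess for $Q$. The definition of $\alpha$-approximation in the excerpt requires us to produce a value in the interval $[\alpha Q, Q]$, so I will show that if $V \in [\gamma(Q+Q'), Q+Q']$ with $\gamma := \frac{\alpha+\beta}{1+\beta}$, then $V - Q'$ lies in $[\alpha Q, Q]$.

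The upper bound is immediate: since $V \le Q + Q'$, we have $V - Q' \le Q$. The lower bound is the substantive step, but it is just algebra. Starting from $V \ge \gamma(Q+Q')$, I would write
\[
V - Q' \;\ge\; \gamma Q \;-\; (1-\gamma) Q'.
\]
Then I would use the hypothesis $Q' \le \beta Q$ (together with $1 - \gamma \ge 0$, since $\gamma \in (0,1)$) to bound $(1-\gamma)Q' \le (1-\gamma)\beta Q$, giving
\[
V - Q' \;\ge\; \bigl[\gamma(1+\beta) - \beta\bigr]\, Q.
\]
Finally I would plug in $\gamma = \frac{\alpha+\beta}{1+\beta}$, at which point the bracketed coefficient simplifies to exactly $\alpha$, finishing the proof.

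There is no real obstacle here: the only thing to notice is that the specific choice of the ratio $\frac{\alpha+\beta}{1+\beta}$ is exactly what is needed to make the subtraction argument work, i.e., it is the unique $\gamma$ satisfying $\gamma(1+\beta) - \beta = \alpha$. I would also briefly remark that non-negativity of $V - Q'$ follows for free from $V - Q' \ge \alpha Q \ge 0$, so $V - Q'$ is a valid output of an approximation algorithm.
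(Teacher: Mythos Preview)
Your proof is correct and follows essentially the same approach as the paper: both subtract $Q'$ from the given approximation and verify that the resulting value lies in $[\alpha Q, Q]$. The only cosmetic difference is that the paper phrases the lower-bound verification via monotonicity of $x \mapsto \frac{\alpha+x}{1+x}$ (so that $\frac{\alpha+\beta}{1+\beta}(Q+Q') \ge \alpha Q + Q'$ directly), whereas you carry out the equivalent algebra explicitly.
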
 

We now turn to providing background on problems and tools used in this work.

\paragraph{\underline{Stochastic SAT.}}\label{sec:SSAT}

The stochastic SAT (SSAT) problem was first defined by Papadimtriou \cite{papadimitriou1985games}. 
In this work, we will consider the maximization variant of this problem, defined below. 

\begin{definition} The input to the MAX-SSAT problem is a 3CNF formula $\phi$ over an ordered list of variables $(x_1, x_2, \ldots, x_n)$. We choose a value of either \texttt{True} or \texttt{False} for $x_1$, nature chooses a value of either \texttt{True} or \texttt{False} for $x_2$ uniformly at random, we choose a value of either \texttt{True} or \texttt{False} for $x_3$, and so on. Our goal is to maximize the expected number of satisfied clauses in $\phi$ after all the variables have been assigned a value. We will refer to $\{x_1, x_3, \ldots \}$ as the ``deterministic variables'' and $\{x_2, x_4, \ldots \}$ as the ``random variables.''
\end{definition}



In his work introducing SSAT, \citet{papadimitriou1985games}  proved PSPACE-hardness of determining the probability of satisfiability of an SSAT instance. 
Over a decade later, this was improved to a 
\emph{hardness of approximation} result by \citet{condon1997random}, via extensions of the PCP theorem \cite{arora1998proof}. 
In particular, they prove the following hardness of approximation result.

\begin{restatable}{lemma}{randomdebaters}(\cite[Theorem 3.3]{condon1997random}) \label{thm:k-max-ssat-hard}
There exist constants $k \in \NN$ and $\alpha \in (0,1)$ so that it is PSPACE-hard to compute an $\alpha$-approximation to $\opton(\phi)$ for a MAX-SSAT instance $\phi$ satisfying:
\begin{enumerate}
    \item no random variable  appears negated in any clause of $\phi$, and
    \item each random variables appears in at most $k$ clauses of $\phi$.
\end{enumerate}
\end{restatable}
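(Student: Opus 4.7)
The plan is to derive this result by combining a baseline PSPACE-hardness of approximating unrestricted MAX-SSAT with syntactic transformations enforcing the two structural conditions without destroying the constant-factor gap. For the baseline, I would invoke the PCP-style machinery of \cite{condon1997random}: applying gap amplification to Shamir's characterization IP $=$ PSPACE yields a random-debate game for a PSPACE-hard language with perfect completeness, constant soundness gap, and only $O(1)$ bits of the verifier's transcript inspected per local check. Translating the verifier's acceptance predicate into 3CNF in the natural way gives a MAX-SSAT instance $\phi_0$ whose optimal online value is gap-separated by a constant, so that approximating $\opton(\phi_0)$ within some $\alpha_0 < 1$ is PSPACE-hard.

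To enforce condition~(1) (no negated random variables), I would exploit the strict quantifier ordering of SSAT: immediately after each random variable $y_i$, introduce two deterministic ``proxy'' variables $z_i^{+}, z_i^{-}$ together with a small local gadget that uses $y_i$ only positively and incentivizes the algorithm to encode $y_i$ and $\lnot y_i$ in its own deterministic responses. For instance, the clause $(y_i \lor \lnot z_i^{+})$ keeps $z_i^{+}$ from being set to \texttt{True} whenever $y_i = \texttt{False}$; a symmetric gadget with a few auxiliary deterministic variables can enforce the complementary direction, all while avoiding any negation of $y_i$. Replacing each occurrence of $y_i$ (respectively $\lnot y_i$) in downstream clauses by $z_i^{+}$ (respectively $z_i^{-}$) then removes all negated appearances of random variables.

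To enforce condition~(2) (each random variable appears in $O(1)$ clauses), I would apply an expander-based degree-reduction in the spirit of the classical Papadimitriou--Yannakakis replacement for MAX-3SAT, adapted to the SSAT setting: a high-occurrence random variable is split into many independent random copies, tied together through deterministic proxy-consistency clauses wired along a constant-degree expander so that a constant fraction of disagreements among copies forces a constant fraction of consistency-clause violations. The main obstacle is the gap-preservation analysis: the one-sided nature of the gadgets from condition~(1) and the independence introduced by the copies alter the probability measure under which clauses are counted, so one must carefully tune gadget multiplicities, expander parameters, and the baseline constants $\varepsilon_0, \alpha_0$ to keep the final constant $\alpha$ strictly below $1$. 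Here \Cref{lem:approxlem} is the right tool for propagating constant-factor losses when combining the baseline gap with the gadget overhead, ultimately yielding the claimed constants $k$ and $\alpha$.
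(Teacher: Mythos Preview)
Your approach differs fundamentally from the paper's, and the divergence exposes a real gap. The paper does \emph{not} perform a black-box syntactic reduction; instead, it argues by white-box inspection of the RPCDS construction in \cite{condon1997random}. Specifically, it traces through Lemmas~2.1 and~2.3 of that paper to verify that the verifier reads each random bit of Player~0's message in only $O(1)$ realizations of its $O(\log n)$ coin flips. Since each coin-flip realization contributes a constant-size 3CNF block, each random variable appears in $O(1)$ clauses \emph{already in the original construction}. Condition~(1) is handled separately by the modification in Theorem~3.3 of \cite{condon1997random}, which the paper simply inherits.

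The gap in your proposal is in step~(2). The Papadimitriou--Yannakakis expander trick works for \emph{deterministic} variables because the player can choose to set all copies equal, satisfying every consistency clause. If you split a random variable $y_i$ into ``independent random copies'' $y_{i,1},\dots,y_{i,a}$, nature sets each copy independently and they disagree on a constant fraction of expander edges in expectation; the consistency clauses then contribute $\Theta(a)$ unsatisfied clauses regardless of the algorithm's choices, swamping the original gap. This is precisely why the paper reserves the expander argument for deterministic variables (in \Cref{lem:ssatm}) and handles random-variable occurrence by inspecting the source construction.

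Your step~(1) does not rescue this. If the proxy gadget for $z_i^{+}$ uses $y_i$ only $O(1)$ times, then when $y_i=\texttt{False}$ the algorithm may still prefer $z_i^{+}=\texttt{True}$: the single gadget penalty is outweighed by the potentially unbounded number of downstream clauses containing $z_i^{+}$. Forcing the incentive to work requires replicating the gadget clause proportionally to the occurrence count of $y_i$, which reintroduces high occurrence of $y_i$ and defeats condition~(2). The ``complementary direction'' you allude to faces the same obstruction. So neither step, as described, yields bounded random-variable occurrence while preserving a constant gap.
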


It is worth noting that Theorem 3.3 in \cite{condon1997random} only includes the statement about random variables being non-negated. The second property is a direct consequence of the proof of the theorem. In \Cref{app:ommittedproofs} we explain the necessary modifications to the proof to add this guarantee.

\paragraph{\underline{Expander Graphs.}}


Define the expansion of a graph $G$ as 
$$h(G) := \min_{S \subseteq V, |S| \le |V|/2} \frac{|E(S, V \setminus S)|}{|S|},$$
where $E(X,Y):=\{(x,y)\in E\mid e\in X, y\in Y\}$ denotes the edges with one endpoint in $X$ and the other in $Y$. 
We will utilize results providing explicit, deterministic constructions  of graphs with constant degree and constant expansion (e.g. \cite{gabber1981explicit,lubotzky1988ramanujan}). 

\begin{lemma} \label{lem:explicitexpanders}
There exists a deterministic, polynomial-time construction of a graph on $n$ vertices with expansion at least 1 and maximum degree at most some constant $d$.
\end{lemma}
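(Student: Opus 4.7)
The plan is to invoke a standard constant-degree explicit expander construction, convert its spectral expansion into edge expansion via Cheeger's inequality, handle arbitrary $n$ by local surgery, and finally boost the expansion constant up to at least $1$ via parallel edges, all at the cost of only a constant blow-up in the maximum degree.

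Concretely, I would start from an explicit construction such as that of Gabber-Galil (or Margulis, or Lubotzky-Phillips-Sarnak), which for each $m$ in a suitably dense subset of $\NN$ produces in time $\mathrm{poly}(m)$ a $d_0$-regular graph $H_m$ on $N(m)$ vertices with second-largest (absolute) eigenvalue $\lambda \le \lambda_0 < d_0$, where $d_0$ and $\lambda_0$ are absolute constants. The discrete Cheeger inequality then yields edge expansion
\[
h(H_m) \;\ge\; \tfrac{1}{2}(d_0 - \lambda_0) \;=:\; c_0 \;>\; 0.
\]

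Next I would handle an arbitrary $n$. Because the construction is available for a set of sizes that is dense on a constant multiplicative scale, I can pick $m$ so that $n \le N(m) \le C \cdot n$ for an absolute constant $C$. I then convert $H_m$ into a graph on exactly $n$ vertices by performing $N(m) - n$ vertex-merging operations, each of which identifies two adjacent vertices (or deletes a vertex and adds a short matching among its neighbors). A standard accounting shows that each such local operation changes every cut by at most $O(1)$ edges and adjusts sizes by $1$, so edge expansion is preserved up to a constant factor and the maximum degree grows by only $O(1)$. This step can plainly be implemented in polynomial time. Call the resulting graph $G_n$, with expansion $h(G_n) \ge c_1$ and maximum degree $\le d_1$ for absolute constants $c_1,d_1>0$.

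Finally, to push the expansion above $1$, I replace each edge of $G_n$ with $k := \lceil 1/c_1 \rceil$ parallel copies (or, if a simple graph is preferred, take the union of $k$ vertex-disjoint edge-perturbations that achieve the same effect). Every cut $(S, V\setminus S)$ now has at least $k \cdot c_1 \ge 1$ crossing edges per vertex of the smaller side, so $h(G_n)\ge 1$, while the maximum degree is at most $d := k \cdot d_1 = O(1)$. The main subtlety is the second step: ensuring that the size-adjustment surgery preserves expansion up to constants for every $n$ rather than just for the distinguished sizes $N(m)$. This is, however, entirely routine once one has an expander at a nearby size to work from, and the whole pipeline runs in $\mathrm{poly}(n)$ time, yielding the claimed constant-degree, expansion-$\ge 1$ graph.
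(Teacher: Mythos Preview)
The paper does not prove this lemma; it simply states it as a known result and cites the explicit constructions of Gabber--Galil and Lubotzky--Phillips--Sarnak. Your proposal goes further and actually sketches how to assemble such a construction from standard ingredients (explicit spectral expanders, Cheeger's inequality, size adjustment, and boosting via edge multiplicity), which is perfectly appropriate and correct in outline.

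The one place where your sketch is genuinely loose is step 3, the size-adjustment surgery. Your claim that ``each local operation changes every cut by at most $O(1)$ edges'' is true per operation, but you are performing $\Theta(n)$ such operations, so a naive per-step accounting does not obviously preserve a constant expansion ratio. The standard fix is to do the adjustment globally rather than one merge at a time: partition the $N(m)\le Cn$ vertices of $H_m$ into $n$ groups each of size at most $C$, and contract each group; then for any $S$ in the contracted graph with $|S|\le n/2$, its preimage $S'$ (or its complement) has size at most $N(m)/2$, the crossing edges survive contraction, and one obtains $h\ge c_0/C$ with maximum degree at most $Cd_0$. With that correction, your argument is complete and yields exactly the statement the paper cites.
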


\paragraph{\underline{Negative Association.}}
\label{sec:prelimNA} 

We briefly review some notions of negative dependence we need in this work, in particular, the notion of \emph{Negatively Associated} random variables.


\begin{definition}[\cite{khursheed1981positive,joag1983negative}]\label{def:NA}
	Random variables $X_1,\dots,X_n$ are \emph{negatively associated (NA)}, if every two monotone non-decreasing functions $f$ and $g$ defined on disjoint subsets of the variables in $\vec{X}$ are negatively correlated. That is,
	\begin{equation}\label{eq:NA}
	\EE[f\cdot g] \leq \EE[f]\cdot \EE[g].
	\end{equation}
\end{definition}

A family of independent random variables are trivially negatively associated. A more interesting example of negatively associated random variables is the following.

\begin{proposition}[0-1 Principle \cite{dubhashi1996balls}]\label{0-1-NA}
	Let $X_1,\dots,X_n\in \{0,1\}$ be binary random variables such that $\sum_i X_i\leq 1$ always. Then, the joint distribution $(X_1,\dots,X_n)$ is negatively associated.
\end{proposition}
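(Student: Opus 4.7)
The plan is to reduce the inequality $\EE[fg]\le \EE[f]\EE[g]$ to a simple statement about pairwise covariances of the $X_i$'s, by exploiting the at-most-one structure to linearize $f$ and $g$.

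First I would fix disjoint index sets $A,B$ and monotone non-decreasing functions $f\colon\{0,1\}^{A}\to\RR$ and $g\colon\{0,1\}^{B}\to\RR$. The key observation is that since $\sum_i X_i\le 1$ almost surely, the random vector $\vec{X}_A$ takes only values in the set $\{\vec{0}\}\cup\{e_i: i\in A\}$ of vectors with at most one coordinate equal to $1$. On this set, $f$ can be written exactly as the affine function
\[
f(\vec{X}_A)=f(\vec{0})+\sum_{i\in A}a_i X_i,\qquad a_i:=f(e_i)-f(\vec{0}),
\]
and similarly $g(\vec{X}_B)=g(\vec{0})+\sum_{j\in B}b_j X_j$ with $b_j:=g(e_j)-g(\vec{0})$. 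Monotonicity of $f$ and $g$ immediately gives $a_i,b_j\ge 0$.

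Next I would expand both $\EE[f\cdot g]$ and $\EE[f]\cdot \EE[g]$ using these affine expressions. All terms involving $f(\vec{0})$ or $g(\vec{0})$ match exactly between the two products, so after cancellation we are left with
\[
\EE[fg]-\EE[f]\EE[g]=\sum_{i\in A,\,j\in B}a_i b_j\bigl(\EE[X_iX_j]-\EE[X_i]\EE[X_j]\bigr).
\]
Because $A\cap B=\emptyset$, each such pair has $i\ne j$, and the at-most-one condition forces $X_iX_j=0$ pointwise, hence $\EE[X_iX_j]=0$. Therefore every summand equals $-a_ib_j\,\EE[X_i]\EE[X_j]\le 0$, and the total is non-positive, which is exactly \eqref{eq:NA}.

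I do not anticipate a serious obstacle: the only subtle point is justifying the affine representation of $f$ and $g$, which is valid only because we never have to evaluate these functions outside the ``at most one coordinate is $1$'' subset of $\{0,1\}^A$ and $\{0,1\}^B$ respectively. Once that is noted, the rest of the argument is a direct computation, using the non-negativity of $a_i,b_j$ (from monotonicity) and the pointwise identity $X_iX_j=0$ (from the at-most-one hypothesis) to conclude.
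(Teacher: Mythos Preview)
Your proof is correct. The paper does not give its own proof of this proposition; it is stated with a citation to \cite{dubhashi1996balls} and used as a black box. Your argument---linearizing $f$ and $g$ on the at-most-one-coordinate support and then reducing $\EE[fg]-\EE[f]\EE[g]$ to a nonnegative combination of the covariances $\Cov(X_i,X_j)=-\EE[X_i]\EE[X_j]$---is a clean and standard elementary proof of the result, and every step checks out.
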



More elaborate NA distributions can be obtained via the following closure properties.
\begin{proposition}[NA Closure Properties \cite{khursheed1981positive,joag1983negative,dubhashi1996balls}]\label{NA-closure}
	$\phantom{a}$
	\begin{enumerate}
		\item \label{P7_union} \underline{Independent union.}
		Let $(X_1,\dots,X_n)$ and $(Y_1,\dots,Y_m)$ be 
		two mutually independent negatively associated joint distributions. Then, the joint distribution  $(X_1,\dots,X_n,Y_1,\dots,Y_m)$ is also NA.
		\item \label{P6_inc_funs} \underline{Function composition.}
		Let\, $\mathbf{X}= (X_1,\dots,X_n)$ be  NA, and let $f_1,\dots,f_k$ be monotone non-decreasing functions defined on disjoint subsets of\, $\mathbf{X}$. Then the joint distribution  $(f_1,\dots,f_k)$ is also NA.
	\end{enumerate}
\end{proposition}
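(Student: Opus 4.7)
The plan is to prove each closure property directly from Definition \ref{def:NA}, by taking arbitrary monotone non-decreasing test functions and reducing to the NA property of the underlying distribution(s).

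For Part \ref{P7_union}, given monotone non-decreasing $F, G$ defined on disjoint subsets of $(X_1,\dots,X_n,Y_1,\dots,Y_m)$, I would first write $F = F(X_S, Y_T)$ and $G = G(X_{S'}, Y_{T'})$ with $S \cap S' = \emptyset$ and $T \cap T' = \emptyset$, and then condition on $\mathbf{Y}$. Pointwise in $\mathbf{Y}$, the two functions become monotone non-decreasing functions on the disjoint $\mathbf{X}$-subsets indexed by $S$ and $S'$; NA of $\mathbf{X}$ (together with independence of $\mathbf{X}$ from $\mathbf{Y}$, so that conditioning on $\mathbf{Y}$ does not alter the distribution of $\mathbf{X}$) yields $\EE[FG \mid \mathbf{Y}] \leq \EE[F \mid \mathbf{Y}] \cdot \EE[G \mid \mathbf{Y}]$. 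Next I would argue that $\mathbf{Y} \mapsto \EE[F \mid \mathbf{Y}]$ depends monotonically on $Y_T$ alone and $\mathbf{Y} \mapsto \EE[G \mid \mathbf{Y}]$ depends monotonically on $Y_{T'}$ alone; this again uses that conditioning on $\mathbf{Y}$ leaves the $\mathbf{X}$-marginal unchanged, so the integrand reduces to a fixed monotone function of $(X_S, Y_T)$ (resp.~$(X_{S'}, Y_{T'})$) integrated against a measure that does not depend on $\mathbf{Y}$. A second application of NA, now for $\mathbf{Y}$, followed by taking total expectation, completes the proof of $\EE[FG] \leq \EE[F]\EE[G]$.

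For Part \ref{P6_inc_funs}, the argument is a direct composition. Given monotone non-decreasing $h_1, h_2$ defined on disjoint subsets $J_1, J_2$ of $\{f_1,\dots,f_k\}$, I would denote by $A_j$ the subset of coordinates of $\mathbf{X}$ on which $f_j$ depends, and observe that $h_1((f_j)_{j \in J_1})$ and $h_2((f_j)_{j \in J_2})$ are, when viewed as functions of $\mathbf{X}$, monotone non-decreasing on the disjoint coordinate sets $\bigcup_{j \in J_1} A_j$ and $\bigcup_{j \in J_2} A_j$, respectively (disjointness is inherited from the hypothesis on the $f_j$'s, and composition of monotone non-decreasing maps remains monotone non-decreasing). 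Applying NA of $\mathbf{X}$ to this pair of test functions immediately gives the required inequality.

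The only step that requires any care is justifying that $\EE[F \mid \mathbf{Y}]$ depends on $Y_T$ alone (not on all of $\mathbf{Y}$), which is where the independence assumption in Part \ref{P7_union} is indispensable. Modulo this bookkeeping, both parts follow immediately from Definition \ref{def:NA} applied to carefully chosen test functions, and I would not expect any real obstacle beyond keeping the index sets straight.
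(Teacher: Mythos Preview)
The paper does not prove this proposition; it is simply cited from the literature \cite{khursheed1981positive,joag1983negative,dubhashi1996balls} as a standard fact. Your proposal is correct and is exactly the standard argument that appears in those references (in particular in \cite{joag1983negative}): conditioning on $\mathbf{Y}$, applying NA of $\mathbf{X}$, then applying NA of $\mathbf{Y}$ to the resulting conditional expectations for Part~\ref{P7_union}, and a direct composition argument for Part~\ref{P6_inc_funs}. There is nothing to add.
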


%




Negative association implies many powerful concentration inequalities and other useful properties (see e.g., \cite{dubhashi1996balls,khursheed1981positive,joag1983negative,asadpour2017log}). For our purposes we will use the pairwise negative correlation of NA variables, implied by \Cref{eq:NA} with the disjoint functions $f(\vec{X})=X_i$ and $g(\vec{X})=X_j$ for $i\neq j$.

\begin{proposition}\label{NA:neg-corr}
	Let $X_1,\dots,X_n$ be NA random variables. Then, for all $i\neq j$, 
	$\Cov(X_i,X_j)\leq 0$. 
\end{proposition}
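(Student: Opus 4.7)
The plan is to derive this directly from the defining inequality of negative association. Specifically, I would instantiate \Cref{eq:NA} with the two single-variable monotone functions $f(\vec{X}) := X_i$ and $g(\vec{X}) := X_j$ for any fixed pair $i \neq j$.

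First I would verify the hypotheses of \Cref{def:NA} for this choice. The function $f$ depends only on coordinate $i$, while $g$ depends only on coordinate $j$, so they are defined on the disjoint subsets $\{i\}$ and $\{j\}$ of the index set, as required. Both are coordinate projections, hence monotone non-decreasing (indeed, the identity in their respective variable). Therefore the NA property yields
\begin{equation*}
\EE[X_i \cdot X_j] \;=\; \EE[f \cdot g] \;\leq\; \EE[f]\cdot \EE[g] \;=\; \EE[X_i]\cdot \EE[X_j].
\end{equation*}

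Rearranging gives $\Cov(X_i, X_j) = \EE[X_i X_j] - \EE[X_i]\EE[X_j] \leq 0$, as desired. Since the inequality holds for every pair $i \neq j$, this completes the proof. There is essentially no obstacle here: the proposition is an immediate specialization of the defining inequality of NA to coordinate projections, and the paragraph preceding the statement already flags this reduction; I include the argument only for completeness.
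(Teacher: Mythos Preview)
Your proof is correct and matches the paper's own justification exactly: the paper notes immediately before the proposition that it follows from \Cref{eq:NA} with the disjoint monotone functions $f(\vec{X})=X_i$ and $g(\vec{X})=X_j$ for $i\neq j$, which is precisely the instantiation you carry out.
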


\section{PSPACE-Hardness} \label{sec:lowerbound}

In this section, we prove our PSPACE-hardness result.

\pspacehard*

\subsection{Extending Stochastic SAT Hardness}\label{sec:extend-SSAT}

We first extend hardness of approximation for MAX-SSAT instances as in \Cref{thm:k-max-ssat-hard} to instances which in addition satisfy that \emph{deterministic} variables appear in at most $k$ clauses. 

\begin{lemma}\label{lem:ssatm}
There exist constants $k \in \NN$ and $\alpha \in (0,1)$ so that it is PSPACE-hard to compute an $\alpha$-approximation to $\opton(\phi)$ for a MAX-SSAT instance $\phi$ satisfying
\begin{enumerate}[(1)]
    \item no random variable appears negated in any clause of $\phi$, and \label{item:rand-non-negated}
    \item \emph{each} variable (both random and deterministic) appears in at most $k$ clauses of $\phi$.\label{item:both-det-and-rand}
\end{enumerate}
\end{lemma}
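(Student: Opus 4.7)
The plan is to reduce from the MAX-SSAT instances promised by \Cref{thm:k-max-ssat-hard} and eliminate high-multiplicity deterministic variables via the standard expander-gadget substitution, suitably adapted to the alternating SSAT setting.

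\textbf{Construction.} For each deterministic variable $x_i$ appearing in $m_i$ clauses of $\phi$, I introduce $m_i$ fresh copies $x_i^{(1)},\dots,x_i^{(m_i)}$, substituted one-for-one into the original occurrences. Using \Cref{lem:explicitexpanders}, I attach a graph $G_i$ on $m_i$ vertices of degree $\le d$ and expansion $h(G_i)\ge 1$, and for every edge $(a,b)\in E(G_i)$ I add the two equality clauses $(x_i^{(a)}\vee \neg x_i^{(b)})$ and $(\neg x_i^{(a)}\vee x_i^{(b)})$, each duplicated $M=2$ times and padded to size $3$ with an arbitrary dummy literal. To preserve the strict player/nature alternation of MAX-SSAT, I place the $m_i$ copies at consecutive player positions with $m_i-1$ fresh \emph{dummy} random variables (appearing in no clause) inserted at the intermediate nature positions, so that the next original variable $x_{i+1}$ lands at the correct parity. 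Call the result $\phi'$. Each deterministic copy then appears in $1+2Md=O(1)$ clauses, original random variables are untouched and still satisfy (1) and (2), and dummy randoms appear nowhere.

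\textbf{Value identity.} The claim is $\opton(\phi')=\opton(\phi)+E'$, where $E'$ counts all duplicated equality clauses. The $\ge$ direction is the obvious consistent-copy strategy: mimic $\phi$'s optimum and set every $x_i^{(j)}$ to the common chosen value. For $\le$, let $\sigma^\ast$ be optimal for $\phi'$ and, given a realization $\omega$ of nature's random bits, round $\sigma^\ast(\omega)$ within each variable-group to the group's majority value, yielding an assignment $A(\omega)$. The standard expander argument says that flipping $s$ minority copies to the majority loses at most $s$ original clauses (each copy sits in only one) but gains at least $M\,h(G_i)\,s \ge 2s$ equality clauses, so $V_{\phi'}(A(\omega)) \ge V_{\phi'}(\sigma^\ast(\omega))$ pointwise. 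Since $A(\omega)$ is consistent, $V_{\phi'}(A(\omega))=V_{\phi}(A(\omega)|_\phi)+E'$. The crucial causality check is that each group of copies for $x_i$ sits strictly before $x_{i+1}$ in $\phi'$'s ordering, so $A(\omega)(x_i)$ depends on $\omega$ only through nature's choices at $\phi$-positions $< i$ together with external dummy randomness. Hence $A|_\phi$ is a valid randomized strategy for $\phi$ that derandomizes to a deterministic one of equal expected value, giving $\opton(\phi)\ge\opton(\phi')-E'$.

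\textbf{Finishing.} A uniformly random player strategy satisfies every clause of $\phi$ with probability $\ge 1/2$, so $\opton(\phi)\ge |\phi|/2$; on the other hand $E' = O(|\phi|)$, so $\beta:=E'/\opton(\phi)=O(1)$. \Cref{lem:approxlem} then converts the $\alpha$-hardness of $\opton(\phi)$ from \Cref{thm:k-max-ssat-hard} into $(\alpha+\beta)/(1+\beta)$-hardness for $\opton(\phi')$, a constant strictly less than $1$. I expect the main technical subtlety to be precisely the causality of the majority rounding: the explicit placement of each group of copies entirely between two consecutive original variables of $\phi$ is what allows the projection to lift to a bona fide online strategy for $\phi$.
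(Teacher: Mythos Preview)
Your proof is essentially the paper's: replace each high-degree deterministic variable by copies connected via expander-based equality gadgets, interleave dummy random variables to preserve alternation, prove $\opton(\phi')=\opton(\phi)+E'$ via the flip-the-minority (equivalently, majority-round) argument, and finish with \Cref{lem:approxlem}. Your causality discussion for projecting the rounded strategy back to a bona fide online strategy for $\phi$ is in fact more explicit than the paper's.

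One caveat: padding the $2$-literal equality clauses to width $3$ with ``an arbitrary dummy literal'' is dangerous as stated. If the dummy is a fresh deterministic variable $z$, the optimal strategy for $\phi'$ can simply set $z=\texttt{True}$, satisfying every equality clause for free; the copies $x_i^{(j)}$ are then unconstrained and can be set inconsistently to satisfy more original clauses than any consistent assignment could (e.g.\ satisfying both a clause containing $x_1$ and one containing $\neg x_1$), pushing $\opton(\phi')$ strictly above $\opton(\phi)+E'$ and breaking your value identity. The paper sidesteps this by leaving the equality clauses at width $2$ (treating ``3CNF'' as width at most $3$); you can do the same, or pad by repeating a literal already in the clause. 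The duplication $M=2$ is also unnecessary: with $M=1$ the expansion bound already gives gain $\ge s$ against loss $\le s$, which is the weak inequality you need.
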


We give a polynomial-time reduction from $\alpha$-approximating $\opton(\phi)$ for a MAX-SSAT instance $\phi$
as in \Cref{thm:k-max-ssat-hard} to $\alpha'$-approximating $\opton(\phi')$ on a MAX-SSAT instance $\phi'$ satisfying both properties \ref{item:rand-non-negated} and 
\ref{item:both-det-and-rand} for some $k'=O(1)$ and constant $\alpha'\in (0,1)$.

\paragraph{The reduction.}
For odd (deterministic) $i$, if the variable $x_i$ appears in $a(i)$ clauses in $\phi$, we replace the $j$\textsuperscript{th} occurrence of $x_i$ with a new variable $x_{i, j}$ for $1 \le j \le a(i)$. Let $\phi'$ denote the new 3CNF formula after these replacements. 
We also add clauses to force the optimal online algorithm to set all of $(x_{i, 1}, x_{i, 2}, \ldots x_{i, a(i)})$ equal to each other, without increasing their number of occurrences by more than a constant. Specifically, 
for each odd $i$, we construct via \Cref{lem:explicitexpanders} an expander graph $G_i$ on $a(i)$ vertices with maximum degree at most $d = O(1)$ and expansion at least 1. Associate the vertices of $G_i$ with the literals $(x_{i,1}, x_{i,2}, \ldots, x_{i,a(i)})$ arbitrarily. For any edge in $G_i$ between $x_{i, j}$ and $x_{i, j'}$, add the following two clauses to $\phi'$:
\begin{align}\label{additional-clauses}
(x_{i,j} \vee \overline{x_{i, j'}}) \wedge (\overline{x_{i,j}} \vee x_{i, j'}).
\end{align}
Note that if $x_{i, j} \neq x_{i, j'}$, we satisfy exactly one of these two clauses, while if $x_{i, j} = x_{i, j'}$ we satisfy both. 
The order of variables $x_{i,j}$ and $x_i$ in $\phi$ is some arbitrary order such that variables in $\phi'$ corresponding to (copies of) variables $x_i$ and $x_j$ in $\phi$ appear in an order consistent with the variables $x_i$ and $x_j$ in $\phi$. By adding dummy random variables, we further guarantee that copies of deterministic/random variables in $\phi$ are likewise deterministic/random in $\phi'$.

The following lemma relates the maximum expected number of satisfiable clauses in $\phi$ and $\phi'$, needed to complete our reduction's analysis.
\begin{lemma}\label{opt-phi-opt-phi'}
Let $E_n := \sum_{ \text{odd } i \le n } 2|E(G_{i})|$. Then, the MAX-SSAT instances $\phi$ and $\phi'$ satisfy
\begin{equation*}
\opton(\phi') = \opton(\phi) + E_n.
\end{equation*}
\end{lemma}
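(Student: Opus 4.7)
The plan is to prove the two inequalities $\opton(\phi') \geq \opton(\phi) + E_n$ and $\opton(\phi') \leq \opton(\phi) + E_n$ separately.

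For $\opton(\phi') \geq \opton(\phi) + E_n$, given an optimal online policy $\pi$ for $\phi$, I define $\pi'$ on $\phi'$ by setting, for each odd $i$, every copy $x_{i,1},\dots,x_{i,a(i)}$ equal to $\pi(x_i)$ and ignoring the dummy random variables. Because the original random variables of $\phi$ appear in $\phi'$ in the same relative order, and the added dummies only preserve alternation, this is a valid online policy. Each clause of $\phi$ satisfied by $\pi$ yields a corresponding ``original'' clause of $\phi'$ satisfied by $\pi'$, and all $E_n$ expander clauses from \eqref{additional-clauses} are satisfied since every copy of each $x_i$ receives the same value. Hence $V(\pi') = V(\pi) + E_n$.

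For the reverse direction, given an optimal policy $\pi'$ for $\phi'$, I construct a randomized online policy $\pi$ for $\phi$ by \emph{majority-vote synchronization}. The key structural observation is that, for each odd $i$, the copies $x_{i,1},\dots,x_{i,a(i)}$ appear as a contiguous block in $\phi'$'s variable order, immediately following the random variable $x_{i-1}$ of $\phi$ (if $i>1$) and preceding $x_{i+1}$, separated only by dummy random variables. Thus, at the time $\pi$ must commit a value to $x_i$, it internally simulates $\pi'$ on $\phi'$ using the randomness already observed in $\phi$ together with freshly-sampled internal coins for the dummies, computes $\pi'(x_{i,1}),\dots,\pi'(x_{i,a(i)})$, and sets $x_i$ to their majority value $m_i$.

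To analyze $V(\pi)$, fix a joint realization $(r,r_d)$ of $\phi$'s randomness and $\pi$'s internal coins, inducing truth assignments $\tau'$ under $\pi'$ and $\tau$ under $\pi$. For each odd $i$, let $S_i$ and $T_i = V(G_i)\setminus S_i$ partition the copies of $x_i$ by their $\tau'$-values; then $\tau$ and $\tau'$ disagree on exactly the $\min(|S_i|,|T_i|)$ minority copies, each appearing in exactly one original clause of $\phi'$. A single literal flip turns at most one satisfied clause unsatisfied, giving
\[
V_\phi(\tau) \;\geq\; V_{\phi'}^{\mathrm{orig}}(\tau') \;-\; \sum_i \min(|S_i|,|T_i|).
\]
Meanwhile $V_{\phi'}^{\mathrm{exp}}(\tau') = E_n - \sum_i |E(S_i,T_i)|$, and the expansion $h(G_i)\geq 1$ from \Cref{lem:explicitexpanders} yields $|E(S_i,T_i)|\geq \min(|S_i|,|T_i|)$. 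Combining gives $V_\phi(\tau) \geq V_{\phi'}(\tau') - E_n$; taking expectations and derandomizing over the internal coins then yields $\opton(\phi) \geq V(\pi') - E_n = \opton(\phi') - E_n$, completing the proof.

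\emph{Main obstacle.} The crux is implementing majority-vote synchronization \emph{online}, since it appears to require simultaneous access to all $a(i)$ copy values. The block structure of $\phi'$'s variable order is precisely what resolves this: the only random variables between copies of $x_i$ are dummies, which are irrelevant to $\phi$'s clauses and can be simulated by $\pi$ with internal coins. The expansion property is then exactly the right tool to match the synchronization cost $\sum_i\min(|S_i|,|T_i|)$ against the expander-clause deficit $E_n - V_{\phi'}^{\mathrm{exp}}(\tau')$ already paid by any desynchronized policy.
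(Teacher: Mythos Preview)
Your proof is correct and follows essentially the same approach as the paper. Both directions use the same ideas: for $\geq$, copy the optimal $\phi$-policy to all copies; for $\leq$, synchronize via majority vote and use $h(G_i)\geq 1$ to show the expander-clause deficit $\sum_i |E(S_i,T_i)|$ dominates the synchronization cost $\sum_i \min(|S_i|,|T_i|)$. The paper phrases the second direction as an iterated exchange argument on $\phi'$ (flip minorities, observe this never hurts), whereas you go directly to a $\phi$-policy and account realization-by-realization; these are equivalent. Your treatment is in fact somewhat more careful than the paper's about why the majority-vote policy is implementable \emph{online} for $\phi$---namely, because the copies of each $x_i$ are separated only by dummy random variables that can be simulated with internal coins.
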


\begin{proof}
We first prove $\opton(\phi')\geq \opton(\phi)+E_n$.
Consider an online algorithm $\mathcal{A}$ which for odd $i$ sets $x_{i,1} = x_{i,2} = \ldots = x_{i, a(i)} = b_i$, where $b_i$ is the assignment for $x_i$ of $\opton$ on $\phi$ given the induced history. This algorithm for $\phi'$ is clearly implementable.
Moreover, this algorithm satisfies each of the $E_n$ clauses of form \eqref{additional-clauses}, and satisfies $\opton(\phi)$ of the original clauses in expectation. Hence $\opton(\phi')\geq \mathcal{A}(\phi') = \opton(\phi) + E_n.$ 

We now prove that $\opton(\phi') \le \opton(\phi) + E_n.$
Assume that for some odd $i$, and some fixed history for all variables before $(x_{i,1}, \ldots, x_{i,a(i)})$, an SSAT algorithm $\mathcal{A}$ sets $(x_{i,1}, x_{i,2}, \ldots, x_{i, a(i)})$ such that they do not all take the same value (with some positive probability). Consider the minimum size subset $S \subseteq \{1, 2, \ldots, a(i)\}$ such that flipping all $\{x_{i, j}\}_{j \in S}$ would result in all variables being set to the same value (so, $1 \le |S| \le a(i)/2$). Since the expansion of $G_i$ is at least 1, we know that $|E(S, V \setminus S)| \ge |S|$; flipping all the $\{x_{i, j}\}_{j \in S}$ would hence let us satisfy at least $|S|$ additional clauses of the form \eqref{additional-clauses}, and possibly satisfy $|S|$ fewer clauses corresponding to clauses in $\phi$ containing $x_i$. Thus, $\mathcal{A}$ would satisfy at least as many clauses in expectation by flipping the sign of $\{x_{i,j} \}_{j\in S}$. 
Repeatedly applying this transformation results in an improved online algorithm $\calA'$ as stated in the previous paragraph, from which we find that $\opton$ satisfies at most $\opton(\phi) \leq \calA'(\phi') \leq \opton(\phi') + E_n$ clauses in expectation. The lemma follows.
\end{proof}

We now show that $E_n$ is bounded from above by a constant times $\opton(\phi)$. 
\begin{observation}\label{Enupperboundobs}
$E_n\leq 12d\cdot \opton(\phi).$
\end{observation}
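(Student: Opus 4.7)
The plan is to bound $E_n$ in terms of the number of clauses of $\phi$, then show that $\opton(\phi)$ is at least a constant fraction of this quantity, by exhibiting a simple oblivious SSAT strategy.

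First I would use the degree bound on the expander graphs $G_i$ from \Cref{lem:explicitexpanders}. Since $G_i$ has $a(i)$ vertices and maximum degree at most $d$, we have $2|E(G_i)| \le d\cdot a(i)$. Summing over odd $i$ gives
\[
E_n \;=\; \sum_{\text{odd } i \le n} 2|E(G_i)| \;\le\; d \cdot \sum_{\text{odd } i \le n} a(i).
\]
The sum $\sum_{\text{odd } i} a(i)$ counts the total number of occurrences of deterministic variables in $\phi$, which is bounded by the total number of literals, which in turn is at most $3m$ where $m$ is the number of clauses of $\phi$ (each clause has at most three literals). Hence $E_n \le 3dm$.

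Next I would lower bound $\opton(\phi)$ by exhibiting a specific (oblivious) strategy. Consider the strategy that assigns each deterministic variable uniformly at random, independently of everything else. Under this strategy, together with the uniformly random assignment of the random variables by nature, every literal is \texttt{True} independently with probability $\tfrac{1}{2}$, so each clause is satisfied with probability at least $\tfrac{1}{2}$ (in fact $1-2^{-k}\ge \tfrac{1}{2}$ for a clause with $k\ge 1$ literals). By linearity of expectation, this oblivious strategy satisfies at least $m/2$ clauses in expectation. Since $\opton(\phi)$ is at least the expected value of any online strategy, $\opton(\phi) \ge m/2$.

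Combining the two bounds yields $E_n \le 3dm \le 6d\cdot \opton(\phi) \le 12d\cdot \opton(\phi)$, as desired. I do not foresee any real obstacle here; the main point is simply noticing that the total edge count of the expanders is controlled by the total deterministic-variable occurrences in $\phi$, and that a random assignment already satisfies a constant fraction of all clauses in any MAX-SAT instance, whether the variables are chosen adversarially (by $\opton$) or by nature.
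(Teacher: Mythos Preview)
Your proof is correct and follows essentially the same approach as the paper: bound $E_n$ via the degree bound on the expanders and the total literal count $\sum_{\text{odd }i} a(i)\le 3m$, then lower bound $\opton(\phi)\ge m/2$ using the random-assignment strategy. In fact your arithmetic is slightly tighter (you obtain $E_n\le 6d\cdot\opton(\phi)$), which of course still yields the stated $12d$ bound.
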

\begin{proof}
Since for each odd $i$, the expander graph $G_i$ contains at most $d$ edges per each of the $a(i)$ occurrences of $i$ in $\phi$, we have that $E_n = \sum_{\textrm{odd }i\leq n} 2|E(G_i)|\leq \sum_{\textrm{odd }i\leq n} 2d\cdot a(i)$.
Next, for $m$ the number of clauses in $\phi$, since $\phi$ is a 3-CNF formula,   $\sum_{\text{odd } i \le n} a(i) \le 3m$. 
Finally, we note that, since 
setting each variable randomly satisfies at least half of the clauses in expectation, 
$ m/2\leq \opton(\phi)$. Combining these observations, we find that
\begin{align}
E_n &= \sum_{\text{odd } i \le n} 2|E(G_i)| \le \sum_{\text{odd } i \le n} d \cdot a(i) \le 6dm \le 12d \cdot \opton(\phi).\qedhere \nonumber
\end{align}
\end{proof}

Given the above, we are now ready to prove \Cref{lem:ssatm}.
\begin{proof}[Proof of \Cref{lem:ssatm}]
Let $\alpha\in (0,1)$ and $k$ be the constants in the statement of \Cref{thm:k-max-ssat-hard}. 
Let $\phi$ be a MAX-SSAT instance as in the statement of that lemma and $\phi'$ be the obtained instance from the reduction of this section, which is polynomial-time, by \Cref{lem:explicitexpanders}.
By construction, no random variable appears negated in any clause, and each variable appears in at most $k'=\max(d+2,k) = O(1)$ clauses.
By \Cref{opt-phi-opt-phi'},  $\opton(\phi') = \opton(\phi) + E_n$. 
Next, we let $Q = \opton(\phi)$, $Q' = E_n$, and $\beta = 12d$, and note that  
$Q'/Q \le \beta$, by 
\Cref{Enupperboundobs}.
Thus, by  \Cref{lem:approxlem}, for  the constant $\alpha' := \left( \frac{\alpha + 12d}{1 + 12d} \right)\in (0,1)$, an $\alpha'$-approximation to $\opton(\phi') = \opton(\phi) + E_n = Q+Q'$ yields an $\alpha$-approximation of $Q=\opton(\phi)$, which is PSPACE-hard, by \Cref{thm:k-max-ssat-hard}. 
\end{proof}

\subsection{Hardness of Algorithms for $\ride$}\label{sec:pspace-reduction}

We are now ready to prove our main theorem about the hardness of $\ride$. Throughout this proof, we will let $k = O(1)$ be the constant in the statement of \Cref{lem:ssatm}.
Denote the variables in an SSAT instance $\phi$ as in \Cref{lem:ssatm} by $(x_1, x_2, \ldots, x_n)$ and the number of clauses of $\phi$ by $m$. Without loss of generality, suppose $n$ is even. 
From $\phi$, we construct a $\ride$ instance $\calI_{\phi}$, with weights $w_{i,t} = w_t$ for each pair $(i,t)\in E$, where we refer to $w_t$ as the weight of ball $t$.
The instance has $2n$ bins, corresponding to the literals $\{x_i, \overline{x_i} \mid i\in [n]\}.$
The instance $\calI_{\phi}$ has $n+m$ balls; we will refer to the first $n$ balls as ``literal balls'' and the final $m$ balls as ``clause balls'' (for reasons that will become clear shortly). For odd $t\leq n$, ball $t$ arrives with probability 1, has weight $1$, and has an edge only to bins $x_{t}$ and $\overline{x_{t}}$. For even $t\leq n$, ball $t$ arrives with probability $1/2$, has weight $1$, and has an edge only to bin $x_{t}$. The last $m$ clause balls $t=n+1,\dots,n+m$ each have weight $\frac{m^4}{2k}$ and arrive with probability $m^{-4}$. The clause ball $t=n+r$ corresponding to clause $C_r$ neighbors only the bins corresponding to literals in $C_r$. (See \Cref{fig:binreduction}.)

\begin{figure}[h] 
    \centering
    \includegraphics{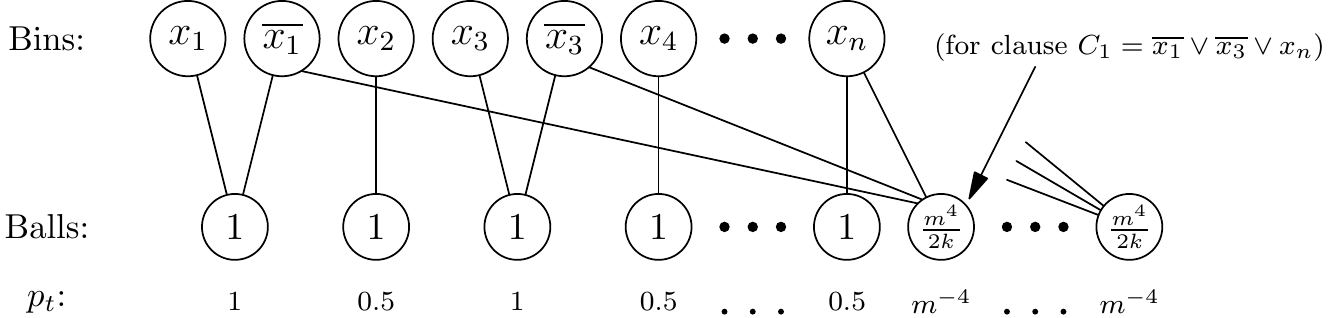}
    \caption{The $\ride$ instance $\calI_{\phi}$}
    \vspace{0.2cm}
     \footnotesize{Bins are labeled by their corresponding literal, while balls are labeled by their weight.}
    \label{fig:binreduction}
\end{figure}

We shall see that $\opton(\calI_\phi)$ and $\opton(\phi)$ are, up to a negligible error term, related by a simple linear relation.
In particular, we will show that 

\begin{align} \label{relation-phi-calIphi}
    \opton(\calI_{\phi}) = 0.75n + \frac{ (1 - m^{-4})^{m-1} }{2k}\cdot \opton(\phi)+ o(1). 
\end{align}

We prove \Cref{relation-phi-calIphi} in the following two lemmas. The first proves that $\opton$ run on $\calI_\phi$ matches all arriving literal balls.

\begin{lemma}\label{match-literal-balls}
    Algorithm $\opton$ matches all arriving literal balls of $\calI_\phi$.
\end{lemma}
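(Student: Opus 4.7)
The plan is to show that at every literal-ball arrival, matching the ball strictly dominates declining, so any optimal online algorithm---and in particular $\opton$---must match. Let $V(\tau, S)$ denote the optimal expected value achievable from time $\tau$ onwards, given that the set of free bins at time $\tau$ is $S$.

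The heart of the argument is a marginal-value estimate: if $b$ is any bin whose unique literal-ball neighbor (if any) has already been processed before time $\tau$, then for every $S$ with $b \notin S$,
\[
V(\tau, S \cup \{b\}) - V(\tau, S) \leq 1/2.
\]
I will prove this by a direct coupling. Fix any optimal online algorithm $A$ for starting state $(\tau, S \cup \{b\})$, and let $A'$ be the algorithm that, starting from $(\tau, S)$ and observing the same arrivals, mimics $A$'s decisions except that it refuses any match that would use $b$. Since $A'$'s set of used bins is always a subset of $A$'s, every action $A$ takes that does not involve $b$ remains feasible for $A'$; hence $A'$ is a valid online algorithm for $(\tau, S)$, and its expected value is lower than that of $A$ by exactly the expected total weight of $A$'s matches using $b$. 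Because the only literal-ball neighbor of $b$ has already been processed before $\tau$, every such match must go to a clause-ball neighbor of $b$; by property (2) of \Cref{lem:ssatm}, each bin has at most $k$ such neighbors, each with weight $m^4/(2k)$ and arrival probability $m^{-4}$. A union bound then yields
\[
V(\tau, S \cup \{b\}) - V(\tau, S) \;\leq\; k \cdot m^{-4} \cdot \frac{m^4}{2k} \;=\; \frac{1}{2}.
\]

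Applying this estimate at each literal-ball arrival finishes the proof. When literal ball $t \leq n$ arrives, fix a neighbor $b$ of ball $t$ (say $b := x_t$). Every earlier ball is a literal ball $t' < t$ whose neighbors lie in $\{x_{t'}, \overline{x_{t'}}\}$, which is disjoint from the neighborhood of ball $t$; and no clause ball has arrived yet since $t \leq n$. Hence $b$ is free at time $t$. Letting $S$ denote the set of free bins, matching ball $t$ to $b$ yields $1 + V(t+1, S \setminus \{b\})$ in expectation, while declining yields $V(t+1, S)$; their difference equals $1 - [V(t+1, S) - V(t+1, S \setminus \{b\})]$. The only literal-ball neighbor of $b$ is ball $t$ itself, which lies strictly before $\tau := t+1$, so the marginal-value estimate caps the bracketed quantity at $1/2$. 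Thus matching strictly dominates declining by at least $1/2 > 0$, and $\opton$ must match every arriving literal ball.

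The main obstacle will be formalizing the coupling cleanly---ensuring $A'$ is genuinely an adaptive online algorithm and that its loss relative to $A$ is captured solely by matches involving $b$---and correctly identifying the invariant (``the literal-ball neighbor of $b$ has already been processed'') under which the marginal value collapses to the clause-phase bound of $1/2$, which itself rests essentially on the bounded-occurrence property supplied by \Cref{lem:ssatm}.
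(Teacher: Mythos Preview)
Your proof is correct and follows essentially the same approach as the paper. Both arguments hinge on the identical quantitative estimate---that a literal bin's marginal value to the remaining (clause) balls is at most $k \cdot m^{-4} \cdot \tfrac{m^4}{2k} = \tfrac{1}{2} < 1$---and conclude that matching a weight-$1$ literal ball strictly dominates declining; the paper phrases this as a one-step exchange argument against $\opton$, while you wrap the same estimate in a value-function/coupling formulation, but the content is the same.
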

\begin{proof}
    Suppose that there is some history $h$ (occurring with probability $q > 0$) after which $\opton$ does not match a literal ball $t$ which arrives; let $\mathcal{A}'$ be the algorithm that follows exactly what $\opton$ does, with the exception that it will match $t$ if $t$ arrives after the history $h$. Then, $$\mathcal{A}'(\calI_{\phi}) - \opton(\calI_{\phi}) \ge q \cdot \left( 1 - k \cdot \frac{m^4}{2k} \cdot m^{-4} \right) = q/2 > 0.$$ Indeed, if the history $h$ occurs, $\mathcal{A}'$ gets a guaranteed profit of 1 from matching $t$ that $\opton$ does not receive. The expected profit $\opton$ gets from having the additional bin available to be potentially matched to clause balls is at most $k \cdot \frac{m^4}{2k} \cdot m^{-4}$, since each literal bin has at most $k$ clause balls adjacent to it, each of which has value $\frac{m^4}{2k}$ and arrives with probability $m^{-4}$. 
    As the above would imply $\calA(\calI_\phi) > \opton(\calI_\phi)$, 
    we conclude that $\opton$ must match each literal ball that arrives. 
\end{proof}

A simple corollary of the above is that $\opton$ gets value of $0.75n$ in expectation from the literal balls it matches.
Moreover, this lemma gives a natural correspondence between $\opton$ on $\calI_\phi$ and algorithms for $\phi$.
The following lemma relies on \Cref{match-literal-balls} to bound the value $\opton$ obtains from the clause balls in terms of the expected number of clauses of $\phi$ satisfied by $\opton$.

\begin{lemma}\label{value-clause-balls}
    Let $B$ be the gain of $\opton$ from clause balls of $\calI_\phi$.
    Then, for some $\delta\in [0,2m^{-1}]$,
    $$\EE[B] = \frac{ (1 - m^{-4})^{m-1} }{2k}\cdot \opton(\phi) + \delta.$$
\end{lemma}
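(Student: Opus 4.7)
The plan is to first argue, in analogy with \Cref{match-literal-balls}, that $\opton$ also matches every arriving clause ball whose neighborhood contains a free bin: the immediate gain from matching is $\tfrac{m^{4}}{2k}$, while the expected value from leaving a bin free for the at most $k$ adjacent later clause balls is at most $k\cdot m^{-4}\cdot\tfrac{m^{4}}{2k}=\tfrac12$. Hence
\[
\EE[B]\;=\;\tfrac{m^{4}}{2k}\sum_{r=1}^{m}\Pr\!\big[\text{ball }n{+}r \text{ arrives and has a free neighbor at time }n{+}r\big].
\]

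Next I would let $\sigma$ be the (random) truth assignment induced by $\opton$'s matching of the literal balls (bin of literal $\ell$ unmatched at time $n$ corresponds to $\ell=\texttt{True}$). Since $\sigma$ is determined by what happens at times $\le n$, it is independent of the $m$ clause-ball arrivals, which are themselves mutually independent. Observe that ``ball $n{+}r$ has a free neighbor at time $n{+}r$'' always \emph{implies} ``$C_r$ is satisfied by $\sigma$'', and on the event that ball $n{+}r$ is the only clause ball to arrive (probability $p:=(1-m^{-4})^{m-1}$) the two events \emph{coincide}. Combined with independence, this yields
\[
m^{-4}\cdot p\cdot\Pr[C_r\text{ sat by }\sigma]\;\le\;\Pr[\text{ball }n{+}r\text{ arrives and has free neighbor}]\;\le\;m^{-4}\cdot\Pr[C_r\text{ sat by }\sigma].
\]
Summing over $r$ and multiplying by $\tfrac{m^{4}}{2k}$, and writing $V:=\EE[\,\#\{r:C_r\text{ sat by }\sigma\}\,]$, gives $\tfrac{p}{2k}V\le\EE[B]\le\tfrac{1}{2k}V$.

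The upper bound on $\delta$ is then easy: any online algorithm for $\calI_\phi$ induces a feasible online strategy for $\phi$, so $V\le \opton(\phi)\le m$, and $1-p\le(m-1)m^{-4}$ gives
\[
\delta\;\le\;\tfrac{1}{2k}V-\tfrac{p}{2k}\opton(\phi)\;\le\;\tfrac{1-p}{2k}\opton(\phi)\;\le\;\tfrac{1}{2km^{2}}\;\le\;\tfrac{2}{m}.
\]
The main obstacle is proving $\delta\ge 0$: the upper bound $V\le\opton(\phi)$ only tightens the right-hand side of the sandwich and does \emph{not} give a useful lower bound on $\EE[B]$ by itself. I would resolve this by exhibiting a competing strategy whose value the optimum must beat. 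Define $\calA_\phi$ on $\calI_\phi$ to replay $\opton_\phi$'s decisions on the literal balls (with the natural correspondence between bin-matched and literal-false) and to greedily match every arriving clause ball with a free neighbor. Both $\opton$ and $\calA_\phi$ extract exactly $0.75n$ from literal balls, so optimality of $\opton$ gives $\EE[B]\ge \EE[B_{\calA_\phi}]$. Re-running the lower half of the two-sided bound with $\calA_\phi$ in place of $\opton$ yields $\EE[B_{\calA_\phi}]\ge\tfrac{p}{2k}\,\EE[\#\text{clauses sat by }\sigma_{\calA_\phi}]=\tfrac{p}{2k}\opton(\phi)$, where the final equality uses that $\sigma_{\calA_\phi}$ is distributed identically to the truth assignment produced by $\opton_\phi$ on $\phi$. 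Chaining inequalities gives $\EE[B]\ge\tfrac{p}{2k}\opton(\phi)$, i.e.\ $\delta\ge 0$, completing the argument.
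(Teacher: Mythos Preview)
Your proof is correct and shares the same skeleton as the paper's---both set up the literal-ball matching as an induced SSAT assignment, and both obtain the lower bound $\delta\ge 0$ by comparing $\opton$ to the strategy that replays $\opton_\phi$ on the literal balls. The genuine difference is in the upper bound. The paper conditions on the number $C$ of clause balls that arrive, computes $\EE[B\mid C=1]$ exactly via the SSAT correspondence, and then crudely bounds the contribution from $C\ge 2$ by $\Pr[C\ge 2]\cdot m\cdot\tfrac{m^4}{2k}\le 2m^{-6}\cdot m^5=2m^{-1}$. You instead use the monotonicity observation that ``bin free at time $n{+}r$'' implies ``bin free at time $n{+}1$'' to get $\Pr[\text{free neighbor at }n{+}r]\le\Pr[C_r\text{ sat by }\sigma]$ directly, which avoids the case split entirely and yields the sharper error $\tfrac{1-p}{2k}\opton(\phi)\le\tfrac{1}{2km^2}$. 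This is cleaner and tighter, though the paper's cruder bound already suffices for the reduction. One small point: your opening exchange argument (that $\opton$ greedily matches any clause ball with a free neighbor) is exactly what the paper sweeps under ``a simple exchange argument shows $\opton(\calI_\phi)\in\calF_\calA$''; you have just made it explicit, and in fact the even simpler swap (matching now costs at most one future match of the \emph{same} value) would do without the $k\cdot m^{-4}\cdot\tfrac{m^4}{2k}=\tfrac12$ calculation.
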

\begin{proof}
By \Cref{match-literal-balls}, $\opton$ matches each arriving literal ball. 
We consider the following natural mapping between MAX-SSAT algorithms $\calA$ on $\phi$ and families of algorithms $\calF_\calA$ which match each literal ball in  $\calI_\phi$.
For odd $t\leq n$, an algorithm $\calA'\in \calF_\calA$ 
matches ball $t$ to bin $\overline{x_t}$ ($x_t$) iff 
algorithm $\calA$ sets $x_t$ to \texttt{True} (\texttt{False}). For even $t\leq n$, if ball $t$ arrives, an algorithm $\calA'\in \calF_\calA$ matches ball $t$ to bin $x_t$; this corresponds to nature setting $x_t = \texttt{False}$. Otherwise, bin $x_t$ is unmatched up to time $m+1$, and we will think of this as nature setting $x_t = \texttt{True}$. (Note that ball $t$ arrives with probability 50\%, so the variables are set to \texttt{True}/\texttt{False} with the correct probability.)
Finally, algorithms $\calA'\in \calF_\calA$ match each arriving clause ball to some available neighboring bin when possible. A simple exchange argument shows that $\opton(\calI_\phi)\in \calF_\calA$ for some algorithm $\calA$.

Let $C$ be the number of clause balls of $\calI_\phi$ that arrive.
Then, with probability $\Pr[C=1]=m \cdot m^{-4} \cdot (1 - m^{-4})^{m-1}$, exactly one such clause ball arrives, equally likely to correspond to any of the $m$ clauses in $\phi$.
On the other hand, a literal $x_t$ (respectively, $\overline{x_t}$) is unmatched by $\calA'\in \calF_\calA$ immediately prior to time $m+1$ iff $\calA(\phi)$ or nature set $x_t$ to \texttt{True} (respectively, \texttt{False}).
We conclude that Algorithm $\calA'\in \calF_{\calA}$ gains $\frac{\calA(\phi)}{m}\cdot \frac{m^4}{2^k}$ expected value from conditioned on a single clause ball arriving.
Thus, the expected gain $\EE[B]$ of $\opton(\calI_\phi)$ from clause balls is at least 
\begin{align}\label{clause-gain-lb}
\EE[B]\geq \EE[B \mid C=1]\cdot \Pr[C=1] =  \frac{(1 - m^{-4})^{m-1}}{2k}\cdot \opton(\phi).
\end{align}

Let $\calA$ be the MAX-SSAT algorithm for which $\opton(\calI_\phi)\in \calF_\calA$.
By the above argument yielding \Cref{clause-gain-lb}, the expected gain of $\opton(\calI_\phi)$ from clause balls conditioned on $C=1$ is precisely
\begin{align}\label{gain-single-clause-ball}
\Pr[B \mid C=1] = \frac{\calA(\phi)}{m} \cdot \frac{m^4}{2k} \le \frac{\opton(\phi)}{m} \cdot \frac{m^4}{2k}.
\end{align}

Next, we note that the probability that multiple clause balls arrive is inverse polynomial in $m$.  \begin{align}\label{prob-two-balls-or-more}
\PP[C \ge 2] = \sum_{t=2}^m \binom{m}{t} m^{-4t} (1 - m^{-4})^{m-t} \le \sum_{t=2}^m m^t \cdot  m^{-4t} \le m^{-6} + m \cdot m^{-9} \leq 2m^{-6}.
\end{align}
On the other hand, conditioned on at multiple clause balls arriving, the expected profit of $\opton$ from clause balls is at most 
\begin{align}\label{opton-given-two-balls-or-more}
\EE[B \mid C\geq 2 ] \le m\cdot  \frac{m^4}{2k} \le m^5.
\end{align}

Combining equations \eqref{gain-single-clause-ball}, \eqref{prob-two-balls-or-more} and \eqref{opton-given-two-balls-or-more}, we find that the expected gain of $\opton(\calI_\phi)$ from matching clause balls is at most
\begin{align*}
\EE[B] & = \EE[B \mid C=1]\cdot \Pr[C=1] + \EE[B \mid C\geq 2]\cdot \Pr[C\geq 2] \\
& \leq \frac{\opton(\phi)}{m} \cdot \frac{m^4}{2k}\cdot m\cdot m^{-4} (1 - m^{-4})^{m-1} + m^5\cdot 2m^{-6} \\
& = \frac{(1 - m^{-4})^{m-1}}{2k}\cdot \opton(\phi) +  2m^{-1}.\qedhere
\end{align*}
\end{proof}



We now conclude the reduction, and obtain the proof of our hardness result. 

\begin{proof}[Proof of \Cref{thm:ridehailpspacehard}]
Let $\alpha\in (0,1)$ be the constant from the statement of \Cref{lem:ssatm} and $\phi$ be a MAX-SSAT instance as in the statement of that lemma. Without loss of generality, we assume that $\phi$ has no pairs of consecutive variables $x_{2k-1}$ and $x_{2k}$ which appear in no clauses. (Else, we remove these variable pairs and relabel the remaining variables while preserving parity of indices. This does not change the clauses, nor does it change the expected number of clauses satisfied by $\opton$.)
Next, let $\calI_\phi$ be the obtained $\ride$ instance from the (clearly polynomial-time) reduction of this section; note furthermore than $\calI_\phi$ has all weights and inverse arrival probabilities bounded above by some polynomial in the size of the input. From \Cref{match-literal-balls}, the expected gain of $\opton(\calI_\phi)$ from literal balls is $0.75n$. Combining this with \Cref{value-clause-balls} we find that for $\gamma :=  \frac{ (1 - m^{-4})^{m-1} }{2k}$ and some $\delta\in [0,2m^{-1}]$,
$$\opton(\calI_\phi) = 0.75n +  \gamma\cdot \opton(\phi)+\delta.$$
Next, since $\phi$ is a 3-CNF formula with at least half its variables appear in at least one clause, the number of variables is at most $n\leq 6m$. Moreover, since setting all variables randomly satisfies at least half of the clauses in expectation, we have $m/2\leq \opton(\phi)$. Combining these two observations, we get 
\begin{equation}\label{eqn:0.75natmost8}
0.75n < n\le 12 \cdot \opton(\phi),
\end{equation}
Next, let 
$Q = \gamma \cdot \opton(\phi) + \delta$, $Q' = 0.75n$, and $\beta = \frac{12}{\gamma}$. Note that $Q'/Q \le \beta$ by \Cref{eqn:0.75natmost8}, and that $\beta = O(1)$, since $k=O(1)$.
Therefore, by  \Cref{lem:approxlem}, for  the constant $\alpha' := ( \frac{\alpha\cdot (\gamma+2m^{-1})/\gamma + \beta}{1 + \beta})$, which is in the range $(0,1)$ for sufficiently large $m$, an $\alpha'$-approximation to $\opton(\calI_\phi) = \opton(\phi) + 0.75n = Q+Q'$ yields an $\alpha\cdot (\gamma+2m^{-1})/\gamma$-approximation of $Q \in [\gamma\cdot \opton(\phi), (\gamma+2m^{-1}) \cdot \opton(\phi)]$. By scaling appropriately, this yields an $\alpha$-approximation to $\opton(\phi)$, which is PSPACE-hard to obtain, by \Cref{lem:ssatm}. The theorem follows.
\end{proof}

\section{Algorithmic Results}\label{sec:algo}
In this section we give an algorithm to approximate the profit of $\opton$, for any joint distributions over edge weights of each ball $t$.

\thmalg*

\paragraph{\underline{An LP Relaxation.}}

Our starting point is a linear program (LP) called LP-Match, which we show upper bounds the gain of any online algorithm for $\ride$. Below, the variables we optimize over are $\{x_{i,t}\}$, which we think of as ``the probability that the online algorithm matches ball $t$ to bin $i$''. Recall that ball $t$ arrives with probability $p_t$.

\begin{align}
\textbf{LP-Match:} \hspace{2em} \qquad \max \enspace \enspace & \sum_{i, t}  w_{i,t} \cdot  x_{i,t}  &&&  \nonumber    \\ 
   \text{s.t.} \enspace \enspace
    & \sum_{t} x_{i,t}     \le 1 & & \text{ for all } i \label{eqn:lpatmost1}
  \\ 
  & \sum_{i} x_{i,t} \le p_{t} & & \text{ for all } t 
 \label{eqn:lpatmostqi} 
 \\ 
& x_{i,t} \hfill  \le p_{t} \cdot \left( 1 - \sum_{t' < t} x_{i,t'} \right) & & \text{ for all } i,t
 \label{eqn:lpthird}
 \\ 
 & 
 x_{i,t}
\ge 0 & & \text{ for all } i,t
\end{align}

Denoting by LP-Match($\calI)$ the optimal value of LP-Match on Instance $\calI$, we have the following.

\begin{lemma}\label{LP-bound}
For any $\ride$ instance $\mathcal{I}$, we have that $$\textrm{\emph{LP-Match}}(\calI)\geq \opton(\calI).$$
\end{lemma}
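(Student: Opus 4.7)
The plan is to fix an arbitrary (possibly randomized) online algorithm $\calA$ and set $x_{i,t}^{\calA} := \PP[\calA \text{ matches edge } (i,t)]$, then verify that this assignment is feasible for LP-Match and that its objective value equals the expected profit of $\calA$. Taking $\calA = \opton$ and observing that the LP maximizes over all feasible $x$ will yield the inequality. The objective identity is immediate from linearity of expectation: $\EE[\calA(\calI)] = \sum_{i,t} w_{i,t} \cdot x_{i,t}^{\calA}$, since the weight collected is exactly $\sum_{i,t} w_{i,t}\cdot \mathbbm{1}[\calA \text{ matches } (i,t)]$.

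For the matching constraints \eqref{eqn:lpatmost1} and \eqref{eqn:lpatmostqi}, I would argue by disjointness: since each bin $i$ is matched to at most one ball throughout the run of $\calA$, the events $\{\calA \text{ matches } (i,t)\}_t$ are pairwise disjoint, so their probabilities sum to at most $1$. Symmetrically, since ball $t$ is matched to at most one bin and only when it arrives, the events $\{\calA \text{ matches } (i,t)\}_i$ are disjoint and all contained in the event that $t$ arrives, giving total probability at most $p_t$.

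The main obstacle---and the whole point of introducing constraint \eqref{eqn:lpthird}---is the independence argument. Let $F_{i,t}$ denote the event that bin $i$ is \emph{free} (unmatched by $\calA$) at the start of step $t$, and let $A_t$ be the event that ball $t$ arrives. The event $F_{i,t}$ is measurable with respect to the randomness of $\calA$ and the arrivals and realizations of balls $t' < t$, which is independent of $A_t$ by the online (nonanticipative) nature of $\calA$ and the mutual independence of the arrivals. The event $\{\calA \text{ matches } (i,t)\}$ is contained in $A_t \cap F_{i,t}$, so
\begin{equation*}
x_{i,t}^{\calA} \;\le\; \PP[A_t \cap F_{i,t}] \;=\; \PP[A_t]\cdot \PP[F_{i,t}] \;=\; p_t \cdot \PP[F_{i,t}].
\end{equation*}
Finally, since bin $i$ is free at the start of step $t$ iff it was not matched in any prior step, and these prior match events are disjoint, $\PP[F_{i,t}] = 1 - \sum_{t' < t} x_{i,t'}^{\calA}$, yielding exactly constraint \eqref{eqn:lpthird}.

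Putting it together, $x^{\calA}$ is feasible and achieves objective $\EE[\calA(\calI)]$, so $\text{LP-Match}(\calI) \ge \EE[\calA(\calI)]$; specializing to $\calA = \opton$ completes the proof. The only subtlety worth flagging is that for a randomized online algorithm we must take $F_{i,t}$ measurable jointly with respect to the algorithm's internal randomness and the environment's randomness up to time $t-1$, but this joint randomness remains independent of $A_t$, so the argument goes through unchanged.
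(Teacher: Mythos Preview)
Your proposal is correct and follows essentially the same approach as the paper's proof: both set $x_{i,t}$ to be the marginal matching probabilities of the optimal online algorithm, verify the three constraint families (bin matching, ball matching, and the online-specific independence constraint \eqref{eqn:lpthird}), and observe that the LP objective equals the expected profit. Your write-up is in fact slightly more detailed—making the independence argument for \eqref{eqn:lpthird} explicit via the event decomposition $\{\calA \text{ matches } (i,t)\}\subseteq A_t\cap F_{i,t}$ and noting the measurability issue for randomized algorithms—but the underlying idea is identical.
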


The above lemma is implied by \cite{torrico2017dynamic}.
For completeness, we provide a proof of this lemma below.

\begin{proof}
Let $x^*_{i,t}$ denote the probability that $\opton$ matches bin $i$ to ball $t$. We note that $x^*$ constitutes a feasible solution for LP-Match because (i) the probability $\opton$ matches a bin $i$ is at most 1, (ii) the probability $\opton$ matches a ball $t$
is at most $p_{t}$ (the probability that $t$ arrives), 
(iii) the probability $\opton$ matches a bin $i$ to a ball $t$ is at most $p_{t}$ (the probability $t$ arrives) times $ 1 - \sum_{t' < t} x_{i,t'}  $ (the probability that $i$ is not matched by time $t$),\footnote{This uses the fact that arrival of $t$ is independent of the online algorithm's previous choices. Note that this constraint is not valid for the probabilities induced by an \emph{offline} algorithm, so our LP does not upper bound $OPT_{off}(\calI)$.} and (iv) these probabilities are non-negative. On the other hand, for this $x^*_{i,t}$, the objective of LP-Match is precisely the expected profit of $\opton$ on this instance, and therefore $\textrm{LP-Match}(\calI)\geq \opton(\calI)$.
\end{proof}

\subsection{The Algorithm}

Given a solution to LP-Match, whose objective upper bounds $\opton$ by \Cref{LP-bound}, a natural approach to approximate $\opton$ is to round this solution online. 
By simple ``integrality gap'' examples (see \Cref{sec:LP-Match-obs}), this is impossible to do perfectly. 
Instead, we show how to do so approximately, by rounding a solution to LP-Match while only incurring a $\nicefrac{1}{2}+c$ multiplicative loss in the rounding, for the constant $c := \finalCconstant$. 

For notational simplicity, assume without loss of generality that an optimal solution to LP-Match to the input instance $\calI$ satisfies all Constraints \eqref{eqn:lpatmostqi} at equality, i.e., $\sum_i x_{i,t} = p_{t}$ for all balls $t$. This can be guaranteed by adding a dummy bin $i_{t}$ for each ball $t$ with $w_{i,t}=0$, and setting $x_{i_t,t} \leftarrow p_{t} - \sum_{i} x_{i,t}$. These dummy edges do not affect the gain of $\opton$, nor that of the online algorithm.

After computing a solution to LP-Match as above, our algorithm proceeds iteratively as follows. For each time $t$, if ball $t$ arrives, we pick a single bin $i$ with probability $x_{i,t}/p_{t}$, and if this is bin is vacant (unmatched), we match $(i,t)$ with some probability $q_{i,t}$. (We sometimes refer to this as $i$ \emph{accepts} $t$.) If this did not result in $t$ being matched, we repeat the process a second time, but this time we match $t$ to its picked bin $i$, provided $i$ is vacant, and the edges until time $t$ have nearly saturated Constraint \eqref{eqn:lpatmost1} for $i$. See \Cref{alg:propose-twice}.

\begin{algorithm}[H]
	\begin{algorithmic}[1] 
		\medskip
		\State solve LP-Match for $\{ x_{i,t} \}$\label{line:lp-solve}
		\State add dummy neighbor for each $t$ so that $\sum_{i} x_{i,t} = p_{t}$
		\State $\mathcal{M} \leftarrow \emptyset$ 
		\For{all balls $t=1,2,\dots$} 
		\State pick a single bin $i$ with probability $\frac{x_{i,t}}{p_{t}}$ \label{line:pick1}
		\If{$i$ is unmatched in $\mathcal{M}$}
		\With{\textbf{probability} $q_{i,t} := \min \left( 1, \frac{\nicefrac{1}{2} + c}{1 - \sum_{t' < t} x_{i, t'} \cdot (\nicefrac{1}{2} + c)  } \right)$} 
		\State $\mathcal{M} \leftarrow \mathcal{M} \cup \{ (i,t)\}$ \label{line:probacceptfirstpick} \label{line:updateMfirstpick} \label{line:acceptfirstproposal} 
		\EndWith
		\EndIf
		\If{$t$ is still unmatched in $\mathcal{M}$}\label{line:second-pick-start}
		\State pick a single bin $i$ with probability $\frac{x_{i,t}}{p_{t}}$ \label{line:pick2}
		\If{$i$ is unmatched in $\mathcal{M}$ \textbf{and} $\sum_{t' < t} x_{i, t'} > \frac{\nicefrac{1}{2}-c}{\nicefrac{1}{2}+c}$}
		\State $\mathcal{M} \leftarrow \mathcal{M} \cup \{ (i,t)\}$ \label{line:acceptsecondproposal}
		\EndIf
		\EndIf
		\EndFor				\State \textbf{Output} $\mathcal{M}$
	\end{algorithmic}
	\caption{Rounding LP-Match Online}\label{alg:propose-twice}
\end{algorithm}

By Constraint \eqref{eqn:lpatmostqi}, Lines \ref{line:pick1} and \ref{line:pick2} are well-defined. Also, by Constraint \eqref{eqn:lpatmost1}, \Cref{line:probacceptfirstpick} is well-defined since $c < 1/2$.
We also note that the algorithm clearly outputs a matching. 

As we shall show, our \Cref{alg:propose-twice} fares well in comparison to $\opton(\calI)$. In particular, we will show the following per-edge guarantees.

\begin{theorem}\label{edge-marginals}
    Each edge $(i,t)\in E$ is matched by \Cref{alg:propose-twice} with probability at least
    $$\PP[(i,t)\in \mathcal{M}] \geq x_{i,t}\cdot (\nicefrac{1}{2} + c).$$
\end{theorem}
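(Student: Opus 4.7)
My plan is to prove the per-edge bound by induction on $t$, with a case analysis according to whether $(i,t)$ is a \emph{light} edge ($\sum_{t'<t} x_{i,t'}\leq \ctwo$, so the $\min$ defining $q_{i,t}$ is inactive and the second-pick guard at Line~\ref{line:acceptsecondproposal} fails) or a \emph{heavy} edge ($\sum_{t'<t} x_{i,t'}> \ctwo$, so $q_{i,t}=1$ and the second pick can match $(i,t)$). Write $f_i(t):=\Pr[i\text{ is unmatched at the start of time }t]$ and $S_i(t):=1-(\nicefrac{1}{2}+c)\sum_{t'<t}x_{i,t'}$. The auxiliary inductive claim I aim to maintain is: $f_i(t)\geq S_i(t)$ whenever $(i,t)$ is light, and $f_i(t)\geq \nicefrac{1}{2}-O(c)$ otherwise. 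The latter bound follows from the former together with the observation that, by the light/heavy threshold plus $\sum_{t'} x_{i,t'}\leq 1$, only $O(c)$ of fractional mass remains once bin $i$ enters the heavy regime, so even if we slightly over-match on heavy edges, $f_i$ drops by at most $O(c)$ relative to its value at the transition.

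For a light edge the second-pick branch never fires, and the match probability is
\[
\Pr[(i,t)\in\calM]\;=\;x_{i,t}\cdot f_i(t)\cdot q_{i,t}\;=\;x_{i,t}\cdot(\nicefrac{1}{2}+c)\cdot\frac{f_i(t)}{S_i(t)}\;\geq\;x_{i,t}(\nicefrac{1}{2}+c),
\]
using $q_{i,t}=(\nicefrac{1}{2}+c)/S_i(t)$ and the inductive bound $f_i(t)\geq S_i(t)$. This factorization crucially uses that ball $t$'s arrival and first pick are independent of the previous matching history---the structural property encoded by constraint \eqref{eqn:lpthird} of LP-Match.

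For a heavy edge, first-pick alone yields only $x_{i,t} f_i(t)$, which may fall short of the target; the second pick contributes $m^{(2)}_{i,t}=x_{i,t}\cdot\Pr[F_i\cap E_1']$, where $F_i$ is the event ``$i$ free at time $t$'' and $E_1'$ is the event ``ball $t$'s first pick fails to match'' (conditioned on $t$ arriving). Decomposing by the first-pick target $j$ and the independent coin yields the identity
\[
\Pr[F_i\cap E_1']\;=\;f_i(t)-\sum_j (x_{j,t}/p_t)\,q_{j,t}\,\Pr[F_i\cap F_j].
\]
Using the negative-association bound $\Pr[F_i\cap F_j]\le f_i(t)f_j(t)$ for $j\ne i$ and the trivial $\Pr[F_i\cap F_i]=f_i(t)$ for $j=i$ gives
\[
\Pr[F_i\cap E_1']\;\geq\; f_i(t)\bigl(\Pr[E_1']-(x_{i,t}/p_t)(1-f_i(t))\bigr).
\]
For a heavy edge, constraint \eqref{eqn:lpthird} now bites: $x_{i,t}/p_t\le 1-\sum_{t'<t}x_{i,t'}\le 2c/(\nicefrac{1}{2}+c)=O(c)$, so the subtracted term is $O(c)$. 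Combined with $\Pr[E_1']\geq \nicefrac{1}{2}-O(c)$ (from summing first-pick successes across $j$) and $f_i(t)\geq \nicefrac{1}{2}-O(c)$, a direct calculation gives $x_{i,t}\bigl(f_i(t)+\Pr[F_i\cap E_1']\bigr)\geq x_{i,t}(\nicefrac{1}{2}+c)$ for $c=\finalCconstant$, with comfortable slack.

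The main obstacle is establishing the pairwise NA bound $\Pr[F_i\cap F_j]\le f_i(t)f_j(t)$. The $\{F_i\}$ are not \emph{a priori} NA: within a single time step a second pick can match a different bin than the one targeted by the first pick, coupling the two bins. However, as sketched in \Cref{sec:techniques}, the contribution of second picks to the total match mass on any bin $i$ is only $O(c)$, since only heavy edges---whose residual LP mass on each bin is $O(c)$---can be matched via second pick. I expect the technical heart of the proof to be a careful decomposition $F_i=1-M_i^{(1)}-M_i^{(2)}$, where the first-pick indicators $M_i^{(1)}$ across bins are NA at each time by the 0--1 principle (\Cref{0-1-NA}) combined across times by the closure properties of \Cref{NA-closure}, and the second-pick contribution $M_i^{(2)}$ is small enough in $L^1$ that any positive correlation it introduces can be absorbed into the slack built into the bounds above.
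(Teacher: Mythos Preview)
Your overall strategy coincides with the paper's: the early/late (your light/heavy) case split, the exact $(\nicefrac{1}{2}+c)x_{i,t}$ for early edges via $q_{i,t}$, the use of Constraint~\eqref{eqn:lpthird} to cap $x_{i,t}/p_t\le 4c$ on late edges, and the plan to bound $\Pr[F_i\cap F_j]$ by an NA argument on a ``main'' piece plus an $O(c)$ correction. Your direct computation of $\Pr[F_i\cap E_1']$ is a slightly cleaner packaging than the paper's conditioning on $i_1\neq i$, but the underlying content is the same.

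There is, however, a real gap in the NA step as you stated it. You propose the decomposition $F_i=1-M_i^{(1)}-M_i^{(2)}$ and claim the first-pick occupancy indicators $\{M_i^{(1)}\}_i$ are NA via the 0--1 principle plus closure. But $M_i^{(1)}$ is \emph{not} a monotone function of disjoint subsets of the primitive variables $\{X_{i,t'},Y_{i,t'}\}$: the event ``$i$ is matched via a first pick'' requires ``$i$ free at the pick time,'' and freeness already depends on earlier \emph{second} picks, which in turn depend on the occupancy of \emph{other} bins. So the closure property of \Cref{NA-closure} does not apply to $M_i^{(1)}$ as written.

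The paper fixes this by decomposing along a different axis: early-edge versus late-edge matches, $O_{i,t}=O^E_{i,t}+O^L_{i,t}$. The point is that while bin $i$ is still in its early regime, the second-pick guard always fails for $i$, so no second pick can touch $i$; hence $O^E_{i,t}=\bigvee_{t'\le t_i}(X_{i,t'}\wedge Y_{i,t'})$ depends only on bin-$i$'s own first-pick and accept variables, and NA follows cleanly (this is \Cref{requisite-NA-lemma}). The residual $O^L_{i,t}$ has mass at most $4c$ for the same reason your $M_i^{(2)}$ does, giving $\Cov(V_{i,t},V_{j,t})\le 12c$ (\Cref{nearly-neg-corr}). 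Your intuition that the correction is $O(c)$ and absorbed by slack is exactly right; what needs to change is which piece you claim is NA. A minor secondary point: your light-edge invariant should be stated as the equality $f_i(t)=S_i(t)$, not $\ge$; the inequality is not self-sustaining under induction (if $f_i(t')>S_i(t')$ strictly, you over-match and drive $f_i(t)<S_i(t)$), whereas equality propagates cleanly, as in \Cref{first-pick-marginals}.
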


 \Cref{edge-marginals} implies that our algorithm is a polynomial-time $0.51$-approximation of the optimal online algorithm, thus proving \Cref{thm:alg}.
%
\begin{proof}[Proof of \Cref{thm:alg}]
    All steps of \Cref{alg:propose-twice}, including solving the polynomially-sized LP in \Cref{line:lp-solve}, can be implemented in polynomial time. 
    The approximation ratio follows directly from linearity of expectation, together with \Cref{LP-bound} and \Cref{edge-marginals}.
\end{proof}

The remainder of this section is dedicated to proving \Cref{edge-marginals}. To this end, we consider two events for edge $(i,t)$ being matched---depending on whether it was matched as a first pick or second pick, in \Cref{line:acceptfirstproposal} or \Cref{line:acceptsecondproposal}, respectively.
We bound the probability of an edge being matched either as a first pick or as a second pick in the following sections.

\subsection{Analysis: First Pick}\label{sec:first-pick} 

In this section we bound the probability of an edge being matched as a first pick. That is, the probability that edge $(i,t)$ is added to $\calM$ in \Cref{line:acceptfirstproposal}. We start with the following useful definition.

\begin{definition}
Ball $t$ is \emph{early} for bin $i$ if $\sum_{t'<t} x_{i,t'} \leq \frac{\nicefrac{1}{2}-c}{\nicefrac{1}{2}+c}$. Otherwise, it is \emph{late}. Edge $(i,t)$ is early (late) if $t$ is early (late) for $i$. We use $E_i$ and $L_i$ to denote the early and late balls for $i$, respectively.

\end{definition}

Intuitively, a ball is late for bin $i$ if most balls $t'$ (weighted by $x_{i,t'}$-value) precede $t$. Note that the early/late distinction determines whether or not the probability $q_{i,t}$ in \Cref{line:probacceptfirstpick} is 1. In particular, this probability is less than 1 only if $(i,t)$ is early, and equal to 1 when $(i,t)$ is late. We will use this observation frequently in the subsequent analysis.

For every $(i,t)$, we let $V_{i,t}$ be an indicator random variable for the event that bin $i$ is \emph{vacant} (i.e., unmatched) at time $t$. We additionally let $\mathcal{M}_1 \subseteq \mathcal{M}$ denote the edges in $\mathcal{M}$ added as a result of a bin $i$ accepting a ball's first pick (i.e., in Line \ref{line:acceptfirstproposal}), and $\mathcal{M}_2 \subseteq \mathcal{M}$ denote the edges in $\mathcal{M}$ added as a result of a bin $i$ accepting a ball's second pick (i.e., in Line \ref{line:acceptsecondproposal}). Note that  $\mathcal{M} = \mathcal{M}_1 \sqcup \mathcal{M}_2$. 

The next lemma bounds the probability of an edge $(i,t)$ being matched as a first pick (in \Cref{line:acceptfirstproposal}).

\begin{lemma}\label{first-pick-marginals}
If edge $(i,t)\in E$ is early, then
    $$\PP[(i,t)\in \mathcal{M}_1] =  x_{i,t}\cdot (\nicefrac{1}{2}+c).$$
In addition, for any edge $(i,t)\in E$,
    $$x_{i,t}\cdot (\nicefrac{1}{2}-3c)\leq \PP[(i,t)\in \mathcal{M}_1] \leq x_{i,t}\cdot (\nicefrac{1}{2}+c).$$
    
\end{lemma}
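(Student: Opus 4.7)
The plan is to compute $\Pr[(i,t) \in \calM_1]$ exactly by decomposing into independent events, and then to invoke a joint induction (completed together with the second-pick analysis) that gives an exact expression for $\Pr[V_{i,t}=1]$. Since $V_{i,t}$ is determined entirely by the ball arrivals and algorithmic choices strictly before time $t$, it is independent both of whether ball $t$ arrives and of $t$'s first-pick choice conditional on arrival. So the event $\{(i,t)\in \calM_1\}$ factors as ``$t$ arrives'' (probability $p_t$), ``$t$'s first pick is $i$'' (conditional probability $x_{i,t}/p_t$), ``$i$ is vacant at time $t$'' (probability $\Pr[V_{i,t}=1]$), and ``$i$ accepts'' (probability $q_{i,t}$), yielding
\[
\Pr[(i,t)\in \calM_1] \;=\; x_{i,t}\cdot q_{i,t}\cdot \Pr[V_{i,t}=1].
\]

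For $\Pr[V_{i,t}=1]$, the plan is to argue by joint induction on $t$ that $\Pr[(i,t')\in \calM] = x_{i,t'}\cdot (\nicefrac{1}{2}+c)$ for every bin $i$ and every $t'<t$; this is a sharpened form of \Cref{edge-marginals} and is exactly what the algorithm is tuned for, with the second-pick contribution designed to make up the deficit of the first-pick term on late edges. Since the events $\{(i,t')\in \calM\}$ are disjoint across $t'$, this hypothesis gives $\Pr[V_{i,t}=1] = 1 - \sum_{t'<t} x_{i,t'}(\nicefrac{1}{2}+c)$.

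With this vacancy formula in hand, the rest is a direct case analysis driven by the $\min$ in the definition of $q_{i,t}$. If $(i,t)$ is early, then $\sum_{t'<t} x_{i,t'}(\nicefrac{1}{2}+c) \le \nicefrac{1}{2}-c$, so the $\min$ is attained by its second argument, $q_{i,t} = (\nicefrac{1}{2}+c)/(1-\sum_{t'<t} x_{i,t'}(\nicefrac{1}{2}+c))$, and substitution gives $\Pr[(i,t)\in \calM_1] = x_{i,t}(\nicefrac{1}{2}+c)$, which is the first claim. If $(i,t)$ is late, then $q_{i,t}=1$ and $\Pr[(i,t)\in \calM_1] = x_{i,t}(1 - \sum_{t'<t} x_{i,t'}(\nicefrac{1}{2}+c))$; the late condition $\sum_{t'<t} x_{i,t'}(\nicefrac{1}{2}+c) > \nicefrac{1}{2}-c$ gives the upper bound $\Pr[(i,t)\in \calM_1] < x_{i,t}(\nicefrac{1}{2}+c)$, while Constraint \eqref{eqn:lpatmost1}, which guarantees $\sum_{t'<t} x_{i,t'}\le 1$, gives the lower bound $\Pr[(i,t)\in \calM_1] \ge x_{i,t}(\nicefrac{1}{2}-c) \ge x_{i,t}(\nicefrac{1}{2}-3c)$ with slack.

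The main obstacle is that the equality $\Pr[(i,t')\in \calM] = x_{i,t'}(\nicefrac{1}{2}+c)$ used in the vacancy computation is strictly stronger than the lower bound asserted by \Cref{edge-marginals} and therefore cannot be invoked from that theorem alone; it must be established as part of a joint induction paired with the second-pick analysis, which for late $(i,t')$ has to compute $\Pr[(i,t')\in \calM_2]$ exactly. That second-pick probability entangles the outcomes of $t'$'s two picks with the vacancy of $i$, and handling those correlations---through negative-association arguments together with the LP constraint \eqref{eqn:lpthird}---is the real work of the next subsection rather than of this lemma.
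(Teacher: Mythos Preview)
Your argument rests on the claimed equality $\Pr[(i,t')\in \calM] = x_{i,t'}(\nicefrac{1}{2}+c)$ for every $t'<t$, which you propose to establish by a joint induction together with the second-pick analysis. This equality is false for the algorithm as written. The second pick (\Cref{line:acceptsecondproposal}) matches a late edge unconditionally whenever the picked bin is vacant; there is no dampening analogous to $q_{i,t}$. Consequently, for a late edge $(i,t')$ the total $\Pr[(i,t')\in \calM_1]+\Pr[(i,t')\in \calM_2]$ will in general \emph{exceed} $x_{i,t'}(\nicefrac{1}{2}+c)$---\Cref{second-pick-marginals} only gives $\Pr[(i,t')\in \calM_2]\ge 4c\cdot x_{i,t'}$, not an equality, and nothing in the algorithm is calibrated to hit equality. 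Hence your vacancy formula $\Pr[V_{i,t}]=1-\sum_{t'<t}x_{i,t'}(\nicefrac{1}{2}+c)$ is only an upper bound, and your late-edge lower bound $\Pr[(i,t)\in\calM_1]\ge x_{i,t}(\nicefrac{1}{2}-c)$ does not follow.

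The paper's proof avoids any joint induction with the second-pick lemma and is entirely self-contained. For the upper bound on $\Pr[V_{i,t}]$ it uses only the first-pick marginals on \emph{early} predecessors (for which one does have equality by induction, since an early edge has only early predecessors and early predecessors cannot be matched as second picks). For the lower bound on $\Pr[V_{i,t}]$ it crudely bounds each late second-pick term by $\Pr[(i,t')\in\calM_2]\le x_{i,t'}\cdot\Pr[V_{i,t'}]\le x_{i,t'}(\nicefrac{1}{2}+c)$, the last step invoking the already-established upper bound on $\Pr[V_{i,t'}]$ for late $t'$. This is precisely the source of the $\nicefrac{1}{2}-3c$ rather than your $\nicefrac{1}{2}-c$: the extra $2c$ is the price of possible over-matching of late predecessors via second picks, and it is what the argument actually yields, not slack you can discard.
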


\begin{proof}
Fix $i$. We prove by strong induction that these bounds hold for all edges $(i,t')$ with $t' < t$. The base case, for $t=1$, is vacuously true. Assume the claim holds for all $t' < t$; we will prove it holds for $t$ as well. 

The event $(i,t) \in \mathcal{M}_1$ requires that ball $t$ arrives and bin $i$ is picked in Line \ref{line:pick1}, that bin $i$ is vacant at time $t$, and that bin $i$ accepts the offer. Note that $i$ being vacant at time $t$ is independent from the arrival of $t$, and the first pick of $t$. Therefore,
\begin{equation}\label{eqn:probitinM1}
\PP[(i,t) \in \mathcal{M}_1] = x_{i,t} \cdot \PP[V_{i,t}] \cdot q_{i,t}.
\end{equation}
For this reason, we turn our attention to bounding the probability of $i$ being vacant at time $t$, 
\begin{equation}\label{eqn:probVit}
\PP[V_{i,t}] = 1 - \sum_{t' < t} \PP[(i,t') \in \mathcal{M}] = 1 - \sum_{t' < t} \PP[(i,t') \in \mathcal{M}_1] - \sum_{t' < t} \PP[(i,t') \in \mathcal{M}_2].
\end{equation} 
First, the inductive hypothesis and the definition of $x_{i,t}$ imply the following upper bound on $\PP[V_{i,t}]$.
\begin{align}\label{Prvit-UB}
    \PP[V_{i,t}] \le 1 - \sum_{t' < t, \atop t' \in E_i} \PP[(i,t') \in \mathcal{M}_1] = 1 - \sum_{t' < t, \atop t' \in E_i} x_{i,t'} \cdot (\nicefrac{1}{2} + c).
\end{align}
If $(i,t)$ is early, this bound is tight because $(i,t')$ is early for any $t' < t$; hence, for early $(i,t)$ we have that $\PP[V_{i,t}] = 1 - \sum_{t' < t} x_{i,t'} \cdot (\nicefrac{1}{2} + c)$. Recalling that $q_{i,t} = \frac{\nicefrac{1}{2} + c}{1 - \sum_{t' < t} x_{i, t'} \cdot (\nicefrac{1}{2} + c)  }$ for early $(i,t)$, \Cref{eqn:probitinM1} then implies that $\PP[(i,t) \in \mathcal{M}_1] = x_{i,t} \cdot (\nicefrac{1}{2} + c)$ for early edges $(i,t)$. 

If $(i,t)$ is late, then $\sum_{t' < t, t' \in E_i} x_{i,t'} = \sum_{t' \in E_i} x_{i,t'} \ge \frac{\nicefrac{1}{2} - c}{\nicefrac{1}{2} + c}$. Hence, by \Cref{Prvit-UB} we have that 
\begin{align}\label{vacancy-ub}
\PP[V_{i,t}] \le 1 - \left( \frac{\nicefrac{1}{2} - c}{\nicefrac{1}{2} + c} \right) \cdot (\nicefrac{1}{2} + c) = \nicefrac{1}{2} + c.
\end{align}
Again, \Cref{eqn:probitinM1} then implies that $\PP[(i,t) \in \mathcal{M}_1] \le x_{i,t} \cdot (\nicefrac{1}{2} + c)$ for late edges $(i,t)$. 

Finally, we lower bound $\PP[(i,t) \in \mathcal{M}_1]$ for late $(i,t)$. To do so, we lower bound $\PP[V_{i,t}]$; here, our analysis must account for the fact that late edges can be matched in either $\mathcal{M}_1$ or $\mathcal{M}_2$. Hence, we first note that for any $t'<t$ that is late for $i$, we have, similarly to \Cref{Prvit-UB} that the probability of edge $(i,t')$ being matched as a second pick is at most 
\begin{equation}\label{eqn:it'inM2upperbound}
\Pr[(i,t')\in \mathcal{M}_2] \leq x_{i,t'} \cdot \Pr[V_{i,t'}] \leq x_{i,t'}\cdot ( \nicefrac{1}{2} + c).
\end{equation} 
Now, combining equations \eqref{eqn:probVit} and \eqref{eqn:it'inM2upperbound}, we lower bound $\PP[V_{i,t}]$ as follows:
\begin{align*}
    \PP[V_{i,t}] 
    &\ge  1 - \sum_{t' < t} x_{i,t'} \cdot (\nicefrac{1}{2} + c) - \sum_{t' < t, \atop t' \in L_i} x_{i,t'}\cdot(\nicefrac{1}{2}+c)
    \ge 1 - (\nicefrac{1}{2} + c) - \left( 1 - \ctwo \right)\cdot  (\nicefrac{1}{2} + c),
    \end{align*}
which simplifies to 
\begin{align}\label{vacancy-lb}
\PP[V_{i,t}] \ge \nicefrac{1}{2} - 3c.
\end{align}
Again, \Cref{eqn:probitinM1} then implies that $\Pr[(i,t)\in \calM_1]\geq x_{i,t}\cdot (\nicefrac{1}{2}-3c)$.
\end{proof}

The proof of \Cref{first-pick-marginals} yields upper and lower bounds on $\PP[V_{i,t}]$ (equations \eqref{Prvit-UB}, \eqref{vacancy-ub} and \eqref{vacancy-lb}), which will prove useful later. For convenience, we extract these bounds in the following corollary.

\begin{corollary}\label{VitBounds}
For any edge $(i,t)$, we have that $\PP[V_{i,t}] \ge \nicefrac{1}{2} - 3c$. For any late $(i,t)$, we have that $\PP[V_{i,t}] \le \nicefrac{1}{2} +c$.
For any early $(i,t)$, we have that $\PP[V_{i,t}] = 1 - \sum_{t' < t} x_{i,t'} \cdot (\nicefrac{1}{2} + c).$
\end{corollary}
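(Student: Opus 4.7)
My plan is to observe that all three bounds claimed in the corollary are in fact derived explicitly (though not named as such) as intermediate steps in the proof of \Cref{first-pick-marginals}, so the proof amounts to extracting those steps. The inductive argument there proceeds by first bounding $\PP[V_{i,t}]$ in each of the three cases, and then multiplying by $x_{i,t} \cdot q_{i,t}$ and the relevant probabilities to conclude. Since the corollary only asks for the vacancy bounds themselves, I would simply re-present the three calculations in isolation.

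First I would handle the early case. The key identity, derived from \eqref{eqn:probVit} and the inductive claim $\PP[(i,t')\in \mathcal{M}_1] = x_{i,t'}(\nicefrac{1}{2}+c)$ for all earlier early edges $(i,t')$, is that if $(i,t)$ is early then every $(i,t')$ with $t'<t$ is also early, so $\mathcal{M}_2$ contributes nothing to \eqref{eqn:probVit} for bin $i$, giving exactly $\PP[V_{i,t}] = 1 - \sum_{t'<t} x_{i,t'}(\nicefrac{1}{2}+c)$.

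Next, for the late upper bound, I would use the same identity but now note that since $(i,t)$ is late we have $\sum_{t'\in E_i} x_{i,t'} \ge \frac{\nicefrac{1}{2}-c}{\nicefrac{1}{2}+c}$. Restricting the sum to $t' \in E_i$ (discarding contributions from $t' \in L_i$, which only reduces $\PP[V_{i,t}]$) gives the bound $\PP[V_{i,t}] \le 1 - \frac{\nicefrac{1}{2}-c}{\nicefrac{1}{2}+c}\cdot(\nicefrac{1}{2}+c) = \nicefrac{1}{2}+c$.

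Finally, for the universal lower bound, I would combine the trivial upper bound $\PP[(i,t')\in \mathcal{M}_1] \le x_{i,t'}(\nicefrac{1}{2}+c)$ with the upper bound on $\PP[(i,t')\in \mathcal{M}_2]$ for late $t'$ obtained from the late vacancy upper bound (which gives $\PP[(i,t')\in \mathcal{M}_2] \le x_{i,t'}(\nicefrac{1}{2}+c)$), and then apply $\sum_{t'\in L_i} x_{i,t'} \le 1 - \frac{\nicefrac{1}{2}-c}{\nicefrac{1}{2}+c}$ together with $\sum_{t'<t} x_{i,t'} \le 1$. Plugging into \eqref{eqn:probVit} yields $\PP[V_{i,t}] \ge 1 - (\nicefrac{1}{2}+c) - (1 - \frac{\nicefrac{1}{2}-c}{\nicefrac{1}{2}+c})(\nicefrac{1}{2}+c) = \nicefrac{1}{2} - 3c$. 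There is no real obstacle here beyond bookkeeping, since the work was already done inside \Cref{first-pick-marginals}; the corollary is essentially a labelling convenience for use in the analysis of the second pick.
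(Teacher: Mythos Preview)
Your proposal is correct and takes essentially the same approach as the paper: the corollary is simply a relabeling of the vacancy bounds \eqref{Prvit-UB}, \eqref{vacancy-ub}, and \eqref{vacancy-lb} already established inside the inductive proof of \Cref{first-pick-marginals}, and you have accurately reproduced those three derivations.
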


Given \Cref{first-pick-marginals}, in order to prove \Cref{edge-marginals}, we wish to prove that the second attempt of $t$ to match will ensure late edges $(i,t)$ a probability of at least $x_{i,t}\cdot 4c$ of being matched. This is the meat of our analysis, and the next section is dedicated to its proof.

\subsection{Analysis: Second Pick}\label{sec:second-pick} 

In this section we prove that the second pick of ball $t$, in Lines \ref{line:second-pick-start}-\ref{line:acceptsecondproposal}, does indeed increase the probability of late edges $(i,t)$ to be matched.
In particular, we prove the following theorem.

\begin{theorem}\label{second-pick-marginals}
    For any late edge $(i,t)\in E$,
    $$\PP[(i,t)\in \mathcal{M}_2] \geq  x_{i,t}\cdot 4c.$$
\end{theorem}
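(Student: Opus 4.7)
The plan is to reduce $\Pr[(i,t)\in\calM_2]$ to a clean joint-probability expression and then lower-bound that expression using the new LP constraint~\eqref{eqn:lpthird} together with the near-negative-correlation of bin vacancies at time~$t$. Since $t$'s second pick is drawn independently of the history and of the first pick, and since for late $(i,t)$ the algorithm's second-pick acceptance threshold $\sum_{t'<t}x_{i,t'}>\ctwo$ is automatic, I can factor
\[
\Pr[(i,t)\in\calM_2] \;=\; x_{i,t}\cdot\Pr[V_{i,t}\cap U_t\mid A_t],
\]
where $A_t$ is the arrival of $t$ and $U_t$ is the event that $t$'s first pick fails; so it suffices to show $\Pr[V_{i,t}\cap U_t\mid A_t]\geq 4c$. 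Conditioning on the first pick and the acceptance Bernoulli yields
\[
\Pr[V_{i,t}\cap U_t\mid A_t] \;=\; \Pr[V_{i,t}] \;-\; \tfrac{1}{p_t}\sum_j x_{j,t}\,q_{j,t}\,\Pr[V_{i,t}\cap V_{j,t}].
\]

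I then split the sum into $j=i$ and $j\neq i$. Since $(i,t)$ is late, $q_{i,t}=1$, so the self term equals $(x_{i,t}/p_t)\Pr[V_{i,t}]$; here the new LP constraint~\eqref{eqn:lpthird} pays off, giving $x_{i,t}/p_t\leq 1-\sum_{t'<t}x_{i,t'}<1-\ctwo=\tfrac{2c}{1/2+c}$, so bin $i$ is picked rarely by~$t$ in the first place. For the $j\neq i$ sum I write $\Pr[V_{i,t}\cap V_{j,t}]=\Pr[V_{i,t}]\Pr[V_{j,t}]+\Cov(V_{i,t},V_{j,t})$. The ``independent'' piece is controlled using $(1/p_t)\sum_j x_{j,t}q_{j,t}\Pr[V_{j,t}] = (1/p_t)\sum_j\Pr[(j,t)\in\calM_1]\leq \nicefrac{1}{2}+c$ from \Cref{first-pick-marginals}, and combining with $\Pr[V_{i,t}]\geq \nicefrac{1}{2}-3c$ (\Cref{VitBounds}) yields a quantity comfortably above $4c$ for $c=\finalCconstant$, leaving only the covariance sum to tame.

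To bound the covariance, I decompose the ``matched'' indicator as $\bar V_{k,t}:=1-V_{k,t}=X_k^1+X_k^2$, where $X_k^a$ counts matches of bin~$k$ by time~$t$ through first/second picks respectively (these are $\{0,1\}$-valued with $X_k^1X_k^2=0$). By bilinearity, $\Cov(V_{i,t},V_{j,t})$ splits into four covariances. The dominant one, $\Cov(X_i^1,X_j^1)$, I aim to show is nonpositive via negative association: in the hypothetical first-picks-only variant of the algorithm, the ``pick-and-accept'' indicators $N_{k,t'}=\mathbbm{1}[A_{t'},\,P_{t'}^{(1)}=k,\,R_{t'}^{(1)}\leq q_{k,t'}]$ are NA across $k$ for each fixed $t'$ by the $0$-$1$ principle, independent across $t'$, and $X_k^1=\max_{t'<t}N_{k,t'}$ is an increasing function on disjoint subsets, so \Cref{NA-closure} gives $\Cov(X_i^1,X_j^1)\leq 0$. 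Each of the remaining three covariances contains a factor of $X_k^2$, whose marginal is small because only late edges can enter $\calM_2$ and for late $(k,t')$ we have $\Pr[V_{k,t'}]\leq \nicefrac{1}{2}+c$; these terms contribute only $O(c)$.

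The main obstacle will be extending this clean NA argument from the first-picks-only process to the full algorithm, since second picks alter later vacancies and thereby couple later first-pick matches across bins; the careful bookkeeping that shows this perturbation still contributes only $O(c)$ to the covariance, and hence preserves the lower bound $\Pr[V_{i,t}\cap U_t\mid A_t]\geq 4c$, is the delicate heart of the analysis.
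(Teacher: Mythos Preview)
Your overall plan—factoring $\Pr[(i,t)\in\calM_2]=x_{i,t}\cdot\Pr[V_{i,t}\cap U_t\mid A_t]$, expanding by conditioning on the first pick, and separating the $j=i$ term (bounded via Constraint~\eqref{eqn:lpthird}) from the $j\neq i$ terms (reduced to a bound $\Cov(V_{i,t},V_{j,t})=O(c)$)—is correct and is exactly the structure of the paper's argument. (The paper phrases the $j=i$ split as conditioning on the event $i_1\neq i$, \Cref{i1!=i} and \Cref{Oitconditionedonrejected}, but the two formulations are equivalent.)

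The gap is precisely the one you flag as ``the main obstacle,'' and you do not close it. Your decomposition $1-V_{k,t}=X_k^1+X_k^2$ by first/second pick does \emph{not} yield NA for $\{X_k^1\}_k$ in the full algorithm: a second-pick match of bin $k$ at some late time $s$ can pre-empt a later event $N_{k,t'}=1$ with $t'>s$, so $X_k^1$ is not the increasing function $\bigvee_{t'<t}N_{k,t'}$ of NA variables that you wrote down for the hypothetical variant. The paper sidesteps this with a \emph{different} decomposition, by early versus late \emph{edge} rather than by first versus second pick: write $1-V_{k,t}=O^E_{k,t}+O^L_{k,t}$, where $O^E_{k,t}$ indicates that $k$ was matched along an early edge. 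Because second picks only ever match \emph{late} edges, bin $k$ cannot be touched by any second pick while $t'\leq t_k$ (the last early time for $k$); hence $O^E_{k,t}=\bigvee_{t'\leq t_k}N_{k,t'}$ \emph{is} an increasing function of NA variables on disjoint index sets, and $\{O^E_{k,t}\}_k$ are NA (\Cref{requisite-NA-lemma}), giving $\Cov(O^E_{i,t},O^E_{j,t})\leq 0$ with no perturbation argument needed. The remaining piece satisfies $\Pr[O^L_{k,t}]\leq 4c$ (from $\Pr[O^E_{k,t}]\geq \nicefrac{1}{2}-c$ and $\Pr[V_{k,t}]\geq \nicefrac{1}{2}-3c$), which immediately gives $\Cov(V_{i,t},V_{j,t})\leq 12c$ (\Cref{nearly-neg-corr}). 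In short, the missing idea is to split by early/late edge rather than by first/second pick; this makes the NA step exact instead of leaving a coupling you still have to control.
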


Before proving the above theorem, we provide some useful intuition and outline the challenges the proof of \Cref{second-pick-marginals} needs to overcome.

By \Cref{first-pick-marginals}, the probability of a late edge $(i,t)$ being matched as a first pick is at least 
\begin{align}\label{prob-it-first-pick}
    \Pr[(i,t)\in \mathcal{M}_1] \geq x_{i,t}\cdot (\nicefrac{1}{2}-3c).
\end{align}
Moreover, by the same lemma, each edge $(i,t)\in E$ (whether early or late) is matched as a first pick with probability at most $\Pr[(i,t)\in \mathcal{M}_1] \leq x_{i,t}\cdot (\nicefrac{1}{2}+c)$. 
Denote by $A_t$ the event that $t$ arrives and denote by $U_1(t)$ the event that $t$ is unmatched after its first pick of $i_1=j$. Then, we have $$ \Pr[U_1(t) \mid A_t, i_1 = j] = 1 - \PP[V_{j, t}] \cdot q_{j,t}.$$ 
If $(j,t)$ is late, then because $\PP[V_{j,t}] \le \nicefrac{1}{2} + c$ by \Cref{VitBounds}, the above quantity is at least $\nicefrac{1}{2} - c$. If $(j,t)$ is early, then because $\PP[V_{j,t}] = 1 - \sum_{t' < t} x_{j,t'} (\nicefrac{1}{2} + c)$, by \Cref{VitBounds}, combined with the definition of $q_{j,t}$, we have that the above quantity is exactly equal to $\nicefrac{1}{2} - c$. 
In summary,
\begin{align}\label{prob-t-rejected}
    \PP[U_1(t) \mid A_t, i_1 = j] \ge \nicefrac{1}{2}-c.
\end{align}

Now, we recall that for late edges $(i,t)$, we have that $q_{i,t} = 1$. So, a late edge $(i,t)$ is matched iff $i$ is vacant by time $t$ and $i$ is picked in \Cref{line:pick1} or \Cref{line:pick2}.
One might then be tempted to guess that 
$\Pr[(i,t)\in \mathcal{M}_2 \mid U_1(t)]$ is equal to $\Pr[(i,t)\in \mathcal{M}_1]$, which by \eqref{prob-it-first-pick} and \eqref{prob-t-rejected} would imply that
$\Pr[(i,t)\in \mathcal{M}_2] \geq  x_{i,t} \cdot(\nicefrac{1}{2}-c)\cdot(\nicefrac{1}{2}-3c)\geq x_{i,t}\cdot 4c$ (the last inequality using $c\leq 0.01$), as desired.
\subsubsection{The Key Challenges}

There are two key issues with the simplistic argument above. 

\paragraph{\underline{Challenge 1: Re-drawing $i$.}}
Unfortunately, conditioning on $U_1(t)$ does not result in the probability of $(i,t)$ being matched in the second pick equalling that of it being matched in the first pick. To see this, suppose a ball $t$ was late for a single bin $i$, and $x_{i,t}/p_{t}=1$. In that case, conditioning on $U_1(t)$ is equivalent to conditioning on $i$ being occupied (matched) before time $t$. Consequently, for this late edge $(i,t)$, we have that $\Pr[(i,t)\in \mathcal{M}_1] \geq x_{i,t}\cdot (\nicefrac{1}{2}-3c)$ by \Cref{first-pick-marginals}, while 
$\Pr[(i,t)\in \mathcal{M}_2 \mid U_1(t)] = 0$, which implies that the second pick does not increase the probability of $(i,t)$ to be matched \emph{at all}, as $\Pr[(i,t)\in \mathcal{M}_2] = 0$(!).

This is where Constraint \eqref{eqn:lpthird} of LP-Match comes in: This constraint implies that if $t$ is late for bin $i$, then the probability that $i$ was picked in \Cref{line:pick1} at time $t$ conditioned on arrival of $t$ is at most 
$$\frac{x_{i,t}}{p_{t}} \leq 1-\sum_{t'<t} x_{i,t'} \leq 1-\frac{\nicefrac{1}{2}-c}{\nicefrac{1}{2}+c} = \frac{2c}{\nicefrac{1}{2}+c} \leq 4c.$$ 
This implies that there is a (high) constant probability of $i$ not being picked in \Cref{line:pick1}.

\begin{lemma}\label{i1!=i}
For any late edge $(i,t)$, for $i_1$ the bin picked in \Cref{line:pick1} at time $t$, 
$$\Pr[i_1 \neq i \mid A_t] \geq 1-4c.$$
\end{lemma}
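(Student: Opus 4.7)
The plan is to prove this directly from the LP constraint and the definition of ``late'' edges---indeed, the intuition is laid out in the paragraph immediately preceding the statement. The lemma is really just a short calculation, so I would organize my proof as a two-step unpacking: first express the conditional probability in terms of the LP variables, then apply Constraint \eqref{eqn:lpthird} together with the lateness hypothesis.

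First, I would observe that conditioned on the arrival event $A_t$, the bin $i_1$ picked in \Cref{line:pick1} is distributed according to the weights $x_{j,t}/p_t$, so in particular $\Pr[i_1 = i \mid A_t] = x_{i,t}/p_t$. Note that this step uses the dummy-bin padding which ensures $\sum_j x_{j,t} = p_t$, so the probabilities in \Cref{line:pick1} do indeed sum to one. Then I would apply Constraint \eqref{eqn:lpthird} of LP-Match to get
\[
\frac{x_{i,t}}{p_t} \;\le\; 1 - \sum_{t' < t} x_{i,t'}.
\]

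Second, I would invoke the lateness of $(i,t)$, which gives $\sum_{t' < t} x_{i,t'} > \frac{\nicefrac{1}{2}-c}{\nicefrac{1}{2}+c}$. Substituting and simplifying,
\[
\Pr[i_1 = i \mid A_t] \;\le\; 1 - \frac{\nicefrac{1}{2}-c}{\nicefrac{1}{2}+c} \;=\; \frac{2c}{\nicefrac{1}{2}+c} \;\le\; 4c,
\]
where the last inequality uses $c > 0$ so that $\nicefrac{1}{2}+c \ge \nicefrac{1}{2}$. Taking the complementary probability yields $\Pr[i_1 \neq i \mid A_t] \ge 1 - 4c$, as claimed.

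There is no real obstacle here---this lemma is essentially a restatement of why Constraint \eqref{eqn:lpthird} was introduced into LP-Match in the first place. The only subtlety worth being careful about is the dummy-bin normalization step (so that \Cref{line:pick1} defines a genuine probability distribution conditional on $A_t$); this was set up at the start of \Cref{sec:algo} and should be flagged so that the reader sees why $\Pr[i_1 = i \mid A_t]$ equals $x_{i,t}/p_t$ exactly rather than being bounded by it.
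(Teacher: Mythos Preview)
Your proposal is correct and follows exactly the same approach as the paper: the paper's proof is the chain $\frac{x_{i,t}}{p_t} \le 1 - \sum_{t'<t} x_{i,t'} \le 1 - \frac{\nicefrac{1}{2}-c}{\nicefrac{1}{2}+c} = \frac{2c}{\nicefrac{1}{2}+c} \le 4c$, which is precisely what you wrote. Your added remark about the dummy-bin normalization is a helpful clarification that the paper leaves implicit.
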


\paragraph{\underline{Challenge 2: Positive Correlation Between Bins.}}
\Cref{i1!=i} alone does not resolve our problems.
Suppose that ball $t$ is late for all bins for which $x_{i,t}\neq 0$, and all these bins have perfectly positively correlated matched status, i.e., $V_{i,t} = V_{j,t}$ for all bins $i,j$ always. If this were the case, then we would have that $\Pr[V_{i,t} \mid U_1(t)] = 0$, since if $t$ is not matched to its first $i_1$, then $i_1$ and $i$ must both have been matched before. This again would result in 
$\Pr[(i,t)\in \mathcal{M}_2] = 0$.

To overcome the above, we show that the above scenario does not occur. In particular, we show that while positive correlations between different bins' matched statuses are possible, such correlations cannot be too large. More formally, we show the following.

\begin{lemma}\label{nearly-neg-corr}
    For any time $t$ and bins $i\neq j$, we have that
    $$\Cov(V_{i,t},V_{j,t}) \leq 12c.$$
\end{lemma}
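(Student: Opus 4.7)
The plan is to compare \Cref{alg:propose-twice} to a ``ghost'' variant that performs only the first-pick stage (\Cref{line:pick1}--\Cref{line:acceptfirstproposal}) and omits the second pick, coupled to the actual algorithm via shared randomness. I aim to show two things: (i) in the ghost, the bins' vacancy indicators at time $t$ are pairwise negatively correlated, and (ii) the ghost's vacancy differs from the actual vacancy by only an additive $O(c)$ in expectation. These will combine, via a covariance decomposition, to give the claimed bound.

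For (i), I will define $Y^{(1)}_{i, t'}$ as the indicator that ball $t'$ arrives, first-picks $i$, and passes the acceptance coin in \Cref{line:probacceptfirstpick}, and let $V^{(1)}_{i, t}$ be the ghost's vacancy indicator of bin $i$ at time $t$. A quick induction will give $V^{(1)}_{i, t} = \prod_{t' < t}(1 - Y^{(1)}_{i, t'})$. Because each ball first-picks at most one bin, for every fixed $t'$ the vector $(Y^{(1)}_{i, t'})_i$ sums to at most $1$ and is hence NA by the 0-1 principle (\Cref{0-1-NA}); the picks at distinct $t'$ are independent, so the independent-union closure (\Cref{NA-closure}) makes the whole family $\{Y^{(1)}_{i, t'}\}_{i, t'}$ NA. Applying the function-composition closure to the complements $1 - V^{(1)}_{i, t}$, which are monotone non-decreasing functions on the disjoint variable blocks $\{Y^{(1)}_{i, t'}\}_{t' < t}$, will then yield $\Cov(V^{(1)}_{i, t}, V^{(1)}_{j, t}) = \Cov(1-V^{(1)}_{i, t}, 1-V^{(1)}_{j, t}) \le 0$ for $i \neq j$.

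For (ii), I will first observe that $V_{i, t} \le V^{(1)}_{i, t}$ under the coupling: if $V_{i, t} = 1$ then no first-pick match to $i$ occurs in the actual algorithm, which forces $Y^{(1)}_{i, t'} = 0$ for all $t' < t$ (else that match would have fired, since $V_{i, t'}=1$ too), so $V^{(1)}_{i, t} = 1$. Define $D_{i, t} := V^{(1)}_{i, t} - V_{i, t} \in \{0,1\}$; then $D_{i, t} = 1$ requires $(i, t') \in \mathcal{M}_2$ for some $t' < t$, which in particular requires $(i, t')$ to be late. Bounding $\PP[(i, t') \in \mathcal{M}_2] \le x_{i, t'}$ (the ball arrives and second-picks $i$) and using that the late balls for $i$ form a tail of the support of $x_{i, \cdot}$, Constraint~\eqref{eqn:lpatmost1} together with the definition of lateness gives
\[
\EE[D_{i, t}] \;\le\; \sum_{t'\,:\,(i, t') \text{ late}} x_{i, t'} \;<\; 1 - \ctwo \;=\; \frac{2c}{\nicefrac{1}{2} + c} \;\le\; 4c.
\]

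Finally, I will expand
\[
\Cov(V_{i, t}, V_{j, t}) = \Cov(V^{(1)}_{i, t}, V^{(1)}_{j, t}) + \Cov(D_{i, t}, D_{j, t}) - \Cov(V^{(1)}_{i, t}, D_{j, t}) - \Cov(D_{i, t}, V^{(1)}_{j, t}),
\]
bound the first term by $0$ via (i), and bound each of the remaining three in absolute value by $4c$ using the elementary inequalities $\Cov(X, Y) \le \EE[XY] \le \EE[Y]$ and $-\Cov(X, Y) \le \EE[X]\EE[Y] \le \EE[Y]$ for $X, Y \in [0, 1]$, combined with the bound $\EE[D_{\cdot, t}] \le 4c$ from (ii). Summing yields $\Cov(V_{i, t}, V_{j, t}) \le 12c$. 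The main obstacle I anticipate is step (i): although the $Y^{(1)}$'s are NA by standard tools, $V^{(1)}_{i, t}$ is monotone \emph{non-increasing} in its inputs, so one must either invoke the symmetric form of the NA closure or pass to complements before applying \Cref{NA-closure} --- a routine but easy-to-botch point.
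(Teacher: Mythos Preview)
Your proof is correct and follows essentially the same approach as the paper. Both arguments split the occupancy indicator into a ``main'' part that is NA (via the 0-1 principle on the per-ball first-pick variables, independent union across balls, and function composition on disjoint blocks) and a small residual of expectation at most $4c$, then expand the covariance bilinearly into four terms to get $0 + 4c + 4c + 4c = 12c$.

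The only difference is cosmetic: the paper decomposes $O_{i,t}=O^E_{i,t}+O^L_{i,t}$ according to whether $i$ was matched along an \emph{early} or \emph{late} edge, writing $O^E_{i,t}=\bigvee_{t'\le t_i} Y^{(1)}_{i,t'}$ over early $t'$ only; you instead compare to a first-pick-only ghost, so your NA term $1-V^{(1)}_{i,t}=\bigvee_{t'<t}Y^{(1)}_{i,t'}$ ranges over all $t'<t$. Since $O^E_{i,t}\le 1-V^{(1)}_{i,t}\le O_{i,t}$, your residual $D_{i,t}$ is dominated by the paper's $O^L_{i,t}$, and both are bounded by $4c$. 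Your bound $\EE[D_{i,t}]\le\sum_{t'\in L_i}x_{i,t'}<4c$ is in fact slightly more self-contained than the paper's, which routes through \Cref{first-pick-marginals} and \Cref{VitBounds}. Your handling of the monotonicity direction by passing to complements before invoking \Cref{NA-closure} is exactly right.
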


The crux of our analysis is proving \Cref{nearly-neg-corr}. Using it, we will be able to argue that for any late edge $(i,t)$, the probability that $i$ is free at time $t$, conditioned  on $U_1(t)$ and on the first pick satisfying $i_1\neq i$ (a likely event, by \Cref{i1!=i}), is not changed much compared to the unconditional probability of $i$ being free at time $t$. 
In particular, this implies that the probability of $(i,t)$ being matched as a second pick, conditioned on $U_1(t)$, is not too much smaller compared to its probability of being matched as a first pick.
In particular, we will show that $\Pr[(i,t)\in \calM_2]\geq x_{i,t}\cdot 4c$, for sufficiently small $c>0$, as stated in \Cref{second-pick-marginals}.

We prove that lemmas \ref{i1!=i} and \ref{nearly-neg-corr} indeed imply \Cref{second-pick-marginals}, as outlined above, in \Cref{sec:second-pick-wrapup}. But first, we turn to proving our key technical lemma, namely \Cref{nearly-neg-corr}.

\subsubsection{Bounding Correlations of Occupancies}

To bound the correlation of vacancy indicators, it is convenient to define the indicator random variable $O_{i,t} := 1 - V_{i,t}$, which indicate whether $i$ is occupied (i.e., matched) at time $t$. We additionally decompose the variables $O_{i,t}$ into two variables, based on whether $i$ was matched (became occupied) along an early or late edge. In particular, we let $O_{i,t}^E \leq O_{i,t}$ be an indicator for the event that $i$ is matched along an early edge before $t$ arrives. Similarly, we let $O_{i,t}^L := O_{i,t} - O^E_{i,t}$ be an indicator for the event that $i$ is matched along a late edge before $t$ arrives. 
To bound the pairwise correlations of variables $O_{i,t}$, we will show that $O^E_{i,t}$ contributes most of the probability mass of $O_{i,t}$, and that the variables $O_{i,t}^E$ and $O_{j,t}^E$ are negatively correlated. To prove this negative correlation, we will prove the following, stronger statement.


\begin{lemma}\label{requisite-NA-lemma}
For any time $t$, the variables $\{O^E_{i,t}\}_i$ are negatively associated (NA).
\end{lemma}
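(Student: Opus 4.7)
The plan is to exhibit each $O^E_{i,t}$ as a monotone non-decreasing function of certain underlying negatively-associated variables, where the functions for different $i$ act on disjoint subsets of those underlying variables; then NA of $\{O^E_{i,t}\}_i$ will follow from the function-composition closure property.

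First, I would re-express the algorithm's randomness at each time $t'$ in factored form: an arrival bit $A_{t'}$, a first-pick identity $\pi^{(1)}_{t'}$ sampled from the distribution $\{x_{i,t'}/p_{t'}\}_i$, and for each bin $i$ an independent Bernoulli acceptance bit $B_{i,t'}$ of parameter $q_{i,t'}$. Define $I_{i,t'} := A_{t'} \cdot \mathbbm{1}[\pi^{(1)}_{t'} = i]$. For fixed $t'$, $\sum_i I_{i,t'} \le 1$, so by the $0$--$1$ principle (\Cref{0-1-NA}) the variables $\{I_{i,t'}\}_i$ are NA; the independent Bernoullis $\{B_{i,t'}\}_i$ are NA; and since these two groups are mutually independent, their combined collection $\{(I_{i,t'}, B_{i,t'})\}_i$ is NA by the independent-union closure property. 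Aggregating over independent times $t'$ via a second application of independent union, $\{(I_{i,t'}, B_{i,t'})\}_{i,t'}$ is NA. (The second-pick randomness of the algorithm plays no role here, which I will justify in the next step.)

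Next, I would exploit the critical structural observation that $t' \mapsto \sum_{t'' < t'} x_{i,t''}$ is non-decreasing, so $E_i$ is downward-closed in $t'$. Hence for any $t' \in E_i$, no late-edge incident to $i$ can have occurred prior to $t'$; in particular, the second-pick mechanism cannot have matched $i$ before $t'$, so $V_{i,t'}$ is determined solely by earlier first-pick proposals to $i$ at early times. Unrolling this, bin $i$ is matched along some early edge before $t$ if and only if there exists $t' < t$ with $t' \in E_i$, $I_{i,t'} = 1$, and $B_{i,t'} = 1$; equivalently,
\begin{equation*}
O^E_{i,t} \;=\; 1 - \prod_{t' < t,\, t' \in E_i} \bigl(1 - I_{i,t'}\, B_{i,t'}\bigr),
\end{equation*}
which is a monotone non-decreasing function of the sub-collection $\{(I_{i,t'}, B_{i,t'})\}_{t' < t,\, t' \in E_i}$.

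Finally, these sub-collections, indexed by fixed first coordinate $i$, are pairwise disjoint as subsets of the NA collection $\{(I_{i,t'}, B_{i,t'})\}_{i,t'}$, so the function-composition closure property of \Cref{NA-closure} immediately yields that $\{O^E_{i,t}\}_i$ is NA, as required. The one subtle step---and the main obstacle a naive approach would run into---is the structural claim that second-pick matches and late first-pick matches to $i$ cannot precede any early time for $i$; this is where the monotonicity of $\sum_{t'' < t'} x_{i,t''}$, and hence the downward-closedness of $E_i$, is essential.
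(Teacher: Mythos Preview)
Your proof is correct and essentially identical to the paper's: both define first-pick indicators (your $I_{i,t'}$, the paper's $X_{i,t'}$) and independent acceptance Bernoullis (your $B_{i,t'}$, the paper's $Y_{i,t'}$), establish NA of the full collection via the 0--1 principle and independent union, and then write $O^E_{i,t}$ as the disjunction/complement-of-product of $\{I_{i,t'} B_{i,t'}\}_{t'\in E_i,\, t'<t}$, invoking function-composition closure. The paper phrases the downward-closedness of $E_i$ slightly differently---it sets $t_i$ to be the largest early $t'<t$ and writes $O^E_{i,t} = \bigvee_{t'\le t_i}(X_{i,t'}\wedge Y_{i,t'})$---but this is the same structural observation you make explicit, and the key justification (that second picks and late first picks to $i$ cannot precede any early time for $i$, so the smallest $t'$ with $I_{i,t'}=B_{i,t'}=1$ finds $i$ vacant) is the same in both.
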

\begin{proof}
For every edge $(i,t)$, let $X_{i,t}$ be the indicator random variable for the event that ball $t$ arrives and picks bin $i$ as its first pick. 
Let $Y_{i,t} \sim \text{Ber} \left(  q_{i,t} \right)$ be an indicator for the event that bin $i$ accepts, i.e., it will be matched to ball $t$ if it arrives and picks $i$ as its first pick and $i$ is free. 

For fixed $t$, the variables $\{X_{i,t} \}$ are 0/1 random variables whose sum is at most 1 always, so they are NA by the 0-1 Principle (\Cref{0-1-NA}). 
On the other hand, the variables $\{Y_{i,t}\}_{i}$ are independent, and hence NA. Moreover, $\{X_{i,t}\}_{i}$, $\{Y_{i,t}\}_{i}$ are mutually independent distributions, and so by closure of NA under independent union (\Cref{NA-closure}), we also have that $\{X_{i,t}, Y_{i,t}\}_{i}$ is NA. Likewise, the lists $\{X_{i,t}, Y_{i,t}\}_{i}$ are mutually independent as we vary $t$; again using closure of NA under independent union we find that $\{X_{i,t}, Y_{i,t}\}_{i,t}$ are also NA. 

Fix $t$. For each bin $i$, let $t_i$ denote the largest $t' < t$ so that $(i,t')$ is early. We note that bin $i$ cannot be matched as a second pick to any $t' \le t_i$. So, it is matched along an early edge before $t$ arrives if and only if there are some $t' \le t_i$ and $r$ such that ball $t'$ arrives and picks bin $i$, and bin $i$ accepts the proposal (for the smallest such $t'$, bin $i$ is guaranteed to be free). Therefore, we have that
$$O_{i,t}^E = \bigvee_{t' \le t_i} (X_{i,t'} \wedge Y_{i,t'}).$$

Note that we have written $\{O_{i,t}^E\}_i$ as the output of monotone non-decreasing functions defined on disjoint subsets of the variables in $\{X_{i,t}, Y_{i,t}\}_{i,t}$. Hence, by closure of NA under monotone function composition (\Cref{NA-closure}), we have that $\{O_{i,t}^E\}_i$ are NA. 
\end{proof}

By \Cref{NA:neg-corr}, the above lemma implies that any $O^E_{i,t}$ and $O^E_{j,t}$ are negatively correlated.
\begin{corollary}\label{ML-neg-corr}
For any time $t$ and bins $i\neq j$, we have that
    $\Cov(O^E_{i,t}, O^E_{j,t})\leq 0.$
\end{corollary}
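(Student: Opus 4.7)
The statement is an immediate corollary of the two ingredients already in hand, so the "plan" is essentially to invoke them in sequence. The plan is to apply \Cref{requisite-NA-lemma}, which guarantees that for any fixed time $t$ the collection $\{O^E_{i,t}\}_i$ is negatively associated, and then to combine this with \Cref{NA:neg-corr}, which says that the pairwise covariances of NA variables are non-positive.

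More concretely, I would first fix $t$ and recall that \Cref{requisite-NA-lemma} has already established the joint distribution $(O^E_{1,t}, O^E_{2,t}, \dots)$ is NA. Then, picking any two distinct bins $i \ne j$, I would apply \Cref{NA:neg-corr} to the pair $(O^E_{i,t}, O^E_{j,t})$ to conclude $\Cov(O^E_{i,t}, O^E_{j,t}) \le 0$, which is exactly the claim. No further computation is required.

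There is no real obstacle here, since the heavy lifting (producing the NA structure by decomposing $O^E_{i,t}$ into a monotone function of independent first-pick indicators and acceptance coins, and using the $0/1$-principle plus closure under independent union and monotone composition) has already been done in the proof of \Cref{requisite-NA-lemma}. The only thing to be careful about is that \Cref{NA:neg-corr} is stated for pairs of variables from an NA family, which matches our setting exactly after fixing $t$; in particular the functions $f(\vec{O^E}) = O^E_{i,t}$ and $g(\vec{O^E}) = O^E_{j,t}$ in \Cref{def:NA} are monotone non-decreasing and defined on the disjoint singleton subsets $\{O^E_{i,t}\}$ and $\{O^E_{j,t}\}$, so the hypotheses of the NA definition are satisfied for this pair.
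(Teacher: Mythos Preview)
Your proposal is correct and matches the paper's own reasoning exactly: the paper derives the corollary in one line by combining \Cref{requisite-NA-lemma} with \Cref{NA:neg-corr}. There is nothing to add.
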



We are now ready to prove \Cref{nearly-neg-corr}. 

\begin{proof}[Proof of \Cref{nearly-neg-corr}]
First, we show that the probability of a bin $i$ being matched along a late edge before time $t$ is small, which we later use to bound the covariance of $O^L_{i,t}$ and other binary variables. 
First, if $(i,t)$ is not late, then trivially, 
$\Pr[O^L_{i,t}]=0\leq 4c.$
Otherwise, we have that $\sum_{t'<t: (i,t') \textrm{ early }} x_{i,t'} \geq \frac{\nicefrac{1}{2}-c}{\nicefrac{1}{2}+c}$.
Thus, by \Cref{first-pick-marginals}, we have that $\PP[O_{i,t}^E] > \frac{\nicefrac{1}{2} - c}{\nicefrac{1}{2} + c} \cdot ( \nicefrac{1}{2} + c) = \nicefrac{1}{2} - c$. On the other hand, by \Cref{VitBounds}, we also have that $\PP[V_{i,t}] \ge \nicefrac{1}{2} - 3c$. Therefore, we find that here, too, the probability of $O^L_{i,t}$ is small.
\begin{align*}
\Pr[O_{i,t}^L] &= \Pr[O_{i,t}] - \Pr[O_{i,t}^E] < (\nicefrac{1}{2} + 3c) - (\nicefrac{1}{2} - c) = 4c.
\end{align*}
From the above, we find that regardless of whether or not $(i,t)$ is late, we have that
\begin{align}\label{eqn:OitL}
\Pr[O_{i,t}^L]\leq 4c.
\end{align}
Therefore, using the additive law of covariance for $\Cov(O_{i,t}, O_{j,t}) = \Cov(1 - O_{i,t}, 1 - O_{j,t}) = \Cov(V_{i,t}, V_{j,t}) $, we obtain the desired bound, 
    \begin{align*}
    \Cov(V_{i,t}, V_{j,t}) &= 
    \Cov(O^E_{i,t} + O^L_{i,t}, O^E_{j,t} + O^L_{j,t}) \\
        & = \Cov(O^E_{i,t}, O^E_{j,t}) + \Cov(O^E_{i,t}, O^L_{j,t}) + \Cov(O^L_{i,t}, O^E_{j,t}) + \Cov(O^L_{i,t}, O^L_{j,t}) \\
        & \leq 0 + \Pr[O^E_{i,t}, O^L_{j,t}] + \Pr[O^L_{i,t}, O^E_{j,t}] + \Pr[O^L_{i,t}, O^L_{j,t}] & \textrm{Cor. \ref{ML-neg-corr}} \\
        & \leq 0 + \Pr[O^L_{j,t}] + \Pr[O^L_{i,t}] + \Pr[O^L_{i,t}]  \\
        &\le 12c. & \textrm{Eq. \eqref{eqn:OitL}} & \qedhere \end{align*}
 \end{proof}

\subsubsection{Putting it All Together}\label{sec:second-pick-wrapup}

We are now ready to use weak positive correlation (if any) between vacancy indicators $V_{i,t}$ and $V_{j,t}$. 
In particular, we will show that the probability of bin $i$ to be occupied a time $t$ is not changed much when conditioning on $A_t$ (arrival of $t$), the first picked bin at time $t$ being $i_1\neq i$, and $U_1(t)$ (ball $t$ bot being matched to its first pick). 

\begin{lemma} \label{Oitconditionedonrejected} For any late edge $(i,t)$, we have that
$$\Pr[ O_{i,t} \mid A_t, i_1 \neq i, U_1(t)] \le \PP[ O_{i, t}] \cdot \left( 1 + \frac{12c}{(\nicefrac{1}{2} - c)^2} \right).$$ 
\end{lemma}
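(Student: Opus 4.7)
The plan is to reduce the conditioning on the event $\{A_t, i_1 \neq i, U_1(t)\}$ to conditioning on $\{A_t, i_1 = j, U_1(t)\}$ for a specific bin $j \neq i$, bound the resulting conditional probability uniformly in $j$, and then average. Throughout, I will use three independence facts: $O_{i,t}$ and $V_{j,t}$ are determined by the history strictly before time $t$; the arrival indicator $A_t$ and the first-pick random variable $i_1$ are determined by fresh randomness at time $t$; and the acceptance indicator $Y_{j,t} \sim \mathrm{Ber}(q_{j,t})$ used in Line~\ref{line:probacceptfirstpick} is drawn independently of everything else.

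Fix $j \neq i$. Given $A_t$ and $i_1 = j$, the event $U_1(t)$ is exactly $\neg(V_{j,t} \wedge Y_{j,t})$. Using the independence facts above, I can compute
\[
\Pr[U_1(t) \mid A_t, i_1 = j] = 1 - \Pr[V_{j,t}]\, q_{j,t},
\]
\[
\Pr[O_{i,t}, U_1(t) \mid A_t, i_1 = j] = \Pr[O_{i,t}] - \Pr[O_{i,t}, V_{j,t}]\, q_{j,t}.
\]
Writing $a := \Pr[O_{i,t}]$, $b := \Pr[O_{j,t}]$, and noting $\Pr[O_{i,t}, V_{j,t}] = a - \Pr[O_{i,t}, O_{j,t}] = a - ab - \Cov(O_{i,t}, O_{j,t})$, a short algebraic simplification (of the form $\frac{aX + \gamma q}{X}$ with $X = 1-(1-b)q_{j,t}$) yields the clean identity
\[
\Pr[O_{i,t} \mid A_t, i_1 = j, U_1(t)] \;=\; a \;+\; \frac{\Cov(O_{i,t}, O_{j,t}) \cdot q_{j,t}}{1 - \Pr[V_{j,t}] \cdot q_{j,t}}.
\]
Deriving this identity cleanly is the main technical step; the rest is essentially plug-and-chug.

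To bound the correction term, I will plug in three facts already proved: $\Cov(O_{i,t}, O_{j,t}) = \Cov(V_{i,t}, V_{j,t}) \leq 12c$ by \Cref{nearly-neg-corr}; $q_{j,t} \leq 1$ by construction; and the denominator equals $\Pr[U_1(t) \mid A_t, i_1 = j] \geq \nicefrac{1}{2} - c$ by \eqref{prob-t-rejected}. Hence the correction is at most $\frac{12c}{\nicefrac{1}{2}-c}$, giving the additive bound
\[
\Pr[O_{i,t} \mid A_t, i_1 = j, U_1(t)] \;\leq\; \Pr[O_{i,t}] + \frac{12c}{\nicefrac{1}{2}-c}.
\]

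Finally, to convert this additive bound into the multiplicative bound in the statement, I will invoke \Cref{VitBounds}, which for late $(i,t)$ gives $\Pr[V_{i,t}] \leq \nicefrac{1}{2}+c$, i.e.\ $\Pr[O_{i,t}] \geq \nicefrac{1}{2}-c$. Dividing through by $a \geq \nicefrac{1}{2}-c$ yields
\[
\Pr[O_{i,t} \mid A_t, i_1 = j, U_1(t)] \;\leq\; \Pr[O_{i,t}] \cdot \left( 1 + \frac{12c}{(\nicefrac{1}{2}-c)^2} \right).
\]
Since this uniform bound holds for every $j \neq i$, averaging over $j$ with respect to the law of $i_1$ conditioned on $\{A_t, i_1 \neq i, U_1(t)\}$ yields the desired inequality.
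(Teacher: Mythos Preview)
Your proof is correct and follows essentially the same approach as the paper: condition on $i_1 = j$ for each $j \neq i$, use the covariance bound from \Cref{nearly-neg-corr} together with the lower bound $\Pr[U_1(t)\mid A_t,i_1=j]\ge \nicefrac{1}{2}-c$ from \eqref{prob-t-rejected} and the late-edge bound $\Pr[O_{i,t}]\ge \nicefrac{1}{2}-c$ from \Cref{VitBounds}, and then sum/average over $j$. Your derivation of the clean identity $\Pr[O_{i,t}\mid A_t,i_1=j,U_1(t)] = \Pr[O_{i,t}] + \frac{\Cov(O_{i,t},O_{j,t})\,q_{j,t}}{1-\Pr[V_{j,t}]\,q_{j,t}}$ is an algebraically equivalent (and arguably tidier) repackaging of the paper's computation via $\Pr[O_{j,t}\mid O_{i,t}]$ and the quantity $z_{i,j,t}$.
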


\begin{proof}
To analyze the conditional probability above, we first look at $\PP[O_{i,t}, A_t, i_1 = j, U_1(t)]$. This is the probability of bin $i$ being occupied at time $t$, ball $t$ arriving and picking $j$ as its first pick, and not being matched due to this first pick. Note that $A_t$ and the first pick is independent of bins' occupancy statuses at time $t$. Additionally, we notice that with probability $1 - q_{j,t}$ bin $j$ will deterministically reject. With probability $q_{j,t}$, it rejects if and only if $j$ is occupied. So, for any $j \neq i$,
\begin{equation}\label{eqn:Oitjoint}
\PP[O_{i,t}, A_t, i_1 = j, U_1(t)] = \PP[O_{i,t}] \cdot \PP[A_t, i_1 = j] \cdot  \left( (1 - q_{j,t})  + q_{j,t} \cdot \PP[O_{j,t} \mid O_{i,t}] \right).
\end{equation}

We now turn to relating the last term in the above product, namely $(1 - q_{j,t})  + q_{j,t} \cdot \PP[O_{j,t} \mid O_{i,t}]$, to its "unconditional" counterpart, $\PP[U_1(t) \mid A_t , i_1 =j] = (1 - q_{j,t}) + q_{j,t} \cdot \PP[O_{j,t}]$. For notational convenience, we  which we abbreviate by $$z_{i,j,t} := (1 - q_{j,t})  + q_{j,t} \cdot \PP[O_{j,t} \mid O_{i,t}].$$ Recalling that $\Cov(O_{i,t}, O_{j,t}) = \Cov(V_{i,t}, V_{j,t})\leq 12c$, by \Cref{nearly-neg-corr}, we have
\begin{equation}\label{eqn:Oitjt}
\PP[O_{j,t} \mid O_{i,t}] =  \frac{\PP[O_{j,t}, O_{i,t}]}{\PP[O_{i,t}]} = \frac{\PP[O_{j,t}] \cdot \PP[O_{i,t}] + \text{Cov}(O_{j,t}, O_{i,t}) }{\PP[O_{i,t}]} \le  \PP[O_{j,t}] + \frac{12c}{\PP[O_{i,t}]}.
\end{equation} 
Hence,
\begin{align}
\nonumber z_{i,j,t} &\le (1 - q_{j,t})  + q_{j,t} \cdot \left( \PP[O_{j,t}] + \frac{12c}{\PP[O_{i,t}]} \right) & & \text{(Eq. (\ref{eqn:Oitjt}))}\\
\nonumber  &\le (1 - q_{j,t})  + q_{j,t} \cdot \left( \PP[O_{j,t}] + \frac{12c}{\nicefrac{1}{2}-c} \right) && \text{(Cor. \ref{VitBounds}, } c<\nicefrac{1}{2}) \\
\nonumber &= \PP[U_1(t) \mid A_t, i_1 = j] + q_{j,t} \cdot \frac{12c}{\nicefrac{1}{2} - c} \\
&\le \PP[U_1(t) \mid A_t, i_1 = j] \cdot \left( 1 + \frac{12c}{(\nicefrac{1}{2} - c)^2} \right) &&\text{(Eq. (\ref{prob-t-rejected}), } q_{j,t} \le 1)
\end{align} 

Using this bound in \Cref{eqn:Oitjoint} and summing over all $j \neq i$, we have $$\PP[O_{i,t}, A_t, i_1 \neq i, U_1(t)] \le \PP[O_{i,t}] \cdot \PP[A_t, i_1 \neq i, U_1(t)] \cdot \left( 1 + \frac{12c}{(\nicefrac{1}{2} - c)^2} \right).$$ The desired inequality therefore follows by Bayes' theorem.
\end{proof}

With this lemma in place, we are ready to conclude this section by proving \Cref{second-pick-marginals}, i.e. that $\PP[(i,t) \in \mathcal{M}_2] \ge x_{i,t} \cdot 4c$ for any late edge $(i,t)$. 

\begin{proof}[Proof of \Cref{second-pick-marginals}]
We start by bounding 
\begin{equation}\label{eqn:itinM2firstbound}
\PP[(i,t) \in \mathcal{M}_2] \ge  \PP[(i,t) \in \mathcal{M}_2 \mid A_t, i_1 \neq i, U_1(t)] \cdot \PP[A_t, i_1 \neq i, U_1(t)].
\end{equation} 
In words, the probability $(i,t)$ is matched as a second pick is at least the probability of the same event and $i_1\neq i$. By \Cref{i1!=i} we know that $\PP[A_t, i_1 \neq i] \ge p_{t}\cdot (1 - 4c)$; by \Cref{prob-t-rejected}, we know that $ \Pr[U_1(t) \mid A_t, i_1 = j] \ge \nicefrac{1}{2} - c$ for any $j \neq i$. As a consequence, by Bayes' theorem and our choice of $c < \nicefrac{1}{4}$, we have that
\begin{equation}\label{eqn:13}
\PP[A_t, i_1 \neq i, U_1(t)] = \PP[A_t, i_1 \neq i] \cdot \PP[ U_1(t) \mid A_t, i_1 \neq i] \ge p_{t} \cdot (1-4c) \cdot (\nicefrac{1}{2} - c).
\end{equation}
Next, we note that
\begin{equation}\label{eqn14}
\PP[(i,t) \in \mathcal{M}_2 \mid A_t, i_1 \neq i, U_1(t)] = \frac{x_{i,t}}{p_{t}} \cdot \PP[V_{i,t} \mid A_t, i_1 \neq i, U_1(t)]
\end{equation}
because conditioned on $A_t$, picking someone other than $i$ first, and being rejected, we will match $(i,t)$ exactly when $t$'s second pick is $i$ and $i$ is vacant. 

\Cref{Oitconditionedonrejected} yields the following lower bound on the probability of $[V_{i,t} \mid A_t$, $i_1 \neq i, U_1(t)]$:
\begin{align} 
\nonumber \PP[V_{i,t} \mid A_t, i_1 \neq i, U_1(t)] &= 1 - \PP[O_{i,t} \mid A_t, i_1 \neq i, U_1(t)] \\
\nonumber &\ge 1 - \PP[O_{i,t}] \cdot \left( 1 + \frac{12c}{(\nicefrac{1}{2} - c)^2} \right) \\
\nonumber &= \PP[V_{i,t}] - \frac{12c}{(\nicefrac{1}{2} - c)^2} \cdot (1 - \PP[V_{i,t}] ) \\
&\ge \nicefrac{1}{2} - 3c  - \frac{12c}{(\nicefrac{1}{2} - c)^2} \cdot (\nicefrac{1}{2} + 3c)  &(\text{Cor. \ref{VitBounds}}) \label{eqn15}
\end{align} 
Combining equations \ref{eqn14} and \ref{eqn15} we thus have 
\begin{equation}\label{eqn:16}
\PP[(i,t) \in \mathcal{M}_2 \mid A_t, i_1 \neq i, U_1(t)] \ge \frac{x_{i,t}}{p_{t}} \cdot \left( \nicefrac{1}{2} - 3c  - \frac{12c}{(\nicefrac{1}{2} - c)^2} \cdot (\nicefrac{1}{2} + 3c) \right).
\end{equation} 
Putting it all together, equations (\ref{eqn:itinM2firstbound}), (\ref{eqn:13}), and (\ref{eqn:16}) and our choice of (sufficiently small) $c=0.01$ imply the desired inequality,
\begin{align*}
\PP[(i,t) \in \mathcal{M}_2] & \ge  \frac{x_{i,t}}{p_{t}} \cdot  \left( \nicefrac{1}{2} - 3c  - \frac{12c}{(\nicefrac{1}{2} - c)^2} \cdot (\nicefrac{1}{2} + 3c) \right) \cdot p_{t} \cdot (1 - 4c) \cdot (\nicefrac{1}{2} - c) \geq x_{i,t}\cdot 4c. \qedhere
\end{align*}
\end{proof}

\section{Generalizing the Algorithm}\label{sec:general-algo}
Our algorithm and its analysis of \Cref{sec:algo} generalize seamlessly to a setting in which weights of each online node $t$ are drawn from discrete joint distributions. For brevity, we only outline the small changes in the LP, algorithm and analysis here.

\paragraph{Problem Statement.} 
We are given a complete bipartite graph, with vertices of one side (bins) give up front, and vertices of the other side (balls) arriving sequentially, with ball $t$ arriving at time $t$ (with probability one).
The vector of edge weights of any ball $t$, denoted by  $w^t:=(w_{1,t},w_{2,t},\dots)$, is drawn from some discrete joint distribution, $w^t \sim \calD_t$. 
The vector of all edge weights, $w:=(w^1,w^2,\dots)$, is drawn from the product distribution, $w\sim \calD := \prod_t \calD_t$. That is, the weights of any ball's edges may be arbitrarily correlated, but weights of different balls' edges are independent.
We assume that these discrete distributions are given explicitly, e.g., via a list of tuples of the form $(v_{t,j}, p_{t,j})$ with $p_{t,j} := \Pr_{\calD_t}[w^t=v_{t,j}]$. 
We note that the problem considered in previous sections is a special instance of this problem with each $\calD_t$ consisting of two-point distributions, with one of the possible realizations of $w^t\sim \calD_t$ being the all-zeros vector.

\paragraph{Generalizing LP-Match.}
The generalization of LP-Match now has decision variables $y_{i,t,j}$, which we think of as proxies for the probability of edge $(i,t)$ being matched by the optimal online algorithm when ball $t$'s edge weights are $w^t=v_{t,j}$.
Generalizing the argument behind Constraint \eqref{eqn:lpthird}, we note that $w^t$ is independent of bin $i$ not being matched by the optimal online algorithm by time $t$. From this we obtain Constraint \eqref{LP-gen:newconstraint} below. The remaining constraints of the obtained LP (below) are matching constraints.

\begin{figure}[h]
\begin{align}
\textbf{LP-Match-Gen:} \hspace{2em} \qquad \max \enspace \enspace & \sum_{i, t, j}  w_{i,t,j} \cdot  y_{i,t,j}  &&&  \nonumber    \\ 
   \text{s.t.} \enspace \enspace
    & \sum_{t}\sum_{j} y_{i,t,j}     \le 1 & & \text{ for all } i \nonumber 
  \\ 
  & \sum_{i} y_{i,t,j} \le p_{t,j} & & \text{ for all } t,j 
 \nonumber
 \\ 
& y_{i,t,j} \hfill  \le p_{t,j} \cdot \left( 1 - \sum_{t' < t} \sum_{j'} y_{i,t',j'} \right) & & \text{ for all } i,t,j  
 \label{LP-gen:newconstraint}
 \\ 
 & 
 y_{i,t,j}     \ge 0 & & \text{ for all } i,t,j\nonumber
\end{align}
\end{figure}

\paragraph{Generalizing the algorithm.}
Our general algorithm will match each edge $(i,j)$ when $w^t=v_{t,j}$ with marginal probability at least probability 
\begin{align*}
    \Pr[(i,t)\in \calM, w^t=v_{t,j}] \geq y_{i,t,j}\cdot (\nicefrac{1}{2}+c).
\end{align*}
To do so, when ball $t$ arrives, we first observe the realization of the edge weight vector $w^t=v_{t,j}$. Then, When picking a bin $i$ (either as first or second pick) at time $t$, we now do so with probability $\frac{y_{i,t,r}}{p_{t,j}}$.
Moreover, we take $q_{i,t}:=\min\left(1,\frac{\nicefrac{1}{2}+c}{1-\sum_{t'<t}\sum_{j'} y_{i,t',j'}\cdot (\nicefrac{1}{2}+c)}\right)$ to be the probability of a vacant picked bin $i$ to be matched to ball $t$ by the algorithm.
The dummy nodes $i_t$ are now assigned values $y_{i_t,t,j}\leftarrow p_{t,j}-\sum_{i} y_{i,t,j}$ for each $j$.
Apart from this, the algorithm is unchanged.
We note that this algorithm can be implemented in polynomial time in the size of the input (the representation of $\calD$).

\paragraph{Generalizing the Analysis.}
Extending the analysis of \Cref{alg:propose-twice} to this more general problem is a rather simple syntactic generalization. We therefore only outline the changes in the analysis.
Broadly, all changes needed for the analysis require us to refine our claims as follows. Denote by $R_t$ a random variable denoting the random index of the weight vector of edges of $t$. That is, $R_t=j \iff w^t=v_{t,j}$. Then, all our bounds for the probability of $(i,t)$ being matched (as a first or second pick, or either) now need to refer to $R_t=j$, and relate to $y_{i,t,j}$. So, for example, \Cref{first-pick-marginals} will be restated to show that for each early edge $(i,t)$ and index $j$, we have that $\Pr[(i,t)\in \calM_1, R_t=j] = y_{i,t,r}\cdot (\nicefrac{1}{2}+c)$, and for any edge $(i,t)$, we have that $y_{i,t,r}\cdot (\nicefrac{1}{2}-3c)\leq \Pr[(i,t)\in \calM_1, R_t=j] \leq y_{i,t,r}\cdot (\nicefrac{1}{2}+c)$. 
\Cref{requisite-NA-lemma} requires some care in setting up the NA variables to prove that $O^E_{i,t}$ are NA, by also accounting for the realization of $R_t$, with indicators $[R_t=j]$, which are NA by the 0-1 Principle (\Cref{0-1-NA}).
Apart from that, the proofs are essentially unchanged, except for replacing occurrences of $A_t$ by $R_t=j$ in every probability conditioned on arrival of $t$, and appropriately replacing $\frac{x_{i,t}}{p_t}$ by $\frac{y_{i,t,j}}{p_{t,j}}$.

\section{Conclusions and Open Questions}\label{sec:conclusion}

We studied the online stochastic max-weight bipartite matching problem through the lens of approximation algorithms, rather than that of competitive analysis. 
In particular, we study the efficient approximability of the optimal online algorithm on any given input. 
On the one hand, we show that the optimal online algorithm cannot be approximated beyond some constant (barring shocking developments in complexity theory). On the other hand, we present a polynomial-time online algorithm which yields a $0.51$ approximation of the optimal online algorithm's gain---surpassing the approximability threshold of $\nicefrac{1}{2}$ of the optimal offline algorithm. 
Many intriguing research questions remain.

First, it is natural to further study the efficient approximability of our problem. We suspect that much better approximation guarantees are achievable; in particular, \cite{torrico2017dynamic} suggests a family of additional constraints strengthening our LP relaxation, possibly leading to improved approximation.
One might also ask if our general algorithmic approach can be extended to \emph{implicitly} represented weight distribution $\calD$.
For example, what can one show if $\calD_t$ is itself a product distribution, $\calD_t = \prod_i \calD_{i,t}$, with $w_{i,t}\sim \calD_{i,t}$?
A related interesting question is to obtain better approximation for the widely-studied special case of balls drawn from some i.i.d distribution (see, e.g., \cite{manshadi2012online,haeupler2011online,huang2018online,mahdian2011online,karande2011online}).

More broadly, one might ask how well one can approximate the optimal online algorithm of online Bayesian selection problems under the numerous  constraints studied in the literature, including matroid and matroid intersections, knapsack constraints, etc.
For which of these problems is the online optimum easy to compute? Which admit a PTAS? Which admit constant approximations? Which are hard to approximate? We are hopeful that the ideas developed here, both algorithmic, as well as our new hardness gadgets, will prove useful when exploring this promising research agenda.

\paragraph{Acknowledgements.} We thank the anonymous EC'21 reviewers and Neel Patel for useful comments which helped improve the presentation of this manuscript, and we thank the authors of \cite{torrico2017dynamic} for drawing our attention to their work.

\appendix
\section{Hardness of Computing  Approximately-Optimal Online Policies}\label{app:policyhardnesschernoff}

In this section we justify our claim that a hardness result for approximating the \emph{value} achieved by the optimal online algorithm implies a hardness result for the computation of the \emph{decisions} made by an (approximately) optimal online algorithm. 
Let $\alpha$ be as in \Cref{thm:ridehailpspacehard}. 

\begin{claim}
No polynomial-time algorithm computes the decisions made of an online algorithm which $\left( \frac{\alpha + 1}{2} \right)$-approximates the optimal online \ride algorithm, unless $PSPACE=BPP$.\footnote{BPP denotes the decision problems solvable in polynomial times by \emph{randomized} algorithms which fail with probability at most $\nicefrac{1}{3}$.}
\end{claim}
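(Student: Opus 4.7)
The plan is to reduce approximating the \emph{value} $\opton(\calI)$ to computing the \emph{decisions} of an approximately optimal policy, via Monte Carlo simulation. Suppose, toward contradiction, that there is a polynomial-time (possibly randomized) algorithm $\mathcal{A}$ that on input $\calI$ implements an online policy whose expected gain is at least $\frac{\alpha+1}{2}\cdot \opton(\calI)$. I would use $\mathcal{A}$ to construct a BPP algorithm that $\alpha$-approximates $\opton(\calI)$ on the hardness instances $\calI_\phi$ produced by the reduction in \Cref{sec:pspace-reduction}; combined with \Cref{thm:ridehailpspacehard} and the trivial inclusion $\mathrm{BPP}\subseteq\mathrm{PSPACE}$, this yields $\mathrm{PSPACE}=\mathrm{BPP}$.

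Concretely, I would simulate $\mathcal{A}$ on $T$ independent random realizations of the arrivals and weights of $\calI$, record the realized matching weights $V_1,\dots,V_T$, and output the empirical mean $\hat V = \frac{1}{T}\sum_{j=1}^T V_j$, suitably rescaled. The key quantitative fact is that on the hardness instances all weights and inverse arrival probabilities are bounded by some polynomial $p(N)$ of the input size $N$ (the ``additional'' guarantee in \Cref{thm:ridehailpspacehard}), so each $V_j$ is bounded in $[0,W]$ for $W\le n\cdot p(N)$, which is polynomially bounded. By Hoeffding's inequality, setting $T=\mathrm{poly}(N,1/\epsilon,\log(1/\delta))$ guarantees that $|\hat V - \mathbb{E}[V_j]|\le \epsilon\cdot\opton(\calI_\phi)$ with probability at least $1-\delta$, provided we have a polynomial lower bound on $\opton(\calI_\phi)$. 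Such a lower bound is furnished by \Cref{match-literal-balls}: on $\calI_\phi$ the optimal online policy gains at least $0.75n$ from the literal balls, so $\opton(\calI_\phi)\ge 0.75n=\Omega(N)$.

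Combining the approximation guarantee $\mathbb{E}[V_j]\ge \tfrac{\alpha+1}{2}\opton(\calI_\phi)$ with the concentration bound, and then rescaling $\hat V$ by $1/(1+\epsilon)$ to land inside $[0,\opton(\calI_\phi)]$, I get an output in the interval $\bigl[\tfrac{(\alpha+1)/2-\epsilon}{1+\epsilon}\cdot\opton(\calI_\phi),\,\opton(\calI_\phi)\bigr]$. A short calculation shows that for any constant $\epsilon\le \tfrac{1-\alpha}{2(1+\alpha)}$ the leading coefficient is at least $\alpha$, so the rescaled $\hat V$ is an $\alpha$-approximation of $\opton(\calI_\phi)$ with probability $1-\delta$. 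Choosing $\delta$ inverse-exponential in $N$ and $\epsilon$ the fixed constant above keeps $T$ polynomial, so the entire simulation runs in randomized polynomial time. Since $\alpha$-approximating $\opton$ on these instances is PSPACE-hard, we conclude $\mathrm{PSPACE}\subseteq\mathrm{BPP}$.

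The main obstacle is purely bookkeeping: ensuring that the additive error from Monte Carlo averaging is strictly smaller than the slack $\tfrac{\alpha+1}{2}-\alpha=\tfrac{1-\alpha}{2}$ between the policy's guarantee and the target approximation factor. This is where both the polynomial bound on weights and arrival probabilities and the $\Omega(n)$ lower bound on $\opton(\calI_\phi)$ are essential---without either, the additive Hoeffding error could dwarf $\opton(\calI_\phi)$ and the reduction would break. Everything else is a standard simulation-plus-concentration argument.
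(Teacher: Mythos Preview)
Your proposal is correct and follows essentially the same approach as the paper: simulate the assumed policy on $T$ independent realizations, use Chernoff--Hoeffding to concentrate the empirical mean, and exploit the slack $\tfrac{1-\alpha}{2}$ between the policy guarantee and the target $\alpha$.

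There is one minor difference worth noting. To ensure $T$ is polynomial, you invoke the specific lower bound $\opton(\calI_\phi)\ge 0.75n$ from \Cref{match-literal-balls}, restricting attention to the reduction instances $\calI_\phi$. The paper instead argues directly that $P^2/\opton^2$ is polynomially bounded for any instance with polynomially bounded weights and inverse arrival probabilities (this is the role of the ``additional'' clause in \Cref{thm:ridehailpspacehard}). Both routes work; yours is arguably more explicit. Your output transformation---dividing $\hat V$ by $1+\epsilon$---is also cleaner than the paper's, which subtracts $\tfrac{1-\alpha}{4}\cdot\opton$ (a quantity one does not actually know, though this is a standard elision in gap-hardness arguments).
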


\begin{proof}

We reduce from the problem of computing an $\alpha$-approximation to the profit obtained by $\opton$ for a fixed input $\mathcal{I}$, with polynomially bounded weights and inverse arrival probabilities. Let $\textsc{OPT}$ denote this profit. Let $P$ denote the maximum possible profit for $\mathcal{I}$ for any realization of the randomness.

Assume we could compute the decisions made by an algorithm $\mathcal{A}$ which achieves an  $\left( \frac{\alpha + 1}{2} \right)
$-approximation. For some parameter $T$, use these decisions to run the algorithm on $T$ independent instantiations of a given input and record the profits as $X_1$, $X_2$, $\ldots$, $X_T$. Let $\bar{X}$ denote the sample average $\bar{X} := \frac{1}{T} \sum_{i=1}^T X_i$. Using the Chernoff-Hoeffding bound, we can bound the probability $\bar{X}$ deviates from its expectation as
$$\PP \left[ \Big| \bar{X} - \EE[\bar{X}] \Big| \ge \left( \frac{1 - \alpha}{4} \cdot \textsc{OPT} \right) \right] \le 2\cdot \exp \left( - \frac{2T^2 ( \frac{1-\alpha}{4} \cdot \textsc{OPT} )^2}{T \cdot P^2} \right) \le \exp \left(- \Theta \left( T \cdot \frac{\textsc{OPT}^2}{P^2} \right) \right).$$
Take $T = \Theta(n \cdot P^2/\textsc{OPT}^2)$; note this is polynomial in the size of the input, as long as all weights and inverse arrival probabilities of $\mathcal{I}$ are polynomially bounded. Then, 
$$\PP \left[ \Big| \bar{X} - \EE[\bar{X}] \Big| \ge \left( \frac{1 - \alpha}{4} \cdot \textsc{OPT} \right) \right] \le \exp \left(- \Theta \left( n \right) \right).$$ so
we can clearly in polynomial time compute a $\bar{X}$ that is, w.h.p., at most $\left( \frac{1-\alpha}{4} \right) \cdot \textsc{OPT}$ far away from a $\left( \frac{\alpha + 1}{2} \right)$-approximation to $\textsc{OPT}$. In particular, $$\bar{X} \in \left[ \textsc{OPT} \left( \frac{3 \alpha + 1}{4} \right), \textsc{OPT} \left( \frac{5 - \alpha}{4} \right) \right].$$ We immediately observe that the quantity $\bar{X} - \textsc{OPT}\left(\frac{1 - \alpha}{4}\right)$ is hence in the interval $[ \textsc{OPT} \cdot \alpha, \textsc{OPT}]$.  Hence, w.h.p., we have given an $\alpha$-approximation to $\textsc{OPT}$. As we demonstrated this problem to be PSPACE-complete, if we can do this in polynomial time w.h.p. then $\text{PSPACE} = \text{BPP}$. 
\end{proof}

\section{Omitted Proofs of \Cref{sec:prelims}}\label{app:ommittedproofs}

In this section we provide proofs deferred from \Cref{sec:prelims}, restated below for ease of reference.

\approxfact*
\begin{proof}
As $f(x)=\frac{\alpha+x}{1+x} = 1-\frac{1-\alpha}{1+x}$ is monotone increasing in $x\geq -1$ for $\alpha\in (0,1)$, we have that 
$(\frac{\alpha + \beta}{1 + \beta}) \geq \frac{\alpha + Q'/Q}{1+Q'/Q} = \frac{\alpha\cdot Q+Q'}{Q+Q'}$. Thus, 
An $(\frac{\alpha + \beta}{1 + \beta})$-approximation to $Q + Q'$ yields a number $T$ in the range $$\Big[ \frac{\alpha + \beta}{1 + \beta} \cdot \left( Q + Q' \right) , Q + Q' \Big] \subseteq [\alpha\cdot Q + Q', Q+Q'].$$ 
Subtracting $Q'$ from $T$ then yields a number $T-Q'$ in the range $[\alpha \cdot Q, Q]$.
\end{proof}

Next, we provide a proof of the underlying PSPACE-hardness result of \citet{condon1997random} used in our reductions.
\randomdebaters*
\begin{proof}
This lemma follows from the proof in \cite{condon1997random}; here, we briefly explain why. 

In that paper, the authors prove their main result that $\text{RPCD}(\log n, 1) = \text{PSPACE}$ in Theorem 2.4. Using this theorem, they prove that it is PSPACE-hard to approximate MAX-SSAT in Theorem 3.1. In their proof, they start with a language $L$ in PSPACE and an input $x$, and construct an RPCDS for $L$ flipping $O(\log n)$ coins and reading $O(1)$ bits of the debate. From this, they construct a MAX-SSAT instance $\phi$ such that if $x \in L$, all clauses of $\phi$ can be satisfied with probability 1, while if $x \notin L$ there is no way to satisfy more than an $\alpha < 1$ fraction of the clauses of $\phi$. Their construction of $\phi$ builds a constant-size 3CNF for each possible realization of the $O(\log n)$ coin flips, and takes the conjunction of these 3CNFs. Each constant-size 3CNF has variables corresponding to the bits of the debate that $V$ queries for a specific realization of the coin-flips. Hence, to show that $\phi$ only has each random variable appear in $O(1)$ clauses, it suffices to show that each random-bit in the RPCDS constructed is queried for only $O(1)$ realizations of the coin flips. 

To show this, we turn to the construction of the RPCDS used to prove Theorem 2.4. Via Lemma 2.1, the authors first show that it is sufficient to consider RPCDSs where the verifier can read a constant number of \emph{rounds} of Player 1 (and not just a constant number of bits). 

In Lemma 2.3, the authors describe their protocol for a verifier $V$ which can read $O(1)$ rounds of Player 1. Note that the random coins in this protocol are used to select a ``random odd-numbered round $k > 1$" and a ``random bit of round $k-1$ of Player 0." In fact, this is the \emph{only} time that the verifier reads a random bit of Player 0. So, in this construction, each random bit is only queried in $O(1)$ realizations of the coin flips. With Lemma 2.1, the authors transform this RPCDS to one that only reads a constant number of bits. We note that this transformation only impacts the strings that player 1 writes, and does not affect the coin flips or the bits of player $0$ read. 

From this, it holds that the MAX-SSAT instance $\phi$ constructed in Theorem 3.1 has each random variable appear in $O(1)$ clauses. That instance does not yet satisfy the property that random variables only appear non-negated. Condon et al. give a fix for this in the proof of Theorem 3.3; we briefly note that after the modification provided in this proof, it will still hold that random variables appear in $O(1)$ clauses. 
\end{proof}
\section{LP-Match: Additional Observations}\label{sec:LP-Match-obs}

Here we make a few additional observations concerning the usefulness of Constraint \eqref{eqn:lpthird} and LP-Match in general, as well as some natural limits to this LP.

First, we note that LP-Match captures the optimal online algorithm \emph{precisely} for the classic single-item prophet inequality problem. That is, for \ride instances with a single bin $i$, solutions to this LP can be rounded online losslessly.

\begin{observation}
    LP-Match$(\calI)= \opton(\calI)$ for any \ride instance $\calI$ with a single bin $i$. 
\end{observation}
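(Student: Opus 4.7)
The plan is to establish the two matching inequalities. One direction, namely $\text{LP-Match}(\calI) \geq \opton(\calI)$, is already given by \Cref{LP-bound}. So the core task is to exhibit an online algorithm for single-bin instances whose expected gain matches the LP optimum.

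Let $x^*$ be an optimal solution to LP-Match on the single-bin instance $\calI$, and consider the following algorithm: when ball $t$ arrives (which happens with probability $p_t$), if bin $i$ is still unmatched, match $(i,t)$ with probability
\[
q_t \;:=\; \frac{x^*_{i,t}}{p_t \cdot \left(1 - \sum_{t' < t} x^*_{i,t'}\right)},
\]
with the convention $q_t := 0$ if the denominator is zero. The first step is to observe that $q_t \in [0,1]$: nonnegativity is immediate, and the upper bound is exactly Constraint \eqref{eqn:lpthird} of LP-Match applied to the pair $(i,t)$.

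The second step is a short induction on $t$ showing that the probability bin $i$ is unmatched immediately before time $t$ equals $1 - \sum_{t' < t} x^*_{i,t'}$, and hence the probability that $(i,t)$ is matched equals exactly $x^*_{i,t}$. The base case is trivial. For the inductive step, since there is only one bin, the events "$(i,t')$ is matched" for different $t'$ are pairwise disjoint, so the unmatched probability is $1 - \sum_{t' < t} \Pr[(i,t') \text{ matched}] = 1 - \sum_{t' < t} x^*_{i,t'}$ by the inductive hypothesis. Then $\Pr[(i,t) \in \calM] = p_t \cdot (1 - \sum_{t' < t} x^*_{i,t'}) \cdot q_t = x^*_{i,t}$ by the definition of $q_t$, completing the induction.

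By linearity of expectation, the algorithm's expected profit is $\sum_t w_{i,t} \cdot x^*_{i,t} = \text{LP-Match}(\calI)$, so $\opton(\calI) \geq \text{LP-Match}(\calI)$. Combined with \Cref{LP-bound}, this yields equality. There is no real obstacle here: the only subtle point is that the third LP constraint is precisely what makes the acceptance probability $q_t$ a valid probability, which is why Constraint \eqref{eqn:lpthird} plays the role of exactly capturing the online feasibility of the single-bin (prophet inequality) problem.
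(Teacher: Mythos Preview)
Your proof is correct and follows essentially the same approach as the paper: exhibit an online algorithm that matches ball $t$ (if it arrives and the bin is free) with probability $q_t = x^*_{i,t}/\big(p_t(1-\sum_{t'<t}x^*_{i,t'})\big)$, observe that Constraint~\eqref{eqn:lpthird} makes this a valid probability, and prove by induction that each edge $(i,t)$ is matched with probability exactly $x^*_{i,t}$, so the algorithm's expected gain equals the LP value; combined with \Cref{LP-bound} this gives equality. The paper's proof is written in the more general $y_{i,t,r}$ notation but is otherwise identical in structure.
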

\begin{proof}
    Consider the following online algorithm, which starts by computing a solution $\vec{y}$ to LP-Match. Next, upon arrival of ball $t$ with with $w_{i,t}=w_{i,t,r}$ (i.e., $R_t=r$), match $(i,t)$ with probability 
    \begin{align*}
    \frac{y_{i,t,r}}{p_{t,r}\cdot \left(1-\sum_{t'<t}\sum_{r'} y_{i,t',r'}\right)}.
    \end{align*}
    This last quantity is indeed a probability, by Constraint \ref{eqn:lpthird}.
    A simple proof by induction shows that for each $t$ and $r$, we have that $\Pr[(i,t)\in \calM, R_t=r] = y_{i,t,r}$, and consequently $\Pr[F_{i,t}]=1-\sum_{t'<t}\sum_{r'} y_{i,t',r'}$, from which we obtain the inductive step, as
    \begin{align*}
        \Pr[(i,t)\in \calM, R_t=r] & = p_{t,r}\cdot \frac{y_{i,t,r}}{p_{t,r}\cdot \left(1-\sum_{t'<t}{r'} y_{i,t',r'}\right)}\cdot \left(1-\sum_{t'<t}\sum_{r'} y_{i,t',r'}\right) = y_{i,t,r}.
    \end{align*}
    By linearity of expectation, this online algorithm for instance $\calI$ has expected reward precisely
    \begin{align*}
    \sum_{i,t,r} w_{i,t,r}\cdot y_{i,t,r}= \textrm{LP-Match}(\calI).
    \end{align*}
    Consequently, $\opton(\calI)\geq \textrm{LP-Match}(\calI)$. The opposite inequality follows from \Cref{LP-bound}.
\end{proof}

On the other hand, for general \ride instances, there is a limit to the approximation guarantees obtainable using LP-Match.
In particular, simple examples show that there is a gap between the upper bound given by LP-Match and the expected profit of $\opton$, appropriately restricting the approximation guarantees provable using this LP. This is to be expected, given our work in \Cref{sec:lowerbound}. We present a simple example of such a gap instance below.

\begin{observation}
    There exists a $\ride$ instance $\calI$ with $w_{i,t}\in \{0,1\}$ for all $(i,t)\in E$ for which  LP-Match$(\calI)\geq \nicefrac{8}{7}\cdot \opton(\calI)$. 
\end{observation}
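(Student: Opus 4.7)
The plan is to exhibit a small instance $\calI$ attaining the claimed gap and verify the two sides directly. Take $2$ bins and $3$ balls: ball $1$ arrives with probability $\nicefrac{1}{2}$ with a single weight-$1$ edge to bin $1$; ball $2$ arrives with probability $\nicefrac{1}{2}$ with a single weight-$1$ edge to bin $2$; ball $3$ arrives with probability $1$ with weight-$1$ edges to \emph{both} bins.

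First I would show $\textrm{LP-Match}(\calI)=2$. The upper bound is immediate: summing constraint \eqref{eqn:lpatmost1} over the two bins gives $\sum_{i,t} x_{i,t}\le 2$, and since all weights lie in $\{0,1\}$ the objective is at most this sum. For the matching lower bound I would exhibit $x_{1,1}=x_{2,2}=x_{1,3}=x_{2,3}=\nicefrac{1}{2}$ (all non-edge variables set to $0$) and check feasibility. The only checks that are not syntactically trivial are constraint \eqref{eqn:lpthird} for the edges incident to ball~$3$, which reads $\nicefrac{1}{2} \le 1\cdot(1-\nicefrac{1}{2})$ and holds with equality.

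Second, I would argue $\opton(\calI)=\nicefrac{7}{4}$ by writing the expected profit of an arbitrary online algorithm in a clean closed form. Let $B_i$ be the indicator that bin~$i$ is matched to ball~$i$, for $i\in\{1,2\}$. Since ball~$3$ arrives deterministically and neighbors both bins, an optimal algorithm matches it whenever at least one bin is free, which contributes $1-\Pr[B_1=1,B_2=1]$. Hence the expected profit of any online algorithm equals
\[
\Pr[B_1=1] + \Pr[B_2=1] + 1 - \Pr[B_1=1,\,B_2=1] \;=\; \Pr[B_1=1 \text{ or } B_2=1] + 1.
\]
Now $\{B_i=1\}$ is contained in the event that ball~$i$ arrives; these two arrival events are independent and each have probability $\nicefrac{1}{2}$, so the union event has probability at most $\nicefrac{3}{4}$. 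Thus $\opton(\calI)\le \nicefrac{7}{4}$, with equality attained by the obvious greedy algorithm (accept ball~$1$ and ball~$2$ whenever they arrive, then route ball~$3$ to any free bin).

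Combining the two, $\textrm{LP-Match}(\calI)/\opton(\calI)=2/(\nicefrac{7}{4})=\nicefrac{8}{7}$, as claimed. The ``hard'' step here is really the combinatorial one of finding the right instance: the LP gets to reserve half of each bin for both a risky single-bin ball and the guaranteed ball~$3$, whereas an online algorithm that accepts the single-bin balls wastes ball~$3$'s match with probability $\nicefrac{1}{4}$ (when both single-bin balls arrive), and an algorithm that rejects them simply loses value up front.
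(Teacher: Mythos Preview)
Your proposal is correct and uses exactly the same instance as the paper. Your presentation is in fact slightly more thorough: you also verify the LP upper bound via summing constraint \eqref{eqn:lpatmost1}, and you prove $\opton(\calI)=\nicefrac{7}{4}$ with equality (via the inclusion--exclusion identity), whereas the paper only argues $\opton(\calI)\le \nicefrac{7}{4}$ by noting that with probability $\nicefrac{1}{4}$ neither of the first two balls contributes and the remaining cases yield at most $2$.
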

\begin{proof}
    We consider an instance $\calI$ with three balls and two bins. For $k=1,2$, ball $t=k$ has with probability $p_{k,0}=1/2$ edge weights $w_{i,t}=0$ for all $i$. With the remaining probability $p_{k,1}=1/2$, its edges have weights $w_{k,k}=1$ and $w_{k,3-k}=0$.
    The last ball has weights $w_{3,k}=1$ for all bins $k=1,2$ with probability one.
    An optimal solution to LP-Match on this Instance $\calI$ assigns $y_{k,k,1}=1/2$ for $k=1,2$, and $y_{3,k,1}=1/2$ for $k=1,2$, achieving an objective value of $\sum_{i,t,r} y_{i,t,r} = 2$.
    However, with probability $\nicefrac{1}{4}$, both of the first two balls have all their edge weights zero, and so an online algorithm can at most achieve an expected value of $7/4$. That is, $\opton(\calI)\leq \nicefrac{7}{8}\cdot \textrm{LP-Match}(\calI)$.
\end{proof}

\section{Unweighted Hardness}
We briefly make the observation that our previous hardness proof also gives a hardness result for \ride instances where all arriving passengers have weight 1. Given the similarity to our previous argument, we only detail the changes that must be made. 

\begin{observation} 
It is PSPACE-hard to approximate the optimal online \ride algorithm within a factor $1-o(1)$, even for \ride instances with binary weights.
\end{observation}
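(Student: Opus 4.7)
The plan is to modify the reduction of \Cref{sec:pspace-reduction} minimally to produce an instance with only unit weights. Define $\calI'_\phi$ identically to $\calI_\phi$, except that each clause ball is assigned weight $1$ in place of $\frac{m^4}{2k}$; the literal balls, the clause-ball arrival probabilities ($m^{-4}$), and the graph structure are unchanged. All edge weights are then in $\{0,1\}$, as required.

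Next, I would verify the analogs of \Cref{match-literal-balls} and \Cref{value-clause-balls} for $\calI'_\phi$. The analog of \Cref{match-literal-balls} is actually easier: the loss from not matching an arriving literal ball is still $1$, whereas the potential gain from keeping the corresponding bin free for subsequent clause balls is now only $k \cdot 1 \cdot m^{-4} = O(m^{-4}) \ll 1$. Hence $\opton$ matches every arriving literal ball, contributing $0.75n$ in expectation. For the analog of \Cref{value-clause-balls}, the same correspondence between MAX-SSAT algorithms and algorithms on $\calI'_\phi$ applies, so conditional on exactly one clause ball arriving the expected clause-ball profit is $\opton(\phi)/m$ (rather than $\opton(\phi)/m \cdot m^4/(2k)$ as before); thus $\EE[B \mid C{=}1]\cdot\Pr[C{=}1] = \tfrac{(1-m^{-4})^{m-1}}{m^4}\opton(\phi)$. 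The error term becomes $\delta \le \EE[B \mid C\ge 2]\cdot \Pr[C\ge 2] \le m \cdot 2m^{-6} = O(m^{-5})$, using the trivial bound $B \le m$ available under unit weights (at most $m$ clause balls, each of weight $1$). In total, $\opton(\calI'_\phi) = 0.75n + \gamma\cdot \opton(\phi) + \delta$ with $\gamma := \tfrac{(1-m^{-4})^{m-1}}{m^4}$ and $\delta = O(m^{-5})$.

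To conclude, I would apply \Cref{lem:approxlem} with $Q := \gamma\cdot \opton(\phi) + \delta$ and $Q' := 0.75n$. Using $\opton(\phi) \ge m/2$ and $n \le 6m$ gives $\gamma\opton(\phi) = \Theta(m^{-3})$ and $Q'/Q \le \beta$ with $\beta = \Theta(m^4)$, so $\tfrac{\alpha+\beta}{1+\beta} = 1 - \Theta(m^{-4}) = 1 - o(1)$. Hence a $(1-o(1))$-approximation of $\opton(\calI'_\phi)$ yields---after subtracting the known $Q'$ and rescaling by $1/\gamma$, with the $\delta$ term contributing only a $1 \pm o(1)$ multiplicative correction since $\delta/(\gamma\opton(\phi)) = O(m^{-2})$---an $\alpha(1-o(1))$-approximation to $\opton(\phi)$, which is PSPACE-hard by \Cref{lem:ssatm}. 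The reason the hardness factor degrades from the constant $\alpha$ of \Cref{thm:ridehailpspacehard} to only $1-o(1)$ is that the ``SAT-dependent'' term $\gamma\opton(\phi) = \Theta(m^{-3})$ is polynomially dominated by the ``deterministic'' literal-ball term $0.75n = \Theta(m)$, forcing $\beta = \Theta(m^4)$; the main technical point is ensuring $\delta$ stays negligible compared to $\gamma\opton(\phi)$, which follows by combining the existing $\Pr[C \ge 2] \le O(m^{-6})$ bound with the unit-weight bound $B \le m$---without both, the error would swamp the hardness signal.
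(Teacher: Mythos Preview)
Your proposal is correct and follows essentially the same route as the paper: you set all clause-ball weights to $1$, reprove the analogs of \Cref{match-literal-balls} and \Cref{value-clause-balls} (with $\gamma=\tfrac{(1-m^{-4})^{m-1}}{m^4}$ and $\delta=O(m^{-5})$), and then apply \Cref{lem:approxlem} with $\beta=\Theta(m^4)$ to get a $1-\Theta(m^{-4})=1-o(1)$ hardness factor. The paper does the same thing, only phrasing $\beta$ more loosely as $O(\mathrm{poly}(n,m))$ and absorbing the $\delta$-correction into an ``$o(1)$'' without your more explicit accounting.
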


\begin{proof}

We will simply take the construction from \Cref{sec:pspace-reduction} and make all arriving balls have weight 1. In particular, for an SSAT instance $\phi$ as in \Cref{lem:ssatm}, we define the unweighted \ride instance $\mathcal{I}_{\phi}$ as follows in \Cref{fig:binreductionappendix}.

\begin{figure}[h] 
    \centering
    \includegraphics{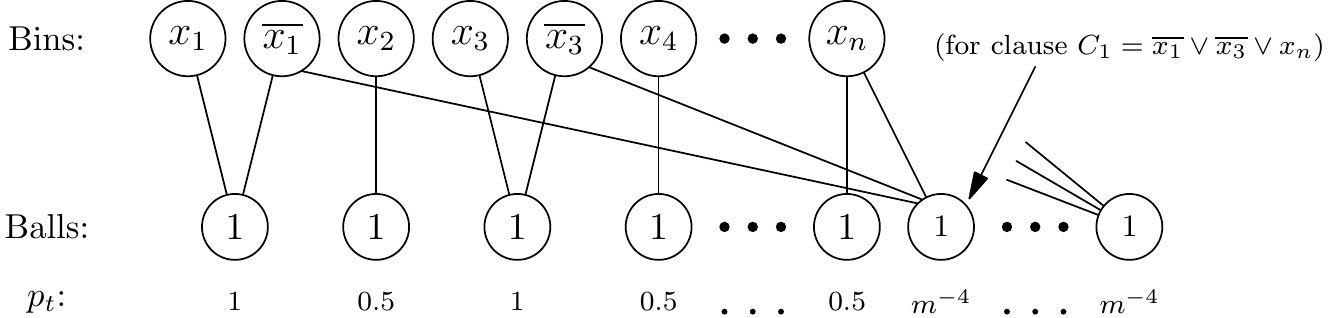}
    \caption{The unweighted $\ride$ instance $\mathcal{I}_{\phi}$}
    \label{fig:binreductionappendix}
    \vspace{0.2cm}
     \footnotesize{Bins are labeled by their corresponding literal, while balls are labeled by their weight.}
\end{figure}

Analogously to \Cref{match-literal-balls}, we can clearly see that $\opton$ matches all arriving literal balls of $\mathcal{I}_{\phi}$, and hence gets an expected profit of at least $0.75n$. Analogously to \Cref{value-clause-balls}, breaking into cases based on the number of arrived balls demonstrates that the expected profit $\opton$ will get from the clause balls is at most $$\opton(\phi) \cdot m^{-4} \cdot (1 - m^{-4})^{m-1} + 2m^{-5} = m^{-4} \left( \opton(\phi) \cdot (1 - m^{-4})^{m-1} + o(1) \right).$$ In summary, the profit of $\opton$ on the instance $\mathcal{I}_{\phi}$ is $$0.75n + m^{-4} \left( \opton(\phi) \cdot (1 - m^{-4})^{m-1} + \delta \right) $$ for some $\delta = o(1)$. 

Apply \Cref{lem:approxlem} with $Q' = 0.75n$ and $Q = m^{-4}(\opton(\phi) \cdot (1-m^{-4})^{m-1} + \delta)$. Note $Q'/Q \le \beta$ for $\beta = O(\text{poly}(n,m))$. Hence an $\left( \frac{\alpha + \beta}{1+ \beta} \right)$-approximation to $Q + Q'$ yields an $\alpha$-approximation to $Q$. Take $\alpha$ to be a sufficiently large constant less than $1$ such that it is PSPACE-hard to obtain an $\alpha$-approximation to $Q$. As $$\frac{\alpha + \beta}{1+ \beta} = 1 - \frac{1-\alpha}{1+\beta}=  1 - \frac{1}{O(\text{poly}(n,m))},$$ it holds it is PSPACE-hard to obtain an approximation to unweighted $\ride$ instances within a factor of $1 - \frac{1}{O(\text{poly}(n,m))}$. 
\end{proof}

\bibliographystyle{plainnat}
\bibliography{abb,a}

\end{document}